\documentclass{article}

\usepackage[T1]{fontenc}
\usepackage[latin9, utf8]{inputenc}
\usepackage[a4paper]{geometry}
\usepackage{url}
\usepackage[dvipsnames]{xcolor}

\usepackage{mathrsfs, amsthm, amsmath, amssymb, mathtools, bbm, physics} %Math Packages
\usepackage{float}

\usepackage{graphicx} % Required for inserting images
\usepackage[thicklines]{cancel}
\usepackage{tikz-cd}
\usetikzlibrary{decorations.pathmorphing}
\usepackage{setspace}
\usepackage{slashed}

\numberwithin{equation}{section}
\numberwithin{figure}{section}
\theoremstyle{plain}
\newtheorem{thm}{\protect\theoremname}[section]
\theoremstyle{plain}
\newtheorem{prop}[thm]{\protect\propositionname}
\newtheorem*{star-prop}{Proposition}

\makeatother
  \providecommand{\propositionname}{Proposition}

\usepackage[unicode=true,pdfusetitle, bookmarks=true,bookmarksnumbered=false,bookmarksopen=false, breaklinks=true,pdfborder={0 0 0},backref=false,colorlinks=true]{hyperref}
\usepackage{color, soul}
\usepackage{url}
\usepackage{fancyhdr}

\definecolor{halfgray}{gray}{0.55}
\definecolor{webgreen}{rgb}{0,.5,0}
\definecolor{webbrown}{rgb}{.6,0,0}
\definecolor{Maroon}{cmyk}{0, 0.87, 0.68, 0.32}
\definecolor{RoyalBlue}{cmyk}{1, 0.50, 0, 0}
\definecolor{Black}{cmyk}{0, 0, 0, 0}

\hypersetup{
    urlcolor=Lavender, linkcolor=RoyalBlue, citecolor=Green, colorlinks=true, linkcolor=blue
}

\usepackage{authblk}

\newcommand{\maps}{\rightarrow}

\def\1{I}

\def\AA{\mathbf{A}}
\def\BB{\mathbf{B}}

\def\JJ{\mathbf{J}}
\def\LL{\mathbf{L}}

\def\qq{\mathbf{q}}
\def\hh{\mathbf{h}}

\def\EEL{\mathbf{EL}}
\def\TTh{\mathbf{\Theta}}
\def\tth{\boldsymbol{\theta}}
\def\jj{\mathbf{j}}
\def\vv{\mathbbm{v}}
\def\OOmega{\boldsymbol{\Omega}}
\def\oomega{\boldsymbol{\omega}}
\def\ssigma{\boldsymbol{\sigma}}
\def\sssigma{\boldsymbol{\varsigma}}
\def\eell{\boldsymbol{\ell}}
\def\ttau{\mathfrak{t}}
\def\nn{\mathfrak{n}}
\def\uu{\mathfrak{u}}
\def\ss{\mathfrak{s}}
\def\aa{\mathfrak{a}}

\def\B{\mathcal{B}}
\def\C{\mathcal{C}}

\def\E{\mathcal{E}}
\def\F{\mathcal{F}}

\def\H{\mathcal{H}}

\def\L{\mathcal{L}}
\def\M{\mathcal{M}}
\def\N{\mathcal{N}}

\def\P{\mathcal{P}}
\def\Q{\mathcal{Q}}
\def\R{\mathcal{R}}
\def\S{\mathcal{S}}
\def\T{\mathcal{T}}

\def\V{\mathcal{V}}

\def\dv{\mathbbm{d}}
\def\dve{\widetilde{\mathbbm{d}}}

\def\dvbar{\slashed{\dv}}

\definecolor{mygreen}{RGB}{26, 122, 51}

\makeindex
\begin{document}
%--------------------------------------------------------------------

\title{A Covariant Phase Space Approach to Einstein-\AE{}ther Gravity}

\author[1,2,3]{Walter Arata\thanks{\href{mailto:warata@sissa.it}{warata@sissa.it}}}
\author[1,2,3]{Stefano Liberati\thanks{\href{mailto:liberati@sissa.it}{liberati@sissa.it}}}
\author[1,2,3]{Giulio Neri\thanks{\href{mailto:gneri@sissa.it}{gneri@sissa.it}}}

\affil[1]{SISSA--International School for Advanced Studies, via Bonomea 265, 34136 Trieste, Italy}
\affil[2]{INFN, Sezione di Trieste, via Valerio 2, 34127 Trieste, Italy}
\affil[3]{IFPU--Institute for Fundamental Physics of the Universe, via Beirut 2, 34014 Trieste, Italy}

\maketitle

%Begin Frontmatter ------------------------------------------------

\abstract{
Black hole thermodynamics in Lorentz-violating gravity is subtle because different excitations propagate at different speeds and hence identify different causal horizons. We revisit Einstein--\AE{ther} gravity using the covariant phase space formalism with boundaries and derive a consistent first law for stationary black holes. For a mode of propagation speed $c_s$, we introduce a disformal frame in which the corresponding causal horizon is a Killing horizon, so that the standard Wald-type derivation can be carried out. The result is then mapped back to the original frame, where it mantains the same structure. The associated horizon charge contains, besides the usual Komar term, an irreducible entropic \AE{ther} contribution that can be interpreted as heat due to the \AE{ther} flux across the horizon; accordingly, the total entropy splits into a gravitational part and an \AE{ther} part. We further develop an extended-thermodynamics framework in which the couplings of the theory are allowed to vary, obtaining generalized Smarr relations. Finally, we analyze the probe-mode limit $c_s \to +\infty$, clarifying its connection to universal-horizon thermodynamics and resolving the apparent tension in the literature between approaches that (i) fix the entropy to be proportional to the area and infer a corresponding temperature, and (ii) impose the Hawking temperature associated with modes peeling from the universal horizon and infer the entropy. Once the independent \AE{ther} contribution is properly taken into account, the two prescriptions are reconciled.}

\clearpage

\tableofcontents{}

%End Frontmatter ----------------------------------------------------

\newpage

%Begin Mainmatter ---------------------------------------------------
\section{Introduction}

As of today, black hole thermodynamics remains one of the sharpest probes of the interplay between gravity, quantum physics and statistical mechanics. In General Relativity, the laws of black hole mechanics can be established as classical theorems, yet their physical consistency ultimately relies on quantum effects such as Hawking radiation; conversely, gravitational dynamics itself can be recast in thermodynamic language. It is therefore natural to ask how robust this correspondence is: does it persist in generic diffeomorphism-invariant theories, or does it rely on additional structural properties that are special to General Relativity? With this in mind, Lorentz-violating theories represent an especially informative case study, as they modify the causal structure of spacetime in a controlled but conceptually drastic way. 

Among Lorentz-violating theories of gravity, Einstein--\AE{ther} theory~\cite{Jacobson:2000xp} and Ho\v{r}ava--Lifshitz gravity~\cite{Horava:2009uw} play a central role. Einstein--\AE{ther} theory is generally-covariant and couples the metric to a dynamical time-like unit vector field, the \textit{\AE{ther}}, which breaks local Lorentz invariance by selecting a preferred local time direction. Ho\v{r}ava--Lifshitz gravity, on the other hand, has a manifestly Lorentz-breaking formulation where space and time are treated on different grounds. The latter, in particular, is given by the value of a scalar field called the \textit{Khronon}.

A distinctive consequence is that the causal structure is no longer governed by a single metric light cone. Even in vacuum, linearized perturbations split into multiple propagating modes: the usual massless spin-2 excitation associated with the metric, plus a spin-1 and a spin-0 excitation associated with the \AE{ther},\footnote{The spin-1 excitation is absent in Ho\v{r}ava--Lifshitz gravity.} each characterized by its own speed of propagation relative to the \AE{ther} frame. Therefore, black hole spacetimes generically exhibit multiple distinct causal horizons, and the relevant notion of “no escape” depends on which excitation is used to probe the geometry.

This multi-horizon structure immediately complicates the thermodynamic interpretation, possibly leading to inconsistencies without resorting to the ultraviolet completion of the theory~\cite{Dubovsky:2006vk,Eling:2007qd,Jacobson:2010fat}. For example, it becomes unclear which horizon should set the temperature, which area should be associated with an entropy (assuming a Bekenstein-like formula holds), and the peeling of which modes should provide the physical notion of surface gravity. The standard Wald construction cannot be applied naively in these cases as the metric Killing horizon is not, in general, the horizon relevant for all dynamical degrees of freedom. 
In addition, Lorentz-violating theories admit an extended notion of causal boundary, the \emph{universal horizon}~\cite{Blas:2011ni}, which can trap arbitrarily fast signals in spacetimes foliated by constant-Khronon hypersurfaces. While universal horizons are geometrically well defined, their thermodynamic role is not transparent in the vacuum sector of the theory where all modes propagate at finite speed (and hence cannot connect the universal horizon to infinity), but it becomes compelling once modified dispersion relations are allowed and superluminal, non-relativistic propagation is possible (see~\cite{DelPorro:2023lbv} and references therein).  

These issues have been confronted before using covariant phase space methods in Einstein--\AE{ther} gravity, but early analyses encountered obstructions, especially when attempting to move beyond the simplest asymptotics. In particular, while consistent first laws were obtained in asymptotically flat settings with universal horizons, extending the construction to more general asymptotics (notably AdS) proved problematic. 
Motivated by these open problems, we employ the more recent formalism of covariant phase spaces in spacetimes \emph{with boundaries} to derive the first law of black hole thermodynamics in a framework that remains well-defined for generic asymptotics. 

Our strategy in this work is to isolate the thermodynamics associated with a specific --- yet arbitrary --- mode of propagating speed $c_s$ (henceforth, the $s$-mode). Using a well-known trick in the Einstein--\AE{ther} literature, we perform a disformal transformation on the metric to make the causal horizon of the $s$-mode coincide with the Killing horizon of the disformal metric. In this frame, the surface gravity computed from the Killing generator matches the peeling surface gravity of that mode by construction (see~\cite{Cropp:2013zxi} for definitions and details), and the usual construction à-la Wald can be carried out in a controlled setting. Once we obtain the first law associated with the $s$-mode in the disformal frame, we map it back to the original frame and show that it retains the same formal structure. In particular, a new term appears that is absent in General Relativity, reflecting the \AE{ther} flow contribution to the Killing current across the horizon. This flux can be interpreted as heat and therefore, via the Clausius relation $\dvbar Q_{\textnormal{\AE}} = T \, \dv S_{\textnormal{\AE}}$, is associated with an entropy term, where the temperature is determined by the peeling rate of the $s$-mode at the horizon.

The second goal of this work is to clarify the status of universal horizon thermodynamics and the origin of apparently conflicting prescriptions in the literature. Two complementary viewpoints have been advocated: one fixes an area law for the universal horizon entropy and infers the associated temperature~\cite{Bhattacharyya:2014kta}, while the other imposes the universal horizon temperature --- suggested by ray-tracing arguments for infinitely fast excitations~\cite{DelPorro:2023lbv} --- and infers the form of the entropy~\cite{Pacilio:2017swi}. In the first case one finds a universal temperature which is not apparently related to any form of Hawking radiation; in the second case one obtains an entropy which does not always follow an area law. 

We connect these two frameworks by studying the formal infinite speed probe-mode limit, $c_s \to +\infty$. In this limit, the peeling surface gravity of the $s$-mode continuously approaches the surface gravity of the universal horizon. This procedure allows us to recover a consistent first law --- and thereby a Smarr formula --- at the universal horizon. Within the covariant phase space formalism, we show that the key ingredient missing in previous derivations is exactly the \AE{ther} Killing flux contribution. In situations where the problem is characterized by a single scale (such as the Schwarzschild geometry), it was possible to conflate this \AE{ther} term with the usual black hole entropy contribution by a simple rescaling of the Hawking temperature or of the $S/A$ proportionality factor. In the presence of additional scales (such as a cosmological constant), this flux term becomes genuinely independent and cannot in general be reabsorbed.

In the following subsection we shall fix our notation and conventions. In Section~\ref{sec: CPS Formalism} and~\ref{sec: First Law BH}, we recall the essential elements of the covariant phase space formalism, with particular emphasis on boundary terms, and the general derivation of the first law of black hole thermodynamics within the framework.
The reader who is already familiar with the formalism can skip these review sections.
In Section~\ref{sec: EA Theory} we first review Einstein--\AE{ther} theory and summarize its relation to Khronometric theory. Then, in Section~\ref{sec: CPS for EA}, we compute the theory-specific ingredients that are needed for its covariant phase space analysis, including its symplectic form and Noether charges. In Section~\ref{sec: Examples} we illustrate the general construction on four explicit solutions, including examples previously discussed in~\cite{Bhattacharyya:2014kta}. Finally, we shall draw our conclusions in Section~\ref{sec:conclus}.

\subsection{Notation and Conventions}\label{sec: conventions}

Let us start by introducing some notation and conventions we shall adopt in this work.
In most of our computations we make use of the notation associated with the variational bicomplex framework, introduced by Anderson in~\cite{And92}. 
This is a convenient and systematic framework to study field theories, symmetries and conservation laws. Any field theory is defined on a suitable (either vector or principal) bundle $F$ over a manifold $\M$ (the spacetime manifold, for us). Fields are sections of this bundle $F$ and the space of all such sections constitutes the configuration space. The key idea of the variational bicomplex is to work not only with the fields, but also with all of their derivatives, treated as independent variables. This naturally leads to the notion of the \emph{jet bundle} of the bundle $F$, an enlarged space in which every possible derivative of the fields appears as a coordinate. On this space, one can define a differential--geometric structure that organizes the calculus of variations.

The bicomplex is built from differential forms on the jet bundle, equipped with two commuting exterior derivatives:
\begin{itemize}
  \item the \emph{horizontal exterior derivative} $d$, which corresponds to the exterior derivative acting along the base manifold $\M$ of the jet bundle,
  \item the \emph{vertical exterior derivative} $\dv$, which corresponds to variations of the fields and acts along the fibers,
\end{itemize}
Both exterior derivatives are nilpotent, i.e.~$d^2 = 0$ and $\dv^2 = 0$.
It is worth mentioning that the horizontal and vertical derivatives here are taken to commute $\dv \circ d = d \circ \dv$, in contrast to the mathematicians' convention where they anti-commute $\dv \circ d = - d \circ \dv$.

Since the full exterior derivative on the jet bundle can be decomposed into horizontal and vertical exterior derivative,~\footnote{More explicitly, $d_{\mathrm{full}} = d + (-1)^{r}\dv$, where $r$ here is the horizontal degree of the differential form this operator is acting on.} the same is true for any differential form on the jet bundle.\footnote{The name variational \emph{bicomplex} comes exactly from this: the de Rham complex on the jet bundle can be separated into two orthogonal complexes, one associated with the horizontal exterior derivative $d$ and one to the vertical exterior derivative $\dv$.} Hence we will say that a differential form has a \emph{bi-degree} $(p, q)$ where $p$ is the usual form degree on the exterior cotangent bundle $\bigwedge T^\ast\M$, and $q$ is the vertical degree, morally how many variations in field space we have taken to construct that form. Vertical forms are sections of the exterior cotangent bundle of the configuration space.

On the configuration space, we also define the tangent bundle and denote its vector fields by $\vv$ (we will use subscripts to emphasize the type of flow they induce). We will denote differential forms (both on field space and on the manifold $\M$) with boldface symbols, while the dual vector fields (when they can be defined) will have the same symbol but in normal text. The interior product between a vector $V$ and a $p$-form $\boldsymbol{\eta}$ in $\M$ is denoted by $\iota_V \boldsymbol{\eta}$ and is defined as the $(p - 1)$-form $(\iota_V \boldsymbol{\eta})(X_1, \dots X_{p-1}) = \boldsymbol{\eta}(V,X_1,\dots,X_{p - 1})$. Similarly, we denote the interior product of a vertical vector $\vv$ with a vertical $q$-form $\boldsymbol{\alpha}$ by $\mathbb{I}_\vv \boldsymbol{\alpha}$, which is defined as the $(q - 1)$-form such that $(\mathbb{I}_\vv \boldsymbol{\alpha})(\vv_1, \dots \vv_{q - 1}) = \boldsymbol{\alpha}(\vv,\vv_1,\dots,\vv_{q - 1})$. As for standard functions, which cannot be contracted with a vector ($\iota_V f = 0, \forall f \in C^\infty(\M)$), the contraction of a vertical vector $\vv$ with a $(p, 0)$-form gives $0$, as a $(p, 0)$-form does not have arguments to insert $\vv$ in.
With this in mind, we hence denote by $\mathbb{L}_\vv$ the Lie derivative in configuration space along the vector field $\vv$, while the usual Lie derivative on the manifold along the vector field $\xi$ will be denoted as $\pounds_\xi$. Cartan's Magic Formula holds for both definitions of Lie derivatives, meaning $\mathbb{L}_\vv \boldsymbol{\eta} = \mathbb{I}_\vv (\dv \boldsymbol{\eta}) + \dv (\mathbb{I}_\vv \boldsymbol{\eta})$ and $\pounds_\xi \boldsymbol{\eta} = \iota_\xi (d \boldsymbol{\eta}) + d (\iota_\xi \boldsymbol{\eta})$ for any $(p, q)$-form $\boldsymbol{\eta}$ on the variational bicomplex.

In this variational bicomplex notation, a Lagrangian density $\mathcal{L} \, \boldsymbol{\epsilon}_\M$ is a horizontal $n$-form $\LL$. Its vertical exterior derivative $\dv \LL$ produces the Euler--Lagrange $(n, 1)$-form, whose vanishing encodes the equations of motion. We denote equalities that hold \textit{on-shell} of these equations with the symbol $\doteq$. More precisely, this means that the expression on the left-hand side of $\doteq$ is to be understood as pulled-back along the embedding of the solution space $\mathrm{Sol}$ into the configuration space $\C$, $i_{\mathrm{Sol}} : \mathrm{Sol} \hookrightarrow \C$. Both spaces will be explicitly defined in Subsection~\ref{subsec: field spaces}.

Conservation laws are identified with elements of the horizontal cohomology of the bicomplex: for instance, a conserved vector current $J^\mu$ corresponds to a horizontal $(n - 1)$-form $\JJ$ such that $d \JJ \doteq 0$.

Throughout this paper, tensors with symmetrized indices $T^{(\mu_1 \dots \mu_N)}$ (resp. antisymmetrized $T^{[\mu_1 \dots \mu_N]}$) are defined with the $1/N!$ factor.
Finally, greek indices always run from $0$ to $n$, while latin indices run from $1$ to $n$.

\section{Covariant Phase Space Formalism}\label{sec: CPS Formalism}

In this section, we review the covariant phase space formalism, developed by several authors between the 1980s and 1990s. This approach provides a coordinate-independent framework to study symmetries, conserved quantities, and canonical transformations. In particular, the covariant phase space formalism allows for a very natural description of diffeomorphism charges, making it especially well suited for the study of black hole thermodynamics in generally covariant theories such as Einstein--\AE{ther} gravity.

In reviewing the formalism, we will focus on gravitational field theories. Also, we will point out the key ingredients that will allow us to derive a first law for black hole solutions as prescribed in~\cite{Wald:1993nt}, albeit the possibilities of this formalism are more far-reaching --- see~\cite{Harlow:2019yfa, Cattaneo:2023wxd, Margalef-Bentabol:2020teu} for a broader and more detailed discussion of the topic.

As previously mentioned, we take spacetime to be a $n$-dimensional, simply connected and oriented manifold $\M$ endowed with a metric $g$. We also assume that this manifold can be foliated by suitably chosen hypersurfaces $\Sigma$ such that, topologically, $\M \cong [t_1, t_2] \times \Sigma$. As illustrated in Fig.~\ref{fig: spacetime}, the boundary of the spacetime is $\partial \M = \Sigma_1 \cup \Sigma_2 \cup \Gamma$, where $\Sigma_i$ are the top and bottom lids at time $t_i$ and $\Gamma = \bigcup\limits_{t \in [t_1,t_2]} \partial \Sigma_t$ is the spatial boundary. 

The intersection between the spatial boundary $\Gamma$ and the two lids form two non-smooth corners $\partial \Sigma_i := \Gamma \cap \Sigma_i$. For any piece of the boundary of $\M$ and the corners, we have an embedding $i_\N: \N \maps \M$ for $\N \in {\Gamma, \Sigma_i, \partial\Sigma_i}$.
Since the spacetime is oriented, we choose the volume element of the manifold as $\boldsymbol{\epsilon}_\M := \sqrt{|\det g|} \, dx^0 \wedge \dots \wedge dx^{n - 1}$.

If we call $\ttau$ the time-like, normalized, future-pointing 1-form that is everywhere orthogonal to the slices $\Sigma$ and $\nn$ the space-like, normalized, radially pointing 1-form that is orthogonal to $\Gamma$, we can also define the volume element induced on each part of the boundary as
\begin{equation}
    \begin{cases}
        \boldsymbol{\epsilon}_\M \eval_\Sigma =: \ttau \wedge \boldsymbol{\epsilon}_\Sigma \quad \Rightarrow \quad \boldsymbol{\epsilon}_\Sigma \eval_\Gamma = \nn \wedge \boldsymbol{\epsilon}_{\partial\Sigma} \,;\\
        \boldsymbol{\epsilon}_\M \eval_\Gamma =: \nn \wedge \boldsymbol{\epsilon}_\Gamma \quad \Rightarrow \quad \boldsymbol{\epsilon}_\Gamma \eval_\Sigma = -\ttau \wedge \boldsymbol{\epsilon}_{\partial\Sigma} \,,
    \end{cases}
\end{equation}
where in polar coordinates $\boldsymbol{\epsilon}_{\partial\Sigma}$ will always be $\boldsymbol{\epsilon}_{\partial\Sigma} = r^2 \, \sin^2\theta \, d \theta \wedge d \varphi$. All these elements are depicted in Fig.~\ref{fig: spacetime}.
\begin{figure}[H]
    \centering
    \begin{tikzpicture}[scale=0.6]

        %Top lid
        \draw (0,0) arc(0:180:4cm and 0.5cm);
        \draw (-8,0) arc(180:360:4cm and 0.5cm);

        %Gamma
        \draw (0,0) .. controls (-1,-6) .. (0,-12);
        \draw (-8,0) .. controls (-7,-6) .. (-8,-12);

        %Bottom lid
         \draw[dashed] (0,-12) arc(0:180:4cm and 0.5cm);
         \draw (-8,-12) arc(180:360:4cm and 0.5cm);

         \draw[dashed] (-0.75,-6) arc(0:180:3.25cm and 0.5cm);
         \draw (-7.25,-6) arc(180:360:3.25cm and 0.5cm);
        
        \draw (0.5, -2) node {{$\Gamma$}};
        \draw (-4, -12) node {{$\Sigma_1$}};
        \draw (-4, 0) node {{$\Sigma_2$}};
        \draw (-4, -6) node {{$\Sigma$}};
        \draw (0.7, 0) node {{$\partial\Sigma_2$}};
        \draw (0.7,-12) node {{$\partial\Sigma_1$}};
        
        \draw[->] (-8.5,-12) -- (-8.5,0);
        \draw (-9, 0) node {{$t$}};

        \draw[->] (-4,0.3) -- (-4,0.8) node [anchor = east] {{$\ttau$}};
        \draw[->] (-4,-11.7) -- (-4,-11.2) node [anchor = east] {{$\ttau$}};
        \draw[->] (-4,-5.7) -- (-4,-5.2) node [anchor = east] {{$\ttau$}};
        \draw[->] (-0.75,-6) -- (-0.25,-6) node [anchor = west] {{$\nn$}};

    \end{tikzpicture}
    \caption{Representation of the cylindrical spacetime under consideration, with its different parts: the lateral boundary $\Gamma$, the two lids $\Sigma_i$, the two corners $\partial\Sigma_i$ and a generic slice $\Sigma$.}
    \label{fig: spacetime}
\end{figure}
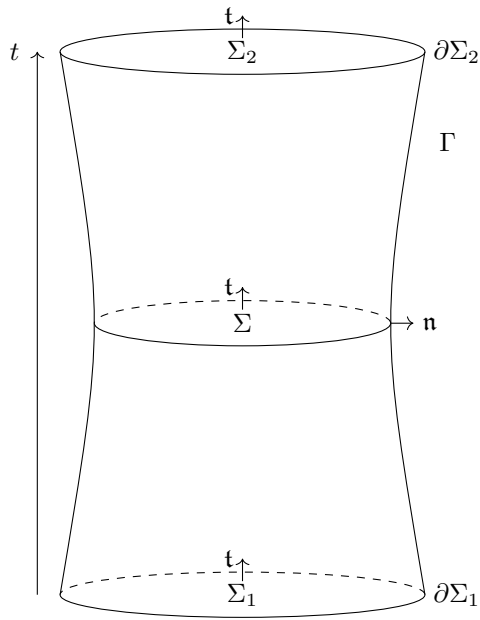

\subsection{Field Space}\label{subsec: field spaces}

Let us now consider a field theory on top of this spacetime, whose field content will be collectively called $\phi$. Suppose also that this system is described by an action functional
\begin{equation}
    \S[\phi] = \int_\M \LL + \int_\Gamma \eell \,,
\end{equation}
where $\LL$ is the \emph{bulk Lagrangian} and $\eell$ is the \emph{boundary Lagrangian}. The boundary Lagrangian $\eell$ has support only on $\Gamma$, and it will be useful to define proper boundary conditions there. Instead, the two lids are viewed as hypersurfaces where to define the initial and final state taken into account and they are \emph{not} used to choose the theory itself. Indeed, different boundary conditions defined on $\Gamma$ lead to different theories, while different conditions on the two lids just correspond to different states one can consider.

In order to construct a proper phase space~\footnote{A \emph{phase space} is a manifold $\F$ endowed with a \emph{symplectic form} $\OOmega$, the latter being a \textit{closed} (i.e.~the exterior derivative of $\OOmega$ is $0$), \textit{non-degenerate} (i.e.~$\OOmega(\vv_1, \vv_2) = 0 \, \forall \, \vv_2 \in T\F \Rightarrow \vv_1 = 0$) 2-form on $\F$.} we can start by defining the \emph{configuration space}, meaning the space of all the possible field configurations, $\C := \{\mathrm{all \ possible \ field \ configurations \ \phi} \}$. As mentioned before, this  configuration space contains all sections of some fiber bundle $F$ over $\M$.

Since the manifold has a boundary, we might want to restrict our configuration space to only those configurations that satisfy some boundary conditions $\BB$ at $\Gamma$. For example, we might impose Dirichlet ($\dv \phi |_\Gamma = 0$), Neumann ($n^\mu \, \nabla_\mu \dv \phi |_\Gamma = 0$), or mixed/Robin boundary conditions. Let us emphasize that the specific choice of boundary conditions is guided by physical considerations, but it is fundamental to guarantee the well-posedness of the initial value problem.

We thus define the \emph{boundary compatible configuration space} as the subset of the configuration space for which the chosen boundary conditions are satisfied, $\B := \{\phi \in \C \ | \ \BB = 0 \}$. Similarly, we define the \emph{solution space} $\mathrm{Sol} := \{ \phi \in \C \ | \ \EEL[\phi] = 0 \}$ as the set of fields configurations that satisfy the Euler-Lagrange equations obtained by extremizing the action functional $\S[\phi]$.
The intersection of these spaces will give us the space of physically relevant field solutions which we take to be our \emph{pre-phase space} $\widetilde{\P} := \mathrm{Sol} \cap \B = \{\phi \in \C \ | \ \EEL = 0 \ \mathrm{and} \ \BB = 0 \}$.
This manifold can be endowed with a closed 2-form $\OOmega$ (see next section for details), called the \emph{pre-symplectic form} starting from the action. The prefix ``pre'' is due to the fact that $\OOmega$ is still generically degenerate, hence not fully symplectic. In this work, we omit the issue of degeneracy and refer to these objects just as phase space and symplectic form, writing $\P$ and $\boldsymbol{\Omega}$.

\subsection{Dynamics}\label{dynamics}

From the variational principle, we know that the field configurations that satisfy the Euler--Lagrange equations are related to the ones for which the action functional is stationary, up to terms that are integrated over the two lids $\Sigma_i$. Taking the variation of the action, we get
\begin{equation}\label{eq: variation of S}
    \dv \S[\phi] = \int_\M \dv \LL + \int_\Gamma \dv \eell \,.
\end{equation}
The variation of the bulk Lagrangian $\LL$ is related to the Euler--Lagrange equations as~\footnote{This result can be derived by performing integrations by part after taking the variation of the Lagrangian; see for example~\cite{Wald:1990mme} for a proof using indices and~\cite{bridges2010multisymplectic} for a proof using the variational bicomplex formalism.}
\begin{equation}\label{eq: variation of L_0}
    \dv \LL = \EEL \, \dv \phi + d \TTh \,,
\end{equation}
where the term $\TTh$ in Eq.~\eqref{eq: variation of L_0} is the so called \emph{symplectic boundary potential}. In the language of the variational bicomplex, $\TTh$ is a $(n-1, 1)$-form, which depends linearly on the field variations $\dv \phi$. This will be the starting point to construct the symplectic form.
Plugging Eq.~\eqref{eq: variation of L_0} into Eq.~\eqref{eq: variation of S}, we have
\begin{equation}\label{eq: action varaition onshell}
    \dv \S[\phi] = \int_\M \EEL \, \dv \phi + \int_{\Sigma_1}^{\Sigma_2} \TTh + \int_\Gamma \Bigl( \TTh + \dv \eell \Bigr) \,,
\end{equation}
where the symbol $\int_{\Sigma_1}^{\Sigma_2}$ means $\int_{\Sigma_2} - \int_{\Sigma_1}$. 

The middle term in the right hand side.~of the above expression, located on the two lids, is of the kind that is allowed by the variational principle. On the other hand, the right-most one is not allowed as it is. To make it compatible with the variational principle, we can restrict the form of the boundary Lagrangian $\eell$ in such a way that its variation cancels the (pull-backed) symplectic potential on $\Gamma$, up to terms that are localized on the only part of $\Gamma$ that also belongs to the lids, meaning the corners. This can be done by requiring
\begin{equation}\label{eq: dL_1}
    i_\Gamma^\ast \TTh + \dv \eell =: \BB + d \tth \,,
\end{equation}
where $\tth$ is called the \emph{symplectic corner potential}. Indeed, upon using the \emph{boundary conditions} $\BB$, the only term remaining in Eq.~\eqref{eq: dL_1} is a total derivative of $\tth$ which, after integration, will contribute only on the corners. Plugging Eq.~\eqref{eq: dL_1} in Eq.~\eqref{eq: action varaition onshell} gives
\begin{equation}
    \dv \S[\phi] = \int_\M \EEL \, \dv \phi + \int_\Gamma \BB + \int_{\Sigma_1}^{\Sigma_2} \Bigl( \TTh - d \tth \Bigr) \,,
\end{equation}
which is compatible with the variational principle once fields are restricted to phase space $\P$ (recall that this means that we go on-shell and impose the boundary conditions). Notice that, in the last term, we changed the sign of the derivative of $\tth$ because the corner orientation induced from $\Gamma$ is the opposite to the one induced from $\Sigma$.

We can now define the \emph{symplectic current} $\oomega$ as the restriction to the phase space of the (vertical) variation of the last term in the previous equation
\begin{equation}
    \oomega :\doteq \dv \Bigl( i_\Sigma^\ast \TTh - d \tth \Bigr) \eval_{\P} \,,
\end{equation}
and the associated \emph{symplectic form} on $\P$ as
\begin{equation}\label{eq: Omega}
    \OOmega_\Sigma := \int_\Sigma \oomega \,.
\end{equation}
The fact that the symplectic form is closed is crucial in what comes next.

If there exists a vector field $\vv$ under whose flow the 2-form $\OOmega_\Sigma$ is left invariant, i.e.~$\mathbb{L}_\vv \OOmega_\Sigma = 0$, where $\mathbb{L}_\vv$ is the Lie derivative along $\vv$, this implies --- thanks to Cartan's Magic Formula --- that the contraction of $\vv$ with the symplectic form $\mathbb{I}_\vv \OOmega$ is closed and hence locally exact:
\begin{equation}\label{eq: vOmega = -dH}
    0 = \mathbb{L}_\vv \OOmega_\Sigma = \dv \bigl( \mathbb{I}_\vv \OOmega_\Sigma \bigr) + \mathbb{I}_\vv \dv \OOmega_\Sigma \quad \implies \quad \mathbb{I}_\vv \OOmega_\Sigma = - \dv \Q_\Sigma[\vv] \,,
\end{equation}
for some locally defined $\Q_\Sigma[\vv]$.\footnote{The minus sign is conventional.}
We will be interested in vector fields $\vv$ for which $\Q_\Sigma[\vv]$ can be defined globally in phase space. When this function exists, we call it the canonical generator of the transformation induced by $\vv$. For example, when the vector field is the one that generates evolution in time $\vv_t$, $\Q[\vv_t]$ will be the Hamiltonian of the system.

\subsection{Symmetries and Noether Charges}

A powerful tool to compute the canonical generator $\Q_\Sigma[\vv]$ for symmetry vector fields is Noether's theorem. This holds for \emph{variational symmetries}, i.e.~field space vector fields such that $\mathbb{L}_\vv \S = 0$. Since we derived the symplectic form from the action, these transformations also preserve the symplectic form and thereby satisfy Eq.~\eqref{eq: vOmega = -dH}. In this work, we will focus on these variational symmetries.

For these kind of transformations, Noether's theorem states that there exists a \emph{Noether current}
\begin{equation}\label{eq: J_1}
    \JJ[\vv] := \mathbb{I}_\vv \TTh - \ssigma[\vv] \,,
\end{equation}
where $\ssigma[\vv]$ is such that
\begin{equation}\label{eq: sigma1}
    d \ssigma[\vv] = \mathbb{I}_\vv \dv \LL \,.
\end{equation}
This current is then conserved on-shell.

As shown in~\cite{Wald:1990mme}, if one has a closed $(p, q)$-form (with $p < n$) that is locally constructed from fields, then it is exact, meaning there exists a $(p-1, q)$-form --- again locally constructed from fields --- such that the former is the vertical derivative of the latter. This is the case for the Noether current $\JJ[\vv]$, that is closed on-shell:
\begin{equation}\label{eq: Noecharge}
    d \JJ[\vv] = d \mathbb{I}_\vv \TTh - d \ssigma[\vv] = \mathbb{I}_\vv \dv \LL - \EEL \, \mathbb{I}_\vv \dv \phi - \mathbb{I}_\vv \dv \LL \doteq 0 \quad \Rightarrow \quad \JJ[\vv] \doteq: d \qq[\vv] \,,
\end{equation}
with $\qq[\vv]$ a $(n - 2)$-form that we call the \emph{Noether charge}.

The argument of~\cite{Wald:1990mme} to infer the existence of $\qq[\vv]$ is very general. There is no universal algorithm to extract Noether charges, but, in concrete cases, we usually work with the dual vector $J^\mu[\vv]$ and try to express it as the divergence of an antisymmetric, rank $(n - 2)$ tensor $q^{\mu\nu}[\vv]$: the form dual of this tensor will be a candidate Noether charge.

The conserved current $\JJ[\vv]$ is the key ingredient to construct the canonical generator $\Q_\vv$ by integrating this over a slice $\Sigma$, as its variation is related to the contraction of $\vv$ with $\TTh$. Moreover, this integral will be independent of any bulk variation of the slice because $\JJ[\vv]$ is closed.
However, we are also interested in variations of $\Sigma$ that affect its boundary. To obtain a canonical generator that is invariant under these boundary variations, we need to add a term that compensate the flux of the current across the boundary. The term we have to add is the \emph{boundary Noether current} $\jj[\vv]$, constructed with terms that are intrinsically defined on the boundary
\begin{equation}\label{eq: J_2}
    \jj[\vv] := \mathbb{I}_\vv \tth - \sssigma[\vv] \,,
\end{equation}
where $\sssigma[\vv]$ is defined by the condition 
\begin{equation}\label{eq: sigma2}
    \mathbb{I}_\vv \dv \eell =: d \sssigma[\vv] - i^\ast_\Gamma \ssigma[\vv] \,.
\end{equation}
This condition holds only for vector fields $\vv$ that annihilate $\BB$ or when restricted to $\P$ where $\BB = 0$.

Therefore, the \emph{canonical generator} $\Q_\Sigma[\vv]$ associated with the variational symmetry vector field $\vv$ is
\begin{equation}\label{eq: Hamiltonian}
    \Q_\Sigma[\vv] :\doteq \int_\Sigma \JJ[\vv] - \int_{\partial\Sigma} \jj[\vv] \overset{\eqref{eq: Noecharge}}{=} \int_{\partial\Sigma} \Bigl( \qq[\vv] - \jj[\vv] \Bigr) \,.
\end{equation}
The following proposition then holds true.
\begin{prop}\label{prop:sliceindep}
    The canonical generator defined in Eq.~\eqref{eq: Hamiltonian} does not depend on the slice of integration $\Sigma$. Moreover the vertical variation of this quantity is related to the symplectic form contraction as in Eq.~\eqref{eq: vOmega = -dH}.
\end{prop}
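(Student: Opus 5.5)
For slice independence, I would compare the generator on two slices $\Sigma_1$ and $\Sigma_2$ that bound a region $\M_{12}$ together with the portion $\Gamma_{12}$ of the lateral boundary between them. Stokes' theorem on $\M_{12}$, using $d\JJ[\vv]\doteq 0$ from Eq.~\eqref{eq: Noecharge}, gives
\begin{equation*}
\int_{\Sigma_2}\JJ[\vv]-\int_{\Sigma_1}\JJ[\vv] \doteq -\int_{\Gamma_{12}} i_\Gamma^\ast\JJ[\vv] ,
\end{equation*}
with the sign fixed by the orientation conventions of Fig.~\ref{fig: spacetime}. The task then reduces to showing that $i_\Gamma^\ast\JJ[\vv]$ is $d$-exact on $\Gamma$, with primitive $\jj[\vv]$. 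For this I would contract Eq.~\eqref{eq: dL_1} with $\vv$. On $\P$, and for $\vv$ tangent to $\P$ so that $\mathbb{I}_\vv\BB=0$, this gives $i_\Gamma^\ast\mathbb{I}_\vv\TTh+\mathbb{I}_\vv\dv\eell = d\,\mathbb{I}_\vv\tth$. Substituting Eq.~\eqref{eq: sigma2} then yields
\begin{equation*}
i_\Gamma^\ast\bigl(\mathbb{I}_\vv\TTh-\ssigma[\vv]\bigr)= d\bigl(\mathbb{I}_\vv\tth-\sssigma[\vv]\bigr),
\end{equation*}
that is, $i_\Gamma^\ast\JJ[\vv]=d\jj[\vv]$. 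Applying Stokes on $\Gamma_{12}$, whose boundary is $\partial\Sigma_2-\partial\Sigma_1$, the flux through $\Gamma$ exactly cancels the difference of the corner terms $\int_{\partial\Sigma_i}\jj[\vv]$. Hence $\Q_{\Sigma_2}[\vv]=\Q_{\Sigma_1}[\vv]$.

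For the second claim, I would compute $\dv\Q_\Sigma[\vv]$ directly. Applying $\dv$ to Eq.~\eqref{eq: sigma1}, using $\dv^2=0$, and Cartan's formula in field space give $d\,\dv\ssigma = \dv\mathbb{I}_\vv\dv\LL=\mathbb{L}_\vv\dv\LL$. Writing $\dv\LL=\EEL\,\dv\phi+d\TTh$, this becomes, on shell, $d(\mathbb{L}_\vv\TTh)$ plus terms proportional to $\EEL$ or to the linearized equations. Those extra terms vanish because $\vv$ and the variation are both tangent to $\mathrm{Sol}$. I would therefore set $\dv\ssigma\doteq\mathbb{L}_\vv\TTh$, up to a $d$-exact piece that I absorb by the usual ambiguity. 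With this,
\begin{equation*}
\dv\JJ[\vv] = \dv\mathbb{I}_\vv\TTh - \mathbb{L}_\vv\TTh = -\mathbb{I}_\vv\dv\TTh .
\end{equation*}
The same manipulation on Eqs.~\eqref{eq: sigma2} and~\eqref{eq: dL_1} gives $\dv\jj[\vv]=-\mathbb{I}_\vv\dv\tth$. Combining the two,
\begin{equation*}
\dv\Q_\Sigma[\vv]=-\int_\Sigma\mathbb{I}_\vv\dv\TTh+\int_{\partial\Sigma}\mathbb{I}_\vv\dv\tth = -\mathbb{I}_\vv\int_\Sigma\dv\bigl(\TTh-d\tth\bigr) = -\mathbb{I}_\vv\OOmega_\Sigma ,
\end{equation*}
which is Eq.~\eqref{eq: vOmega = -dH}.

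I expect the main obstacle to be the step $\dv\ssigma[\vv]\doteq\mathbb{L}_\vv\TTh$ and its boundary analogue. These relations hold only up to closed, and hence exact, forms, by the local exactness result of~\cite{Wald:1990mme}. The leftover exact pieces integrate to corner terms that must be shown to vanish or to be absorbable into the definitions of $\ssigma[\vv]$ and $\sssigma[\vv]$. I would handle this by fixing $\ssigma[\vv]$ and $\sssigma[\vv]$ covariantly, for instance $\ssigma[\xi]=\iota_\xi\LL$ for diffeomorphisms, and checking the identity explicitly in that case. The sign bookkeeping from the corner orientations, noted after Eq.~\eqref{eq: dL_1}, also needs care throughout.
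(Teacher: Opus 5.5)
\textbf{First part.} Your slice-independence argument is the paper's: Stokes between the two slices, $d\JJ[\vv]\doteq 0$, and $i^\ast_\Gamma\JJ[\vv]=d\jj[\vv]$, obtained by contracting Eq.~\eqref{eq: dL_1} with $\vv$ and using Eq.~\eqref{eq: sigma2}. The paper simply carries $-\int_{\Gamma_\V}\mathbb{I}_\vv\BB$ along and drops it on $\P$.

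\textbf{Second part: there is a genuine gap.} Your local identity $d\bigl(\dv\ssigma[\vv]-\mathbb{L}_\vv\TTh\bigr)\doteq 0$ is correct. The step ``set $\dv\ssigma\doteq\mathbb{L}_\vv\TTh$, absorbing the exact piece into the ambiguity'' is not. Without it, both intermediate claims $\dv\JJ[\vv]=-\mathbb{I}_\vv\dv\TTh$ and $\dv\jj[\vv]=-\mathbb{I}_\vv\dv\tth$ fail.

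Take diffeomorphisms with $\ssigma=\iota_\xi\LL$ and $\sssigma=\iota_\xi\eell$. Covariance gives $\mathbb{L}_{\vv_\xi}\TTh-\dv\ssigma=\pounds_\xi\TTh-\iota_\xi\dv\LL\doteq d(\iota_\xi\TTh)$. Hence $\dv\JJ[\vv_\xi]\doteq-\mathbb{I}_{\vv_\xi}\dv\TTh+d(\iota_\xi\TTh)$ and $\dv\jj[\vv_\xi]=-\mathbb{I}_{\vv_\xi}\dv\tth+d(\iota_\xi\tth)-\iota_\xi\BB+\iota_\xi i^\ast_\Gamma\TTh$. This is exactly the appendix computation behind Eq.~\eqref{eq: first law with omega}. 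The $d(\iota_\xi\TTh)$ term is what survives as the flux $\int_\H\iota_\xi\TTh$, on which the \AE{}ther entropy rests, so your scheme would erase it.

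The exact piece also cannot be absorbed. The only freedom is $\ssigma\to\ssigma+d\boldsymbol{\gamma}$. Eq.~\eqref{eq: sigma2} then forces $\sssigma\to\sssigma+i^\ast_\Gamma\boldsymbol{\gamma}$, which leaves $\Q_\Sigma$, and hence $\dv\Q_\Sigma+\mathbb{I}_\vv\OOmega_\Sigma$, unchanged. At the level of $\dv\JJ$, it could remove $d(\iota_\xi\TTh)$ only if $\iota_\xi\TTh$ were a field-space variation. It is not, since $\dv(\iota_\xi\TTh)=\iota_\xi\dv\TTh\neq 0$. (Note also that Wald's lemma concerns identically closed forms, not on-shell closed ones.)

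The missing idea is that the leftover corner terms of bulk and boundary cancel \emph{against each other}. The $\int_{\partial\Sigma}\iota_\xi\TTh$ from $\dv\JJ$ cancels the $\int_{\partial\Sigma}\iota_\xi i^\ast_\Gamma\TTh$ from $\dv\jj$. What remains is $\int_{\partial\Sigma}\iota_\xi\BB$, which vanishes on $\P$.

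\textbf{How the paper avoids this.} The paper's proof never splits the identity into separate bulk and corner pieces. Define $\boldsymbol{\alpha}:=\mathbb{L}_\vv\TTh-\dv\ssigma[\vv]$ and $\boldsymbol{a}:=\mathbb{L}_\vv\tth-\dv\sssigma[\vv]$. Then $\dv\Q_\Sigma+\mathbb{I}_\vv\OOmega_\Sigma=\int_\Sigma\boldsymbol{\alpha}-\int_{\partial\Sigma}\boldsymbol{a}$. Your manipulation gives $d\boldsymbol{\alpha}=-\mathbb{L}_\vv(\EEL\,\dv\phi)$, and its boundary analogue, using Eqs.~\eqref{eq: dL_1} and~\eqref{eq: sigma2}, gives $d\boldsymbol{a}=i^\ast_\Gamma\boldsymbol{\alpha}-\mathbb{L}_\vv\BB$. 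So the pair $(\boldsymbol{\alpha},\boldsymbol{a})$ has exactly the structure of $(\JJ,\jj)$ in your first part. The paper then applies Stokes between $\Sigma$ and the bottom lid $\Sigma_1$, where the variations are taken to vanish. This reduces the combination to $-\int_\V\mathbb{L}_\vv(\EEL\,\dv\phi)-\int_{\Gamma_\V}\mathbb{L}_\vv\BB\doteq 0$.
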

We shall give a proof for this statement in Appendix~\ref{app: proofprop} and, as a consequence of it, we shall drop the subscript $\Sigma$ in $\Q_\Sigma[\vv]$ from now on.

\subsection{Diffeomorphism Invariance and Covariant Lagrangians}

When working with gravitation theories, one generally requires invariance under the full group of spacetime diffeomorphisms. Diffeomorphisms also induce transformations on fields.

For the diffeomorphism generated by $\xi$ to be a symmetry of our phase space, we ask that both the Euler-Lagrange equations and the boundary conditions are left invariant, that is (we use $\delta$ to refer to a generic variation)
\begin{equation}\label{eq: diffeo constraints}
    \begin{cases}
        \delta_\xi \EEL = 0 \,;\\
        \delta_\xi \BB = 0 \,.
    \end{cases}
\end{equation}
A sufficient condition for the first equality to be satisfied is that $\vv_\xi$ is a variational symmetry (meaning it keeps the action invariant). On the other hand, the presence of a boundary already breaks the full group of diffeomorphisms of $\M$, which means that the action is at most invariant under those diffeomorphisms that preserve the position of the boundary: this is equivalent to ask $\iota_\xi \nn = 0$ everywhere on $\Gamma$. The second equality in Eq.~\eqref{eq: diffeo constraints}, of course, depends on the boundary conditions chosen.

For those theories whose Lagrangians are covariant~\footnote{A \textit{tensorial field} $T$ on configuration space transforms as $\delta_\xi T = \mathbb{L}_{\vv_\xi} T$, that is with a configuration space Lie derivative. If $T$ is constructed as a local combination of fields, it is also a tensor in spacetime, which means that it changes under the diffeomorphism generated by $\xi$ as $\delta_\xi T = \pounds_\xi T$, where we recall that $\pounds$ is the spacetime Lie derivative. Hence, we say that an object $T$ is \emph{covariant} under $\xi$ if it satisfies
\begin{equation}\label{eq: covariance}
    \mathbb{L}_{\vv_\xi} T =  \pounds_\xi T \,.
\end{equation}
}
and for diffeomorphisms that satisfy Eq.~\eqref{eq: diffeo constraints}, we can explicitly compute the terms $\ssigma[\vv_\xi]$ and $\sssigma[\vv_\xi]$ in the bulk and boundary Noether currents Eq.~\eqref{eq: J_1}, Eq.~\eqref{eq: J_2}. Indeed, using Cartan's Magic formula, we can show that
\begin{equation}
    \delta_\xi \LL = \mathbb{L}_{\vv_\xi} \LL = \iota_{\vv_\xi} \dv \LL + \dv (\iota_{\vv_\xi} \LL) \,,
\end{equation}
where the second term vanishes because $\LL$ is a $(n, 0)$-form. Moreover, because of covariance
\begin{equation}
    \delta_\xi \LL = \pounds_\xi \LL = \iota_\xi d \LL + d \bigl( \iota_\xi \LL \bigr) \,,
\end{equation}
where the last term vanishes because $\LL$ is a top-form on $\M$. Comparing the two, we can deduce that $\ssigma[\vv_\xi] = \iota_\xi \LL$. Analogous considerations for $\eell$ lead us to the result
\begin{equation}\label{eq: sigmas diffeo}
    \ssigma[\vv_\xi] = \iota_\xi \LL \,, \quad \mathrm{and} \quad  \sssigma[\vv_\xi] = \iota_{\xi} \eell \,,
\end{equation}

So, using Eq.~\eqref{eq: J_2} together with Eq.~\eqref{eq: sigmas diffeo}, the charge associated to a diffeomorphism that satisfies the condition Eq.~\eqref{eq: diffeo constraints} has the final form
\begin{equation}\label{eq: diffeo charge}
    \boxed{\Q[\vv_\xi] \doteq \int_{\partial\Sigma} \Bigl( \qq[\vv_\xi] - \jj[\vv_\xi] \Bigr) = \int_{\partial\Sigma} \Bigl( \qq[\vv_\xi] - \iota_{\vv_\xi} \tth + \iota_{\xi} \eell \Bigr)} \,.
\end{equation}

\section{First Law of Black Hole Thermodynamics and Smarr Formula}\label{sec: First Law BH}

Consider now a stationary black hole metric, meaning a solution $\phi^\circ$ of the Euler--Lagrange equations with a Killing vector field $\xi$ (i.e.~$\pounds_\xi \phi^\circ = 0$) that has a causal boundary $H$. In General Relativity, $H$ is the Killing horizon of $\xi$, but in different theories $H$ is not necessarily a Killing horizon, nor a null surface in general.

We take the lateral boundary of the manifold $\M$ as $\Gamma \cup H$.
We also define $\S_\R := \Gamma \cap \Sigma$, and $\H := H \cap \Sigma$,  meaning the intersection of $\Gamma$ and $H$ with a generic slice $\Sigma$, to distinguish them from the two corners $\partial \Sigma_i$. See Fig.\ref{fig: spacetime with internal boundary}. Here $\R$ stands for a regulator scale that one will eventually send to infinity.
\begin{figure}[H]
    \centering
    \begin{tikzpicture}[scale=0.6]

        %Top lid
        \draw (0,0) arc(0:180:4cm and 0.5cm);
        \draw (-8,0) arc(180:360:4cm and 0.5cm);

        %Top horizon
        \draw (-3,0) arc(0:180:1cm and 0.1cm);
        \draw (-5,0) arc(180:360:1cm and 0.1cm);

        %Gamma
        \draw (0,0) .. controls (-1,-6) .. (0,-12);
        \draw (-8,0) .. controls (-7,-6) .. (-8,-12);

        %Bottom lid
         \draw[dashed] (0,-12) arc(0:180:4cm and 0.5cm);
         \draw (-8,-12) arc(180:360:4cm and 0.5cm);

         %Bottom horizon
        \draw[dashed] (-3,-12) arc(0:180:1cm and 0.1cm);
        \draw[dashed] (-5,-12) arc(180:360:1cm and 0.1cm);

        %Generic slice
         \draw[dashed] (-0.75,-6) arc(0:180:3.25cm and 0.5cm);
         \draw (-7.25,-6) arc(180:360:3.25cm and 0.5cm);

         %Generic horizon
        \draw[dashed] (-3,-6) arc(0:180:1cm and 0.1cm);
        \draw[dashed] (-5,-6) arc(180:360:1cm and 0.1cm);

        %Internal boundary
        \draw[dashed] (-3,0) -- (-3,-12);
        \draw[dashed] (-5,0) -- (-5,-12);
        
        \draw (0.5, -2) node {{$\Gamma$}};
        \draw (-2, -12) node {{$\Sigma_1$}};
        \draw (-2, 0) node {{$\Sigma_2$}};
        \draw (-2, -6) node {{$\Sigma$}};
        \draw (0.7, 0) node {{$\partial \Sigma_2$}};
        \draw (0.7,-12) node {{$\partial \Sigma_1$}};
        \draw (-5.6,-8) node {{$H$}};
        \draw (-5.6,-6) node {{$\H$}};
        \draw (-7.8,-6) node {{$\S_\R$}};
        
        \draw[->] (-8.5,-12) -- (-8.5,0);
        \draw (-9, 0) node {{$t$}};

        \draw[->] (-4,0.3) -- (-4,0.8) node [anchor = east] {{$\ttau$}};
        \draw[->] (-4,-11.7) -- (-4,-11.2) node [anchor = east] {{$\ttau$}};
        \draw[->] (-6.5,-5.9) -- (-6.5,-5.4) node [anchor = east] {{$\ttau$}};
        \draw[->] (-0.75,-6) -- (-0.25,-6) node [anchor = west] {{$\nn$}};

    \end{tikzpicture}
    \caption{Spacetime in presence of an internal boundary $H$}
    \label{fig: spacetime with internal boundary}
\end{figure}
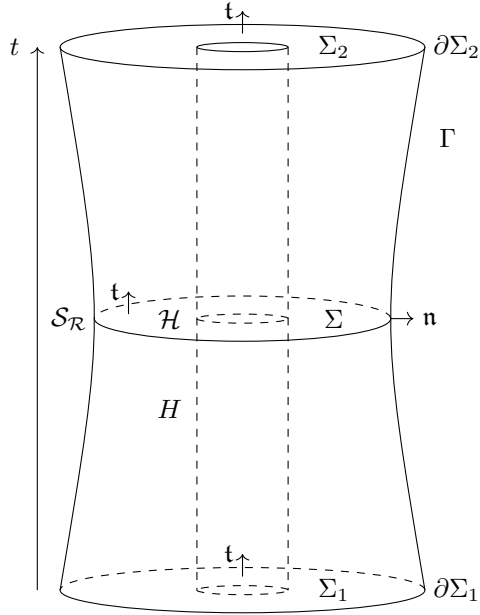

In this case, the variation of the canonical generator Eq.~\eqref{eq: diffeo charge} will contain two pieces, one integrated over the external boundary $\S_\R$ and one over the internal boundary $\H$:
\begin{equation}\label{eq: first law with omega}
    \dv \Q[\vv_\xi] \doteq \dv \int_{\S_\R} \Bigl( \qq[\vv_\xi] - \jj[\vv_\xi] \Bigr) - \dv \int_\H \qq[\vv_\xi] + \int_\H \iota_\xi \TTh \,.
\end{equation}

For simple choices of $\xi$, the integral over the corner $\S_\R$ have a clear physical interpretation: 
\begin{itemize}
    \item If $\xi = \partial_t$, then
    \begin{equation}\label{eq: dM}
        \dv M := \dv \int_{\S_\R} \Bigl( \qq[\vv_t] - \jj[\vv_t] \Bigr) \,,
    \end{equation}
    is the variation of the system energy. 

    \item If $\xi = \partial_\varphi$, then
    \begin{equation}\label{eq: dJ}
        \dv J := - \dv \int_{\mathcal S_\R} \Bigl( \qq[\vv_\varphi] - \jj[\vv_\varphi] \Bigr) \,,
    \end{equation}
    is the variation of the system angular momentum.
\end{itemize}
In the limit where the boundary is at spatial infinity, the two quantities above respectively correspond to the ADM mass and angular momentum. 

On the other hand, the quantity evaluated on the horizon $\H$ in Eq.~\eqref{eq: first law with omega} will be related to the product of the \emph{temperature} $T$ and the \emph{entropy} $S$ of the black hole, as we will soon show.

According to Proposition~\ref{prop:sliceindep}, the total charge Eq.~\eqref{eq: first law with omega} is related to the symplectic form contraction as
\begin{equation}\label{eq: dQ = -ivomega}
    \dv \Q [\vv_\xi] = - \iota_{\vv_\xi} \int_\Sigma \boldsymbol{\omega} \,.
\end{equation}
Since both the boundary and corner symplectic potentials are linear in the field variations, so it is the symplectic current $\boldsymbol{\omega}$. Given that the vector field $\xi$ is Killing, we have $\iota_{\vv_\xi} \dv \phi = 0$ (we assume that fields are covariant), hence the contraction $\iota_{\vv_\xi} \boldsymbol{\omega}$ on the right hand side of Eq.~\eqref{eq: dQ = -ivomega} vanishes and we obtain a relation between the quantities defined on the external boundary and those defined on $\H$, namely
\begin{equation}\label{eq:1stlawnoomega}
    \dv \int_{\S_\R} \Bigl( \qq[\vv_\xi] - \jj[\vv_\xi] \Bigr) \doteq \dv \int_\H \qq[\vv_\xi] - \int_\H \iota_\xi \TTh \,.
\end{equation}
Let us show how this relation can be interpreted as the first law of black hole thermodynamics.

As Wald proved in his seminal paper~\cite{Wald:1993nt}, in order to interpret the first term on the right hand side of Eq.~\eqref{eq:1stlawnoomega} as an entropy contribution, we have to restrict our focus on stationary variations around the black hole geometry that \textit{do not move the horizon}. To do so, we have to require that variations preserve the rescaled vector field $\hat{\xi} := \xi/\kappa_\textsc{h}$, so defined to have unit surface gravity.\footnote{If we define the surface gravity of the horizon to be either $\kappa$-generator, $\nabla_\mu \xi_\nu = \kappa_\textsc{h} \, \epsilon_{\mu\nu}$, or $\kappa$-inaffinity, $\xi^\mu \, \nabla_\mu \xi^\nu = \kappa_\textsc{h} \, \xi^\nu$, we can see that substituting $\hat{\xi}$ we get $\nabla_\mu \hat{\xi}_\nu = \epsilon_{\mu\nu}$ and $\hat{\xi}^\mu \, \nabla_\mu \hat{\xi}^\nu = \hat{\xi}^\nu$.} This implies that the diffeomorphism must be \textit{field-dependent}  
\begin{equation}\label{eq: variazione di xi}
    \dv \hat{\xi}^\mu = \dv \left( \dfrac{\xi^\mu}{\kappa_\textsc{h}} \right) = \dfrac{\dv \xi^\mu}{\kappa_\textsc{h}} - \dv\kappa_\textsc{h} \, \dfrac{\xi^\mu}{\kappa_\textsc{h}^2} \overset{\mathrm{impose}}{=} 0 \quad \implies \quad \dv \xi^\mu = \dfrac{\dv\kappa_\textsc{h}}{\kappa_\textsc{h}} \, \xi^\mu \neq 0 \,.
\end{equation}
As a consequence of this field-dependence, Eq.~\eqref{eq: dQ = -ivomega} no longer holds as written, but an additional term appears, namely
\begin{equation}\label{eq: first law fdvf Q}
    \dv \Q[\vv_\xi] - \Q[\vv_{\dv\xi}] \doteq - \int_\Sigma \iota_{\vv_\xi} \oomega \,.
\end{equation}
This new term is essential for the thermodynamical interpretation.

Expanding Eq.~\eqref{eq: first law fdvf Q} we get a contribution at the asymptotic boundary and one on the horizon. We can choose the variation of the Killing vector field such that it vanishes on $\Gamma$ --- so that the asymptotic quantities can be defined as before (see Eqs.~\eqref{eq: dM} and ~\eqref{eq: dJ}) --- while it behaves like Eq.~\eqref{eq: variazione di xi} on the horizon.

By linearity, the charge density $\qq[\vv_\xi]$ on the horizon $\H$ can be written as the product of $\kappa_\textsc{h}$ and the charge density computed with $\hat{\xi}$ (namely $\qq[\vv_{\hat{\xi}}]$). Therefore its variation would naively include contributions from both factors for the Leibniz rule. However, the subtraction in Eq.~\eqref{eq: first law fdvf Q} \emph{cancels exactly} the term associated with $\dv \kappa_\textsc{h}$:
\begin{equation}\label{eq.dq-qd}
    \dv \qq[\vv_\xi] - \qq[\vv_{\dv\xi}] = \dv \Bigl( \kappa_\textsc{h} \, \qq[\vv_{\hat{\xi}}] \Bigr) - \dfrac{\dv \kappa_\textsc{h}}{\kappa_\textsc{h}} \, \qq[\vv_\xi] = \kappa_\textsc{h} \, \dv \qq[\vv_{\hat\xi}] \,,
\end{equation}
If we write (the integral of) $\qq[\vv_\xi]$ as the product of temperature and entropy $T \, S$, we can make the analogy
\begin{gather}
    \dv (\kappa_\textsc{h} \, \qq[\vv_{\hat{\xi}}]) - \dfrac{\dv \kappa_\textsc{h}}{\kappa_\textsc{h}} \, \qq[\vv_\xi] = \kappa_\textsc{h} \, \dv\qq[\vv_{\hat\xi}] \,, \\
    \dv (T \, S) - \dv T \, S = T \, \dv S \,,
\end{gather}
leading to the following identifications
\begin{equation}\label{eq: entropy and temperature}
    S_\textsc{bh} := 2\pi \int_\H \qq[\vv_{\hat{\xi}}] \,, \quad \mathrm{and} \quad T_\textsc{bh} := \dfrac{\kappa_\textsc{h}}{2\pi} \,.
\end{equation}
These are precisely the definitions given in~\cite{Wald:1993nt}.\footnote{The $2\pi$ factor is chosen in such a way that the entropy reproduces the Bekenstein--Hawking one $S_\textsc{bh} = A_\textsc{bh}/4 \, G_\mathrm{N}$ in General Relativity. The temperature also reproduces the celebrated Hawking temperature of quantum field theories in black hole spacetimes.}

So far, we have not talked about the \emph{symplectic flux} $\iota_{\xi} \TTh$ in Eq.~\eqref{eq:1stlawnoomega}. Remember that $\xi$ is the Killing vector field of a stationary black hole. Assuming that (i) the horizon is bifurcating so that we can take $\H$ to be the bifurcation surface (where $\xi = 0$), and (ii) the integrand $\iota_\xi \TTh$ is regular on $\H$, the symplectic flux contribution vanishes. This is what happens in General Relativity. On the other hand, the second condition is not generically satisfied in alternative theories of gravity. In particular, it fails in the case of Einstein--\AE{ther} theory because the \AE{ther} flux diverges as we approach the Killing horizon. This will be the main point of Subsection \ref{subsec: symp flux AE}, where this behavior will lead us to unveil a contribution to the first law given by the \AE{ther} itself.

Finally, if the horizon-generating Killing vector field is a linear combination of time translation and rotation $\xi = \partial_t + \Omega_\textsc{h} \, \partial_\varphi$, we get
\begin{equation}\label{eq: First Law BH}
    \boxed{\dv M = T_\textsc{bh} \, \dv S_\textsc{bh} + \Omega_\textsc{h} \, \dv J + \int_\H \iota_{\hat{\xi}} \TTh} \,,
\end{equation}
where $\Omega_\textsc{h}$ is the angular velocity of the black hole horizon. The first law of black hole thermodynamics can be obviously extended to include additional terms coming from gauge fields~\cite{Ortin:2002qb,Elgood:2020svt}.

\subsection{Scale Invariance, Extended Thermodynamics, and Smarr Formula}\label{subsec: Weyl Invariance}

An important relation that can be derived from the first law of black hole thermodynamics is the so-called Smarr formula. This formula is relevant as it directly relates the parameters that characterize a black hole solution, unlike the first law which relates infinitesimal variations of the parameters. For example, the Smarr formula for a Schwarzschild black hole in General Relativity in generic spacetime dimensions reads $(n - 3) \, M = (n  - 2) \, T \, S$.

The Smarr formula is deeply connected to the transformation properties of the black hole parameters under \textit{rescaling}. Indeed it can be derived with a scaling argument from the first law: under an overall change of length scale, in pure General Relativity, the parameters of the solution transform homogeneously, with a degree that is fixed by its length dimension; therefore, each parameter is a homogeneous function of the others. Applying Euler's theorem to this function leads to the desired formula.

In order to make contact with the covariant phase space formalism, we reinterpret these considerations in terms of the solution space. Notice that an overall change in length scale can also be interpreted as a transformation of the ``ruler'' we use to measure those scales, i.e.~the metric tensor. 

A dilatation of the metric --- and of any other field in the theory --- is realized by the global part of the group of Weyl transformations. Under an infinitesimal Weyl transformation, the field $\phi$ typically changes as
\begin{equation}
    \delta_W \phi := w(\phi) \, \varepsilon \, \phi \,,
\end{equation}
where $w(\phi)$ is some number called the \emph{Weyl weight} of $\phi$, and $\varepsilon$ is the small parameter of the expansion. We focus here on \emph{global} Weyl transformations, for which $\varepsilon$ does not depend on the spacetime point.

A theory is said to be \emph{globally Weyl invariant} if the variation of the action $\S$ under this transformation vanishes. This occurs if there exist at least one combination of Weyl weights such that the contributions from each field cancel exactly. 
A generic theory is not invariant under these transformations, as different terms in the action generally transform with different weights.

As shown in~\cite{Kubiznak:2016qmn}, a possible way to circumvent this issue is to allow coupling constants appearing in the action to transform as well, i.e.
\begin{equation}
    \delta_W \, c_i := w(c_i) \, \varepsilon \, c_i \,,
\end{equation}
for suitably chosen weights $w(c_i)$.
The framework in which couplings are allowed to vary under global Weyl transformations is called \emph{extended thermodynamics}~\cite{Kubiznak:2016qmn, Kastor:2016bph, Kastor:2016bph}. 
Within a covariant phase space approach, the transformation of couplings is unnatural. To encompass the extended framework, one must enlarge the configuration space to include couplings as well, that is writing a theory with dynamical couplings.

To avoid any possible confusion, we will denote by $\dve$ and $\widetilde{\vv}$ respectively the vertical exterior derivative and  a generic vector field acting on the enlarged configuration space.

Repeating the computation that led us to Eq.~\eqref{eq: first law with omega} in the extended framework we get
\begin{equation}\label{eq: Ext first law with Omega}
    \begin{aligned}
        \dve \int_{\S_\R} \Bigl( \qq[\widetilde{\vv}_\xi] - \jj[\widetilde{\vv}_\xi] \Bigr) & + \Psi^i \, \dve c_i - \int_{\S_\R} \iota_{\widetilde{\xi}} \BB - \dve \int_\H \qq[\widetilde{\vv}_\xi] + \int_\H \qq[\widetilde{\vv}_{\dve\xi}] + \int_\H \iota_\xi \TTh = - \iota_{\widetilde{\vv}_\xi} \int_\Sigma \oomega \,,
      \end{aligned}
\end{equation}
where 
\begin{equation}\label{eq: variabile coniugata al potenziale chimico}
    \Psi^i := \int_\Sigma \frac{\partial\JJ[\vv_\xi]}{\partial c_i} - \int_{\S_\R} \frac{\partial \jj[\vv_\xi]}{\partial c_i}
\end{equation}
is the conjugated variable to the \emph{chemical potential} $c_i$. See~\cite{Neri:2024qgb} for a more detailed discussion.

Since boundary conditions are not generally invariant under a Weyl transformation, the variation variation induced by a Weyl transformation is not tangent to the boundary compatible configuration space $\B$. To account for this fact, we have to add the term $\int_{\S_\R} \iota_{\xi} \BB$, that generalizes Eq.~\eqref{eq: first law with omega}. Eq.~\eqref{eq: Ext first law with Omega} is the extended thermodynamics version of Eq.~\eqref{eq: dQ = -ivomega}. Similarly, if $\xi$ is a Killing vector field, the right hand side vanishes and we get a relation between the variations on the left hand side.

If we consider again a solution that admits a Killing vector field $\xi$ of the form $\xi = \partial_t + \Omega_\textsc{h} \, \partial_\varphi$, we can use the definitions Eq.~\eqref{eq: dM}, Eq.~\eqref{eq: dJ} and Eq.~\eqref{eq: entropy and temperature} to turn Eq.~\eqref{eq: Ext first law with Omega} into the \textit{first law of black hole extended thermodynamics}:
\begin{equation}\label{eq: ext thermo first law}
    \boxed{\dve M - \Omega_\textsc{h} \, \dve J + \Psi^i \, \dve c_i - T \, \dve S - \int_{\S_\R} \iota_{\xi} \BB = 0} \,.
\end{equation}
If we call $\widetilde{\vv}_W$ the vector on the enlarged configuration space that generates the infinitesimal Weyl transformation, we can contract the previous equation with $\widetilde{\vv}_W$ to obtain
\begin{equation}
    (n-2) \, M = (n-2) \, T \, S + (n-2) \, \Omega_\textsc{h} \, J - w(c_i) \, \Psi^i \, c_i + \int_{\S_\R} \iota_{\xi}(\iota_{\vv_W} \BB) \,.
\end{equation}
The term involving $\BB$ depends on the theory one is considering, but, when evaluated at infinity, it typically reduces to another factor of the total mass~\cite{Neri:2024qgb}, thereby leading to the desired \textit{Smarr formula}
\begin{equation}
     \boxed{(n-3) \, M = (n-2) \, T \, S + (n-2) \, \Omega_\textsc{h} \, J - w(c_i) \, \Psi^i \, c_i} \,.
\end{equation}

\section{Einstein--\AE{ther} Theory}\label{sec: EA Theory}

Consider a spacetime $(\M, g)$ of dimension $n$ and mostly-plus signature. We assume that there also exists a dynamical $1$-form field $\uu$, called the \emph{\AE{ther}}. The dual vector field $g^{\mu\nu} \, \uu_\nu$ will be denoted by $u^\mu$.
Einstein--\AE{ther} theory is a generally covariant vector-tensor theory that describes the dynamics of $g$ and $\uu$.
In the simplest case, the action of this theory can be written as the Einstein-Hilbert action (with a possibly non-zero cosmological constant $\Lambda$), plus the most general two-derivative action for the \AE{ther}, as given in~\cite{Jacobson:2000xp}.
In addition, a Lagrange multiplier $\lambda$ is introduced to enforce $\uu$ to be time-like and of unit norm. Thus we have
\begin{equation}\label{eq: EA action}
    \S_{\textnormal{\AE}}[g, \uu] = \dfrac{1}{16\pi \, G_\mathrm{N}} \, \int_\M \boldsymbol{\epsilon}_\M \Bigl[ R - 2 \, \Lambda + \mathcal{L}_\uu + \lambda \, \Bigl( g^{\mu\nu} \, \uu_\mu \, \uu_\nu + 1 \Bigr) \Bigr] \,,
\end{equation}
where
\begin{equation}\label{eq: Aether Lagrangian}
    \mathcal{L}_\uu = - Z^{\mu\nu} {}_{\rho\sigma} \, \nabla_\mu u^\rho \, \nabla_\nu u^\sigma \,,
\end{equation}
and~\footnote{This tensor is sometimes called $K$, but we call it $Z$, as in~\cite{Pacilio:2017emh} to avoid confusion with the extrinsic curvature.}
\begin{equation}
    Z^{\mu\nu} {}_{\rho\sigma} = c_1 \, g^{\mu\nu} \, g_{\rho\sigma} + c_2 \, \delta^\mu_\rho \, \delta^\nu_\sigma + c_3 \, \delta^\mu_\sigma \, \delta^\nu_\rho - c_4 \, u^\mu \, u^\nu \, g_{\rho\sigma} \,.
\end{equation}
The kinetic Lagrangian density for the \AE{ther} vector then reads~\cite{Jacobson:2000xp}
\begin{equation}
    \L_\uu = - c_1 \, \nabla_\mu \uu_\nu \, \nabla^\mu u^\nu - c_2 \, \nabla_\mu u^\mu \, \nabla_\nu u^\nu - c_3 \, \nabla_\mu \, \uu_\nu \, \nabla^\nu u^\mu + c_4 \, \aa_\mu \, a^\mu \,,
\end{equation}
where $a^\mu := u^\rho \, \nabla_\rho u^\mu$ is the \AE{ther} \emph{acceleration} vector field.

General covariance is maintained thanks to the fact that $\uu$ is a dynamical field. However, as the expectation value of the \AE{ther} is necessarily different from $0$, every solution breaks local Lorentz invariance because the \AE{ther} provides a preferred frame. As a matter of fact, this theory is intended to be an effective description of Lorentz-violating physics (see e.g.~\cite{Liberati:2013xla} and references therein).

For convenience, we introduce the following shorthand notation for the sum of different couplings
\begin{equation}\label{eq: c_ij}
    c_{i_1,\dots,i_p} := c_{i_1} + \dots + c_{i_p} \,,
\end{equation}
so that, e.g.~$c_{13} := c_1 + c_3$ and $c_{123} := c_1 + c_2 + c_3$.

An interesting feature of this theory is that the action can be written in the so called \textit{fluidodynamical variables}. If we decompose the covariant derivative of the \AE{ther} as
\begin{equation}\label{eq: decomposizione fluidodinamica}
    \nabla_\mu \uu_\nu = \dfrac{1}{n-1} \, p_{\mu\nu} \, \vartheta + \sigma_{\mu\nu} + \omega_{\mu\nu} - \uu_\mu \, \aa_\nu \,,
\end{equation}
where
\begin{equation}\label{eq: hydro quantities}
    p_{\mu\nu} := g_{\mu\nu} + \uu_\mu \uu_\nu \,, \quad \vartheta := \nabla_\mu u^\mu \,, \quad \sigma_{\mu\nu} := p_{(\mu}^\alpha \, p_{\nu)}^\beta \, \nabla_{\alpha} \uu_{\beta} - \dfrac{1}{n-1} \, p_{\mu\nu} \, \vartheta \,, \quad \omega_{\mu\nu} := p_{[\mu}^\alpha \, p_{\nu]}^\beta \, \nabla_\alpha \uu_\beta \,,
\end{equation}
are respectively the \emph{projector} on the leaves orthogonal to the \AE{ther} and the \emph{expansion}, the \emph{shear}, and the \emph{twist} of the \AE{ther} congruence, we can recast the action in the following form
\begin{equation}\label{eq:fluidodynamical action}
    \S_{\textnormal{\AE}}[g, \uu] = \dfrac{1}{16\pi \, G_\mathrm{N}} \, \int_\M \boldsymbol{\epsilon}_\M \, \Bigl[ R - 2\Lambda - c_\vartheta \, \vartheta^2 - c_\sigma \, \sigma^2 - c_\omega \, \omega^2 + c_a \, a^2 + \lambda \, \Bigl( \uu_\mu \, \uu_\nu \, g^{\mu\nu} + 1 \Bigr) \Bigr] \,,
\end{equation}
where
\begin{equation}
    c_\vartheta := c_{13} + (n - 1) \, c_2 \,, \quad c_\sigma := c_{13} \,, \quad c_\omega := c_1 - c_3 \,, \quad c_a := c_{14} \,.
\end{equation}

In principle, we could look for solutions of the Euler-Lagrange equations for any possible value of the couplings $c_i$. However, a lot of these values have been ruled out by physical considerations such as positivity of total energy, absence of naked singularities, and so on~\cite{Jacobson:2007veq}. Most of the solutions known in the literature have been found in sectors of the theory where some of the coefficients of the fluidodynamical description Eq.~\eqref{eq:fluidodynamical action} are set to zero, which is equivalent to make the corresponding term of the \AE{ther} flux a cyclic variable.

As we will see shortly, the coupling constants are also related to the propagation speed of the various degrees of freedom of the theory. For example, the strong observational evidence that the spin-2 perturbation of the metric propagates with the speed of light restricts $c_{13}$ to be very small.
We invite the reader to consult~\cite{Bhattacharyya:2014kta, Jacobson:2007fh, Jacobson:2007veq} and the references therein for a more comprehensive review of these bounds.

\subsection{Relation with the Infrared Limit of Ho\v{r}ava--Lifshitz Gravity}

In~\cite{Jacobson:2010mx}, the author introduced a variant of Einstein--\AE{ther} theory where the \AE{ther} vector is taken to be normalized and hypersurface-orthogonal by construction. At the level of the action Eq.~\eqref{eq:fluidodynamical action}, this can be achieved by taking the formal limit $\lambda\to +\infty\,, \, c_\omega \to +\infty$. This theory has been called ``Khronometric Theory'' or ``$\T$-Theory'', and it has been shown to be the infrared limit of Ho\v{r}ava--Lifshitz gravity~\cite{Jacobson:2010mx}. In Khronon theory, the form field $\uu$ can be traded for a scalar field $\T$ (called the \textit{Khronon}) defined by the relation
\begin{equation}\label{eq: u in terms of T}
    \uu := - N \, d \T \,,
\end{equation}
where $N := (- g^{-1}(d \T, d \T))^{-1/2}$ is a normalization factor and the minus sign is chosen such that $\T$ grows in the direction of the vector field $u$, namely $u^\mu \, \partial_\mu \T > 0$. The level sets of the Khronon $\T$, to which the \AE{ther} is orthogonal, provide a preferred foliation of spacetime, which can be used to simplify the canonical analysis of the theory.

Since the unit-norm constraint is implicit in the definition Eq.~\eqref{eq: u in terms of T} of $\uu$, the resulting action is nothing but the Einstein--\AE{ther} action without the Lagrange multiplier, and with $\uu$ being a shorthand for the normalized gradient of $\T$ as in Eq.~\eqref{eq: u in terms of T},
\begin{equation}
    \S_\T[g,\T] = \dfrac{1}{16\pi \, G_\mathrm{N}} \, \int_\M \boldsymbol{\epsilon}_\M \, \Bigl[ R - 2 \, \Lambda + \L_\uu \Bigr] \,.
\end{equation}
On the other hand, $\uu$ is not necessarily hypersurface-orthogonal in Einstein--\AE{ther} theory. According to Frobenius' theorem, this is true if and only if
\begin{equation}\label{eq: twist}
    \uu \wedge d \uu = 0 \,.
\end{equation}
It is easy to check that this condition is automatically satisfied if $\uu$ has the form Eq.~\eqref{eq: u in terms of T}.
In~\cite{Bhattacharyya:2014kta} it has been shown that the space of solutions of twist-free Einstein--\AE{ther} that contain a black hole region is a proper subset of the black hole solution space of Ho\v{r}ava--Lifshitz gravity, while the two solution spaces are equivalent when restricted to static and spherically symmetric configurations in dimension $n = 4$.

\subsection{Universal Horizons}\label{subsec:UH}

In Lorentz-violating theories of gravity such as Einstein–\AE{ther} theory, the causal structure of spacetime differs fundamentally from that of General Relativity. The presence of the \AE{ther} field selects a preferred local inertial frame and therefore breaks local Lorentz invariance. As a consequence, causality is no longer governed by a single metric light cone. Instead, field perturbations decompose into several independent modes, each propagating with its own characteristic speed relative to the \AE{ther} frame. The causal structure is therefore described by a set of distinct light cones rather than a unique one.

In Einstein–\AE{ther} theory there are three propagating modes: a spin-2 tensor, a spin-1 vector, and a spin-0 scalar mode. The tensor mode corresponds to the usual graviton. Their propagation speeds are given by
\begin{equation}\label{eq: mode speed}
    \begin{cases}
        c^2_{spin \, 2} = \dfrac{1}{(1 - c_{13})} \, c^2 \,, \\
        c^2_{spin \, 1} = \dfrac{2 \, c_1 - c_1^2 + c_3^2}{2 \, c_{14}(1 - c_{13})} \, c^2 \,, \\
        c^2_{spin \, 0} = \dfrac{c_{123}}{c_{14}} \, \dfrac{2 - c_{14}}{[2 \, (1 + c_2)^2 - c_{123} \, (1 + c_2 + c_{123})]} \, c^2 \,,
    \end{cases}
\end{equation}
where we explicitly showed factors of $c$ (the speed of light in vacuum, everywhere else equal to $1$). Since these propagation speeds can be superluminal, signals associated with these modes may in principle escape from the usual Killing horizon. The existence of a genuine signal-trapping surface is therefore far from obvious.

Remarkably, black hole solutions can nevertheless exist in these theories. In particular, a special causal boundary known as the \emph{universal horizon} can arise and act as the ultimate barrier for signals, even those that propagate arbitrarily fast. The existence of this structure is closely related to the hypersurface-orthogonality of the \AE{ther} flow, which holds in many relevant solutions.

When this condition holds, the \AE{ther} field $\uu$ can be written locally as the normalized gradient of a scalar field, the Khronon of the previous Subsection.\footnote{Note that the requirement of hypersurface-orthogonality does not need to hold everywhere, as long as Eq.~\eqref{eq: twist} holds at the would-be horizon~\cite{DelPorro:2025zyv}.} Spacetime is therefore locally foliated by constant Khronon hypersurfaces. These leaves define surfaces of simultaneity in the preferred frame, and signals must always propagate toward increasing values of the Khronon field.

In typical situations these surfaces of simultaneity extend to spatial infinity $i^0$.
However, sufficiently strong gravitational fields can cause one particular leaf of the foliation to become compact. Since all signals move toward the future in terms of increasing Khronon, once they cross such a surface they cannot return to the exterior region, regardless of how large their propagation speed may become. The first compact constant-Khronon hypersurface is called the \emph{universal horizon}. Its boundary is future time-like infinity $i^+$.

The universal horizon can be characterized geometrically through the conditions
\begin{equation}\label{eq:UHcond}
    (\iota_\xi \uu) \eval_{r_\textsc{uh}} = 0 \,, \quad \mathrm{and} \quad (\iota_\xi \aa) \eval_{r_\textsc{uh}} \neq 0 \,,
\end{equation}
where again $\xi$ is the Killing vector field associated with stationarity and $a^\mu = u^\nu \, \nabla_\nu u^\mu$ the \AE{ther} acceleration. The first condition states that the \AE{ther} becomes orthogonal to the Killing vector at the universal horizon: since $u^\mu$ is everywhere time-like, this can only occur in a region where $\xi$ is space-like, which implies that the universal horizon must lie inside the Killing horizon. The second condition ensures that the horizon is non-degenerate and that the corresponding surface gravity is non-vanishing. Indeed, the surface gravity of the universal horizon can be defined as
\begin{equation}\label{eq: kappa UH}
    \kappa_\textsc{uh} = \frac{1}{2} \, (\iota_\xi \aa) \eval_{r_\textsc{uh}} \,,
\end{equation}
and interpreted in terms of the peeling behavior of constant-Khronon surfaces near the horizon~\cite{Cropp:2013sea}. 

In the presence of an \AE{ther} field, it is possible to construct covariant field equations describing superluminal matter fields with modified dispersion relations 
\begin{equation}\label{eq: modified dispersion relation}
    \omega^2 = k^2 + \sum\limits_{j \, = \, 2}^n \dfrac{\beta_{2j}}{\Lambda^{2j - 2}_\textsc{uv}} \, k^{2j} \,,
\end{equation}
where $\omega$ and $k$ are respectively the frequency and the modulus of the wave vector of the superluminal mode in an adapted frame, $\beta_{2j}$ are coefficients depending on the ultraviolet completion of the theory, and $\Lambda_{\textsc{uv}}$ is an ultraviolet cut-off. It was recently proven that such superluminal matter field radiate from the universal horizon with a Hawking temperature $T_\textsc{uh} = \kappa_\textsc{uh}/\pi$ which is afterwards partially reprocessed at the Killing horizon for low energy modes~\cite{DelPorro:2023lbv}.

Universal horizons were first identified in static asymptotically flat black hole solutions of Einstein–\AE{ther} theory~\cite{Blas:2011ni,Barausse:2011pu,Berglund:2012fk}. Similar structures have subsequently been found in several other classes of solutions, including
\begin{itemize}
    \item four-dimensional static asymptotically (Anti-) de Sitter spacetimes in the sector $c_{14} = 0$~\cite{Bhattacharyya:2014kta};
    \item four-dimensional stationary asymptotically flat slowly rotating black holes in the sector $c_{14} = 0$~\cite{Barausse:2012qh};
    \item three-dimensional rotating solutions in the sector $c_{14} = 0$~\cite{Sotiriou:2014gna};
    \item three- and four-dimensional charged static black holes in both sectors $c_{14} = 0$ and $c_{123} = 0$~\cite{Ding:2016wcf, Ding:2015kba, Ding:2015fyx}.
\end{itemize}
In many cases of interest, static spherically symmetric solutions can be parametrized as
\begin{equation}\label{eq: parametrization g and u}
    \begin{cases}
        ds^2 = - e(r) \, dt^2 + \dfrac{dr^2}{e(r)} + r^2 \, d\Omega_2^2 \,, \\
        \uu = (\iota_\xi \uu) \, dt - \dfrac{\iota_\xi \ss}{e(r)} \, dr \,,
    \end{cases}
\end{equation}
where $\xi = \partial_t$ is a Killing vector of both the metric and the \AE{ther} field (namely $\mathcal{L}_\xi g = 0 = \mathcal{L}_\xi \uu$) and $\ss$ is a normalized 1-form orthogonal to $\uu$ that spans the radial plane together with it,
\begin{equation}
    \ss = (\iota_\xi \ss) \, dt - \dfrac{\iota_\xi \uu}{e(r)} \, dr \,.
\end{equation}

In the \AE{ther} frame, defined by $\uu \sim d \tau$, the metric takes the form
\begin{equation}
    ds^2 = (\iota_\xi \uu)^2 \, d\tau^2 + \frac{(\iota_\xi \uu)^2}{e^2(r)} \, d\rho^2 + r^2 \, d\Omega_2^2 \,,
\end{equation}
for a suitably redefined spatial coordinate $\rho$ (see~\cite{DelPorro:2022kkh} for the explicit construction). This expression makes it clear that, besides the usual Killing horizons defined by $e(r) = 0$, the metric also becomes degenerate at radii where $(\iota_\xi \uu) = 0$. These loci correspond precisely to universal horizons. See Fig.~\ref{fig:KhronoSlices}.

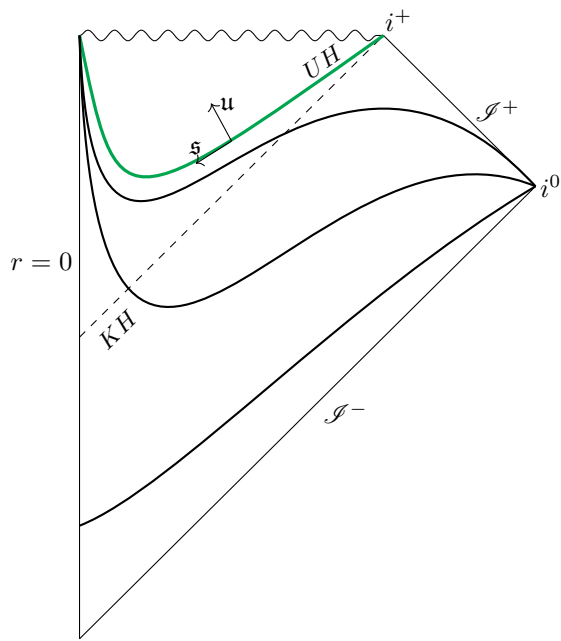
\begin{figure}[H]
    \centering
        \scalebox{1}{
        \begin{tikzpicture}
            \draw (-2,2) -- (-2,-6);
            \draw (-2,-6) -- (4,0);
            \draw (4,0) -- (2,2);
            \draw[dashed] (-2,-2) -- (2,2);
            \draw[decorate, decoration={snake, amplitude=0.7mm}] (-2,2) -- (2,2);
            
            \draw[Green, very thick] (-2,2) .. controls (-1.5,-0.5) .. (2,2);
            \draw[thick] (-2,2) .. controls (-1.8,-3.5) and (0.8,3.5) .. (4,0);
            \draw[thick] (-2,2) .. controls (-1.8,-5.5) and (1,1.25) .. (4,0);
            \draw[thick] (-2,-4.5) .. controls (-0.75,-4) and (1.5,-1.5) .. (4,0);
                
            \draw (1.5,-3) node {{$\mathscr{I}^-$}};
            \draw (3.5, 1) node {{$\mathscr{I}^+$}};
            \draw (4.2,0) node {{$i^0$}};
            \draw (2.2,2.2) node {{$i^+$}};
            \draw (-2.5,-1) node {{$r = 0$}};
                
            \node[rotate=45] at (-1.5,-1.9) {\small $KH$};
            \node[rotate=35] at (1.2,1.65) {\small$UH$};
                
            \draw[->] (0,0.6) -- (-0.27,1.1) node[anchor=west] {{$\uu$}};
            \draw[->] (0,0.6) -- (-0.47,0.3) node[anchor=south] {{$\ss$}};
        \end{tikzpicture}
        }
        \caption{\small Carter--Penrose diagram for a Schwarzschild-like solution featuring a universal horizon and three constant-Khronon slices.}
        \label{fig:KhronoSlices}
    \end{figure}

\section{Covariant Phase Space Analysis of Einstein--\AE{ther} theory}\label{sec: CPS for EA}

In this section, we apply the covariant phase space formalism that we reviewed in Section~\ref{sec: CPS Formalism} to the solutions of Einstein--\AE{ther} theory. This analysis constitutes the main contribution of this work.

Let us consider again the action of the theory
\begin{equation}\label{eq: EAE action}
    \S_{\textnormal{\AE}}[g, \uu] = \dfrac{1}{16\pi\, G_\mathrm{N}} \, \int_\M \boldsymbol{\epsilon}_\M \Bigl[ R - 2 \, \Lambda + \mathcal{L}_\uu + \lambda \, \Bigl( \uu_\mu \, \uu_\nu \, g^{\mu\nu} + 1 \Bigr) \Bigl] \,,
\end{equation}
with $\mathcal{L}_\uu$ given in Eq.~\eqref{eq: Aether Lagrangian}.

As a first step to apply the algorithm of covariant phase space, we need to compute the variation of this action:~\footnote{Notice that $\dv u^\alpha$ is the variation of the vector field dual to the \AE{ther}. When needed, we will use the boldface character to denote the variation of the form field. Moreover, the tensor $\dv g^{\mu\nu}$ is the variation of the metric $g$ with both indices raised and \textit{not} the variation of the inverse metric $g^{\mu\nu}$.}
\begin{equation}\label{eq: variation S_EA}
    \begin{split}
            \dv \S_\textnormal{\AE} = & \dfrac{1}{16\pi \, G_\mathrm{N}} \, \int_\M \epsilon_\M \Bigl[ \Bigl( G_{\mu\nu} + g_{\mu\nu} \, \Lambda - T^{\textnormal{\AE}}_{\mu\nu} \Bigr) \, \dv g^{\mu\nu} + 2 \, \Bigl( \textnormal{\AE}_\mu + \lambda \, \uu_\mu \Bigr) \, \dv u^\mu + \Bigl( \uu_\mu \, \uu_\nu \, g^{\mu\nu} + 1 \Bigr) \, \dv \lambda \Bigr] + \\
            & - \dfrac{1}{16\pi \, G_\mathrm{N}} \, \int_\M \epsilon_\M \nabla_\mu \left[ g^{\alpha\beta} \, \nabla^\mu \dv g_{\alpha\beta} - g^{\mu\beta} \, \nabla^\alpha \dv g_{\alpha\beta} + X^\mu {}_{\alpha\beta} \, \dv g^{\alpha\beta} + 2 \, Y^\mu {}_\alpha \, \dv u^\alpha \right] \,.
    \end{split}
\end{equation}
Let us analyze the various terms that appear in this expression. First of all, for convenience, we defined the $X$ and $Y$ tensors
\begin{subequations}
    \begin{gather}
        Y^\mu {}_\rho := Z^{\mu\nu} {}_{\rho\sigma} \, \nabla_\nu u^\sigma\,,\\
        X^\mu {}_{\alpha\beta} := u^\mu \, Y_{(\alpha\beta)} + \uu_{(\alpha} \, Y^\mu {}_{\beta)} - \uu_{(\beta} \, Y_{\alpha)} {}^\mu \,,
        \end{gather}
\end{subequations}
as they will appear frequently in our computations.
In the first line of Eq.~\eqref{eq: variation S_EA}, we also introduced
\begin{subequations}
    \begin{gather}
        \textnormal{\AE}_\mu := c_4 \, \aa_\rho \, \nabla_\mu u^\rho + \nabla_\rho Y^\rho {}_{\mu} \,,\\
        T^\textnormal{\AE}_{\mu\nu} := c_1 \, (\nabla_\mu \uu_\rho \,  \nabla_\nu u^\rho - \nabla_\rho \uu_\mu \, \nabla^\rho \uu_\nu) + c_4 \, \aa_\mu \, \aa_\nu + \nabla_\rho  X^\rho {}_{\mu\nu} + \lambda \, \uu_\mu \, \uu_\nu + \dfrac{1}{2} \, g_{\mu\nu} \, \L_\uu\,,
        \end{gather}
\end{subequations}
with the former entering in the \AE{ther} field equations and the latter being the \AE{ther} stress-energy tensor.

From the bulk piece of Eq.~\eqref{eq: variation S_EA}, we read the equations of motion of the theory
\begin{equation}\label{eq: eom Einstein-Aether}
    \begin{cases}
        & G_{\mu\nu} + g_{\mu\nu} \, \Lambda = T^{\textnormal{\AE}}_{\mu\nu} \,;\\
        & \textnormal{\AE}_\mu + \lambda \, \uu_\mu = 0 \,;\\
        & \uu_\mu \, u^\mu = - 1 \,.
    \end{cases}
\end{equation}
Since $\uu$ is normalized, we can remove $\lambda$ from the second equation and write
\begin{equation}\label{eq: eom EA without lambda}
    \begin{cases}
        & G_{\mu\nu} + g_{\mu\nu} \, \Lambda = T^{\textnormal{\AE}}_{\mu\nu} \,;\\
        & \underleftarrow{\textnormal{\AE}}_\mu = 0\,,
    \end{cases}
\end{equation}
where the left-pointing arrow under a tensor $\underleftarrow{(\_)}$ means that its free indices are projected orthogonally to the \AE{ther} $1$-form with the projector operator $p^\mu_{\nu} = \delta^\mu_{\nu} + u^\mu \, \uu_\nu$ that we introduced in the hydrodynamical description (see Eq.~\eqref{eq: hydro quantities}).
On-shell of the equations of motion for the Lagrange multiplier $\lambda$, we can rewrite the \AE{ther} stress-energy tensor as
\begin{equation}\label{eq: T EA}
    T^{\textnormal{\AE}}_{\mu\nu} = Y_\mu {}^\rho \, \nabla_\nu \uu_\rho - Y^\rho {}_\nu \, \nabla_\rho \uu_\mu - \uu_\mu \, \nabla_\rho Y^\rho {}_\nu + \uu_\mu \, \underleftarrow{\textnormal{\AE}}_\nu + \nabla_\rho X^\rho {}_{\mu\nu} + \dfrac{1}{2} \, g_{\mu\nu} \, \L_\uu \,.
\end{equation}
This will be useful in constructing explicitly the diffeomorphism charge associated to the \AE{ther} as we do in Appendix \ref{app: Noecharge}, because charges are evaluated on-shell and in this form it is manifest now that one term vanishes when going on-shell of the equations of motion of the \AE{ther}.

From the boundary piece of the variation of the action Eq.~\eqref{eq: variation S_EA}, we can read off the symplectic potential of the theory.
We recognize that the first two terms give the boundary symplectic potential of pure General Relativity
\begin{equation}
    \TTh_\textsc{gr} = - \dfrac{1}{16\pi \, G_\mathrm{N}} \, \pqty{g_{\alpha\beta} \, \nabla^\mu \dv g^{\alpha\beta} - \nabla_\alpha \dv g^{\mu\alpha}} \, \boldsymbol{\epsilon}_\mu \,,
\end{equation}
so that we can identify the last two as the \AE{ther} contribution
\begin{equation}
    \TTh_\textnormal{\AE} := - \dfrac{1}{16\pi \, G_\mathrm{N}} \, \left( X^{\mu\alpha\beta} \, \dv g_{\alpha\beta} + 2 \, Y^\mu {}_\alpha \, \dv u^\alpha \right) \, \boldsymbol{\epsilon}_\mu \,.
\end{equation}
The total symplectic potential of the theory can thus be written as
\begin{equation}\label{eq: total symplectic potential}
    \TTh_\textsc{tot} = \TTh_\textsc{gr} + \TTh_\textnormal{\AE} \,,
\end{equation}
where $\boldsymbol{\epsilon}_\mu = \iota_{\partial_\mu} \boldsymbol{\epsilon}$.

\subsection{Diffeomorphism Charges for Einstein--\AE{ther}}\label{subsec: diffeo charges EAE}

With these ingredients, we can construct the diffeomorphism charge as defined in Eq.~\eqref{eq: Noecharge}. A more detailed derivation is included in Appendix~\ref{app: Noecharge}. The result is
\begin{equation}
    \begin{split}
        \qq_\textsc{tot}[\vv_\xi] & = - \dfrac{1}{8\pi \, G_\mathrm{N}} \, \pqty{\nabla^{[\mu} \xi^{\nu]} + u^{[\mu} \, Y^{\nu]} {}_{\rho} \, \xi^\rho + u^{[\mu} \, Y_\rho {}^{\nu]} \, \xi^\rho + Y^{[\mu\nu]} \, \uu_\rho \, \xi^\rho} \,  \boldsymbol{\epsilon}_{\mu\nu} = \\
        & = \qq_K[\vv_\xi] + \qq_{\textnormal{\AE}}[\vv_\xi] \,,
    \end{split}
\end{equation}
where $\boldsymbol{\epsilon}_{\mu\nu} = \iota_{\partial_\mu} \iota_{\partial_\nu} \boldsymbol{\epsilon}_\M$, and we identified the contribution coming from the metric field, which is the usual \emph{Komar charge density} of General Relativity,
\begin{equation}\label{eq: Komar charge}
    \qq_K[\vv_\xi] := -\dfrac{1}{16\pi \, G_\mathrm{N}} \, \Bigl( \nabla^\mu\xi^\nu - \nabla^\nu\xi^\mu \Bigr) \, \boldsymbol{\epsilon}_{\mu\nu} \,,
\end{equation}
and that coming from the \AE{ther} field
\begin{equation}\label{eq: carica dell'etere}
    \qq_{\textnormal{\AE}}[\vv_\xi] := - \dfrac{1}{8\pi \, G_\mathrm{N}} \, \pqty{u^{[\mu} \, Y^{\nu]} {}_{\rho} \, \xi^\rho + u^{[\mu} \, Y_\rho {}^{\nu]} \, \xi^\rho + Y^{[\mu\nu]} \, \uu_\rho \, \xi^\rho} \, \boldsymbol{\epsilon}_{\mu\nu} \,.
\end{equation}
The presence of two pieces already suggests the existence of an independent \AE ther contribution to the entropy, as defined in Eq.~\eqref{eq: entropy and temperature}. In the following we will provide a precise derivation of this claim.

Quite interestingly, we can see that, just as
\begin{equation}\label{eq: Komar current}
    \nabla_\nu \, q_K^{\mu\nu} \, = - \dfrac{1}{16\pi \, G_\mathrm{N}} \, \bigl( 2 \, G^{\mu} {}_\nu + R \, \delta^\mu_\nu \bigr) \, \xi^\nu \,,
\end{equation}
as in Komar's original construction~\cite{PhysRev.113.934}, it similarly holds that
\begin{equation}\label{eq: aether current}
    \nabla_\nu \, q_{\textnormal{\AE}}^{\mu\nu} = \dfrac{1}{16\pi \, G_\mathrm{N}} \pqty{2 \, T_{\textnormal{\AE}}{}^{\mu}{}_\nu - \L_\uu \, \delta^\mu_\nu} \, \xi^\nu \,,
\end{equation}
upon the sole imposition of the equations of motion for the \AE{ther}, $\underleftarrow{\textnormal{\AE}}_\mu = 0$ (since we do not need to go fully on-shell, we did not use $\doteq$ in this equality). The analogy between these two equations is very intriguing. 
The conservation equation Eq.~\eqref{eq: Komar current} is a consequence of Bianchi identities, which means that it follows from the constraints associated with diffeomorphism invariance. It thus appears that the equations of motion for the \AE{ther} --- at least those that are sufficient to guarantee Eq.~\eqref{eq: aether current} --- can be interpreted as the diffeomorphism constraints for the \AE{ther} contribution of the theory.

Notice that, from Eq.~\eqref{eq: aether current}, it follows the conservation of the right hand side
\begin{equation}
    \nabla_\mu \left[ \dfrac{1}{16\pi G_\mathrm{N}} \, \left(2 \, T_{\textnormal{\AE}} {}^\mu {}_\nu - \L_\uu \, \delta^\mu_\nu \right) \xi^\nu \right] = \nabla_\mu \nabla_\nu \, q_{\textnormal{\AE}}^{\mu\nu} = 0  \,.
\end{equation}
This last equality will be instrumental for the physical interpretation of the \AE{ther} contributions to the first law. Indeed, this conserved current will be related to the flux of Killing energy associated with the \AE{ther} across the horizon, as we will show in greater detail in Subsection~\ref{subsec: 1st law}

\subsection{Boundary Conditions for Einstein--\AE{ther} Gravity from Khronometric Theory}

As thoroughly explained in Section~\ref{dynamics}, a proper specification of the dynamics in a manifold with boundaries requires the imposition of boundary conditions.  The standard choice for those boundary conditions in $\T$-theory is Dirichlet boundary conditions for the metric --- i.e.~$\dv i^\ast_\Gamma g_{\mu\nu} = 0$ --- and Neumann boundary conditions for the Khronon --- i.e.~$\underleftarrow{\nabla_\mu \dv \T} = 0$. 
Since the hypersurface-orthogonal sector of Einstein--\AE{ther} theory is equivalent to $\T$-theory, we take the boundary conditions of the latter as a guiding principle.

Notice that the two theories do not have the same dynamics, as Einstein--\AE{ther} theory admits solutions with non-vanishing twist. We will show later that the boundary conditions we obtain are nonetheless enough to ensure compatibility with the variational principle.

If we compute the variation of the \AE{ther} 1-form, locally written in terms of the Khronon as in Eq.~\eqref{eq: u in terms of T}, we get
\begin{equation}
    \dv \uu_\mu = - \dv N \, \nabla_\mu \T - N \, \dv \nabla_\mu \T \,.
\end{equation}
From the normalization condition, the variation of $N$ reads
\begin{equation}
   \dv N = - \dfrac{1}{2} \, N^3 \, \dv g^{\mu\nu} \, \nabla_\mu\T \, \nabla_\nu\T + N^3 \, g^{\mu\nu} \, \nabla_\mu\T \, \dv\nabla_\nu\T \,,
\end{equation}
so that
\begin{equation}\label{eq: aether form variation}
    \dv\uu_\mu = -\frac{1}{2} \, \dv g^{\alpha\beta} \, \uu_a \, \uu_\beta \, \uu_\mu - N \, \underleftarrow{\nabla_\mu\dv\T} \,.
\end{equation}
Using the boundary conditions from $\T$-theory, it is easy to see that the pull-back of this expression vanishes if the vector field $u^\mu$ is tangent at the boundary. 
When the boundary is at infinity, it makes sense to take $\uu$ to be aligned to the time direction of the asymptotic observer, which automatically makes it tangent. For this reason, we shall call
\begin{equation}
    g^{-1}(\nn, \uu)\eval_\Gamma = 0
\end{equation}
the \textit{alignment condition}. Albeit the formalism is suited to work with finite-distance boundaries as well, where one should in principle entail the possibility that the \AE{ther} is not orthogonal to the normal form defining the boundary (non-alignment condition), we shall assume alignment in this work.

Considering Eq.~\eqref{eq: aether form variation}, it follows that, when the \AE{ther} is aligned, a suitable choice of boundary conditions (which is at least compatible with the aforementioned boundary conditions in $\T$-theory) is to enforce Dirichlet conditions for both $g$ and $\uu$, that is
\begin{equation}\label{eq: BC}
    \boxed{\dv q_{ab} = 0, \quad \mathrm{and} \quad \dv \uu_a = 0} \,,
\end{equation}
where $q_{ab}$ and $\uu_a$ are respectively the pullback of the metric and the \AE{ther} 1-form to the boundary.

Given the boundary embedding, we can also construct the \emph{extrinsic curvature} of the boundary~\footnote{For simplicity, we use the same symbol $q$ both for the induced metric to $\Gamma$ and the projector.}
\begin{equation}\label{eq: extrinsic curvature}
    K^{(\Gamma)}_{\mu\nu} := \dfrac{1}{2} \, \pounds_n q_{\mu\nu} =  q_{(\mu}^{\alpha} \, \nabla_\alpha \, \nn_{\nu)} \,,
\end{equation}
which, together with $q_{ab}$ and $\uu_a$, will allow us to construct a suitable boundary Lagrangian using only objects that are intrinsic to the boundary $\Gamma$.

Let us consider the total symplectic potential Eq.~\eqref{eq: total symplectic potential} $\TTh_\textsc{tot} = \TTh_\textsc{gr} + \TTh_{\textnormal{\AE}}$ and pull-back it to the boundary. We know that the symplectic potential of General Relativity can be written as~\footnote{The reader might find the following formulas useful in rederiving the results of this Subsection:
\begin{align*}
        &\dv \nn_\mu = \dfrac{1}{2} \, n^\alpha \, (\delta^\beta _{\mu} - q^\beta _{\mu}) \, \dv g_{\alpha\beta} \,,\\
        &\dv\Gamma_{\alpha\beta} {}^{\mu} = \dfrac{1}{2} \, g^{\mu\nu} \, (\nabla_\alpha \, \dv g_{\nu\beta} + \nabla_\beta \dv g_{\alpha_\nu} - \nabla_\nu \dv g_{\alpha\beta}) \,,\\
        &\dv K^{(\Gamma)} = - \dfrac{1}{2} \, K^{(\Gamma)\mu\nu} \, \dv g_{\mu\nu} + \dfrac{1}{2} \, g^{\mu\nu} \, n^\lambda \, \nabla_\lambda \dv g_{\mu\nu} - \dfrac{1}{2} \, n^\alpha \, \nabla^\beta \, \dv g_{\alpha\beta} + \dfrac{1}{2} \, D_a c^a \\
        &\dv \boldsymbol{\epsilon}_\Gamma = \dfrac{1}{2} \, q^{\mu\nu} \, \dv g_{\mu\nu} \, \boldsymbol{\epsilon}_\Gamma \,,
\end{align*}
}
\begin{equation}\label{eq: ambiguity resolution GR}
    i_\Gamma^* \TTh_\textsc{gr} = \BB_\textsc{gr} - \dv \eell_\textsc{gr} + d \tth_\textsc{gr}
\end{equation}
with
\begin{gather}
    \BB_\textsc{gr} = \dfrac{1}{16\pi \, G_\mathrm{N}} \, \Bigl(K^{(\Gamma)} q^{\mu\nu} - K^{(\Gamma)\mu\nu}\Bigr) \,\dv q_{\mu\nu} \, \boldsymbol{\epsilon}_\Gamma \,,\\
    \eell_\textsc{gr} = \ell_\Gamma \, \boldsymbol{\epsilon}_\Gamma =  \dfrac{1}{8\pi \, G_\mathrm{N}} \, K^{(\Gamma)} \, \boldsymbol{\epsilon}_\Gamma \,,\\
    \tth_\textsc{gr} = \dfrac{1}{16\pi \, G_\mathrm{N}} \, c^\mu \, \ttau_\mu \, \boldsymbol{\epsilon}_{\partial\Sigma} \,,
\end{gather}
where $c^\mu := - q^{\mu\nu} \, n^\alpha \, \dv q_{\nu\alpha}$, and $K^{(\Gamma)} := q^{\mu\nu} \, K^{(\Gamma)}_{\mu\nu} = g^{\mu\nu} \, K^{(\Gamma)}_{\mu\nu}$ is the trace of the extrinsic curvature.
With few algebraic manipulations, we can show that
\begin{equation}
        \nn_\mu \, \Theta_{\textnormal{\AE}}^\mu = - \frac{1}{16\pi\,  G_\mathrm{N}} \, \Bigl[ \nn_\mu \, \Bigl( u^\mu \, Y^{\alpha\beta} - 2 \, u^\alpha \, Y^{(\mu\beta)} \Bigr) \, \dv g_{\alpha\beta} + 2 \, \nn_\mu \, Y^{\mu\nu} \, \dv \uu_\nu \Bigr] \,;
\end{equation}
\begin{equation}\label{eq:aether potential pull-back}
            \implies \quad \boxed{i^\ast_\Gamma \TTh_\textnormal{\AE} = - \frac{1}{16\pi\,  G_\mathrm{N}} \, \left[ P^{ab} \, \dv q_{ab} + 2 \, W^a \, \dv \uu_a + 2 \, W_\perp \, \dv (\iota_n \uu) \right]} \,,
\end{equation}
where we defined
\begin{equation}
    P^{\alpha \beta} := \nn_\mu \, \Bigl( u^\mu \, Y^{\alpha\beta} - 2 \, u^\alpha \, Y^{(\mu\beta)} \Bigr) \,, \quad W^\alpha := \nn_\mu \, Y^{\mu\alpha} \,, \quad W_\perp := \nn_\alpha \, W^{\alpha} \,.
\end{equation}
Notice that the normal components of these tensors do not enter the boundary symplectic potential Eq.~\eqref{eq:aether potential pull-back}, because of the following cancellations 
\begin{equation}
    n^\mu \, \dv \uu_\mu = \dv (n^\mu \, \uu_\mu) + u^\mu \, (\dv \nn_\mu - \mathbf{c}_\mu) \,,
\end{equation}
together with
\begin{equation}
    P^{(\alpha\beta)} \, \nn_\alpha \, \mathbf{c}_\beta + \frac{1}{2} \, W_\perp \, u^\mu \, \mathbf{c}_\mu = 0 \,, \quad P^{\alpha\beta} \, \nn_\alpha \, \nn_\beta + \frac{1}{2} \, (\iota_n \uu) \, W_\perp = 0 \,,
\end{equation}
as desirable for the covariance of the boundary symplectic potential~\cite{Harlow:2019yfa}.

In Eq.~\eqref{eq:aether potential pull-back} we notice that everything vanishes when we impose the Dirichlet boundary conditions~\eqref{eq: BC} and the alignment condition, meaning that the entire right hand side belongs to $\BB_\textnormal{\AE}$.
\begin{equation}
    \boxed{\BB_\textnormal{\AE} = - \frac{1}{16\pi\,  G_\mathrm{N}} \, \left[ P^{ab} \, \dv q_{ab} + 2 \, W^a \, \dv \uu_a + 2 \, W_\perp \, \dv (\iota_n \uu) \right]} \,.
\end{equation}
As a consequence, we can set $\eell_\textnormal{\AE}=0$ and $\tth_\textnormal{\AE}=0$ in the analogous of Eq.~\eqref{eq: ambiguity resolution GR} for $\TTh_\textnormal{\AE}$.

Putting everything together, we can write Eq.~\eqref{eq: dL_1} for the entire theory
\begin{equation}
    \begin{split}
        i_\Gamma^\ast \TTh_\textsc{tot} + \dv \eell_\textsc{tot} =  \BB_\textsc{tot}+ d \tth_\textsc{tot} \,,
    \end{split}
\end{equation}
where
\begin{align}
    \dv \eell_\textsc{tot}&=\dv \eell_\textsc{gr}\,,\\
    \BB_\textsc{tot}&=\BB_\textsc{gr} + \BB_\textnormal{\AE} \,,\\
    \tth_\textsc{tot}&=\tth_\textsc{gr}\,.
\end{align}
This means that the \AE{ther} does not contribute to the corner symplectic potential, as expected from its kinetic term being first order in derivatives.

\subsection{Symplectic Flux for Einstein--\AE{ther} Gravity}\label{subsec: symp flux AE}

An interesting feature of the Einstein--\AE{ther} Lagrangian is that it is almost exact on-shell, with the only term spoiling the exactness being proportional to the cosmological constant contribution
\begin{equation}\label{eq: L on-shell}
    \LL \doteq d \left[\dfrac{1}{16\pi \, G_\mathrm{N}} \, \Bigl( u^\alpha \, Y_\alpha{}^\mu - u^\mu \, Y^\alpha {}_\alpha \Bigr) \, \boldsymbol{\epsilon}_\mu \right] + \dfrac{\Lambda}{8\pi \, G_\mathrm{N}} \, \boldsymbol{\epsilon}_\M \,.
\end{equation}
This phenomenon is reminiscent of the Einstein--Hilbert Lagrangian, which vanishes on-shell when $\Lambda = 0$ and reduces to a constant (times the volume form) otherwise. 

Let us then for now focus on the case $\Lambda = 0$. From Eq.~\eqref{eq: L on-shell} we can define a $(n - 1)$-form $\AA_\textnormal{\AE}$ 
\begin{equation}\label{eq: definition of A}
    \AA_\textnormal{\AE} := \dfrac{1}{16\pi \, G_\mathrm{N}} \, \Bigl( u^\alpha \, Y_\alpha {}^\mu - u^\mu \, Y^\alpha {}_\alpha \Bigr) \, \boldsymbol{\epsilon}_\mu \,,
\end{equation}
in such a way that $\LL \doteq d \AA_\textnormal{\AE}$.
Combining Eq.~\eqref{eq: variation of L_0} and Eq.~\eqref{eq: L on-shell} we get
\begin{equation}\label{eq: d Potenziale simplettico on shell}
    d (\dv \AA_{\textnormal{\AE}}) \doteq \dv \LL \doteq d \TTh_\textsc{tot} \,,
\end{equation}
which leads to
\begin{equation}\label{eq: Potenziale simplettico on shell}
    \TTh_\textsc{tot} \doteq \dv \AA_\textnormal{\AE} + d  \boldsymbol{\zeta} \,,
\end{equation}
where $\boldsymbol{\zeta}$ is a $(n - 2, 1)$-form locally constructed out of fields.\footnote{Given Eq.~\eqref{eq: d Potenziale simplettico on shell}, one concludes that $\TTh_\textsc{tot} - \dv \AA_\textnormal{\AE}$ is a closed form. Thanks to Wald's theorem proved in~~\cite{Wald:1990mme}, we can write it as an exact one.}

The existence of $\AA_\textnormal{\AE}$ allows us to integrate the symplectic flux present in Eq.~\eqref{eq: Ext first law with Omega}
\begin{equation}\label{eq: symplectic flux AE g}
    \int_\H \iota_\xi \TTh \doteq \int_\H \iota_\xi \Bigl( \dv \AA_\textnormal{\AE} \Bigr) + \int_\H \iota_\xi d  \boldsymbol{\zeta} = \int_\H \dv \Bigl( \iota_\xi \AA_\textnormal{\AE} \Bigr) - \int_\H \iota_{\dv\xi} \AA_\textnormal{\AE} + \int_\H \pounds_\xi \boldsymbol{\zeta} \,,
\end{equation}
where, in the second equality, we used Cartan's Magic Formula and dropped the total derivative contribution $d \iota_\xi \boldsymbol{\zeta}$ as the boundary of a boundary is empty. Recalling that the first law holds when $\xi$ is a Killing vector field, the last term vanishes because of stationarity.

On the other hand, the terms involving $\AA_\textnormal{\AE}$ have to be treated with more care. If $\H$ is the bifurcation surface of a Killing horizon, we could naively say that $\iota_\xi \AA_\textnormal{\AE}$ vanishes because $\xi \to 0$ there. However, this is true only if $\AA_\textnormal{\AE}$ is regular on that surface, as we mentioned in the discussion after Eq.~\eqref{eq: entropy and temperature}. In our case, $\AA_\textnormal{\AE}$ diverges on the Killing horizon, which means that the contraction $\iota_\xi \AA_\textnormal{\AE}$ might give us a finite result.

Notice that the terms containing $\AA_\textnormal{\AE}$ in Eq.~\eqref{eq: symplectic flux AE g} appear in the same combination as a charge density (confront Eq.~\eqref{eq.dq-qd}), which suggests the following charge redefinition
\begin{equation}\label{eq: new charge h}
    \boxed{\hh_\textnormal{\AE}[\vv_\xi] := \qq_\textnormal{\AE}[\vv_\xi] - \iota_\xi \AA_\textnormal{\AE}} \,.
\end{equation}

If the cosmological constant does not vanish, $\Lambda \neq 0$, Eq.~\eqref{eq: Potenziale simplettico on shell} does not hold and has to be replaced by
\begin{equation}
    \TTh_\textsc{tot} \doteq \dv \AA_\textnormal{\AE} + \boldsymbol{\eta} \,,
\end{equation}
where $\boldsymbol{\eta}$ is a $(n - 1, 1)$-form satisfying
\begin{equation}
    d \boldsymbol{\eta} = \dv \left( \dfrac{\Lambda}{8\pi \, G_\mathrm{N}} \, \boldsymbol{\epsilon}_\M \right) \,.
\end{equation}
Given this expression, it is reasonable to expect that $\boldsymbol{\eta}$ is regular at the bifurcation surface $\H$, although we do not have a formal proof at the moment. We nevertheless verified that ignoring the $\eta$ contribution does not lead to any inconsistency in the thermodynamical description of the BTZ black hole presented in Subsection \ref{subsec: BTZ}. Assuming $\eta$ is regular, the symplectic flux can again be reduced to $\iota_\xi \AA_{\textnormal{\AE}}$ as we did in in Eq.~\eqref{eq: symplectic flux AE g}.
% Given a Killing vector field $\xi$ such that $\pounds_\xi\boldsymbol{\eta}=0$, we can prove that
% %
% \begin{equation}
%     d (\iota_\xi \boldsymbol{\eta})=-\iota_\xi \dv \left(\dfrac{\Lambda}{8\pi \, G_\mathrm{N}}\, \boldsymbol{\epsilon}_\M\right)=\dv \left(-\dfrac{\Lambda}{8\pi \, G_\mathrm{N}}\, \iota_\xi \boldsymbol{\epsilon}_\M\right)-\iota_{\dv \xi}\left(-\dfrac{\Lambda}{8\pi \, G_\mathrm{N}}\, \boldsymbol{\epsilon}_\M\right)\,.
% \end{equation}
% %
% The last term gives therefore a bulk contribution
% %
% \begin{equation}
%     \int_\H \iota_\xi \boldsymbol{\eta} = \int_\Sigma (\dv \bb[\vv_\xi] - \bb[\vv_{\dv\xi}]) -\int_{\S_\R} \iota_\xi \boldsymbol{\eta}
% \end{equation}
% %
% where $\bb[\vv_\xi] = - \frac{\Lambda}{8\pi \, G_N}\iota_\xi\epsilon_\M$.

Now that we have computed all the relevant quantities, we can proceed in computing the first law for a spacetime endowed with a horizon in Einstein--\AE{ther} theory.

\section{First Law of Einstein--\AE{ther} Black Hole Thermodynamics}
\label{sec: first law for EAE}

In this section, we are going to specialize the general form of the first law given in Eq.~\eqref{eq:1stlawnoomega} with the various terms computed in Section \ref{sec: CPS for EA}. A key point of this section will be the choice of surface $\Sigma$ on which to compute the integral of currents.

\subsection{Finite Speed Modes and their Disformal Frame}\label{subsec: finite speed modes}

As we already mentioned in Subsection \ref{subsec:UH}, black holes in Einstein--\AE{ther} theory generically exhibit a richer causal structure than in General Relativity. Indeed, since matter can be generically coupled to the \AE{ther}, perturbations can propagate at arbitrarily high speeds and any eventual Killing horizon does not represent an actual trapping horizon for those that move faster than light.

In the presence of modes with arbitrarily high propagation speeds, initial data specified on a given surface of simultaneity can propagate essentially instantaneously throughout spacetime. Consequently, any surface that is tilted with respect to the constant–khronon foliation fails to be achronal and is therefore not a valid initial data surface.

Already in the vacuum theory, the presence of three modes with (parametrically) arbitrary propagation speeds implies that the causal structure is not governed by the metric light cones, but rather by those associated with the fastest mode. For this reason, we introduce a probe mode that is taken to propagate with speed $c_s$, larger than that of any other mode in the theory. We dub this the $s$-mode. By construction, the light cones of the 
$s$-mode determine the causal structure of the spacetime, and any causal horizon of the $s$-mode will also trap all other signals.

Finding the explicit position of the causal horizon for the $s$-mode is not trivial. If the $s$-mode is superluminal, we can say for sure that it will ``peel'' from an \emph{effective horizon} located inside the Killing one, but determining the precise location of the former would require integrating the $s$-mode trajectory. In Appendix \ref{app: position of the causal horizon}, we use the idea of peeling to show how the horizon of the $s$-mode can be determined in a simple way in the case of a stationary and spherically-symmetric black hole. 
Consequently, as we will show in Appendix~\ref{app: ray tracing}, their associated temperature is controlled by the \emph{peeling surface gravity}~\cite{Cropp:2013zxi}, a quantity governing the exponential divergence of rays near the effective horizon.

Besides the method we discussed earlier, based on the peeling of rays, there is another one that is widely used in the Einstein--\AE{ther} literature (see e.g.~\cite{Foster:2005ec, Barausse:2012ny}): applying a \emph{disformal transformation}. We will exploit the fact that the disformal transformation maps the causal horizon of the $s$-mode in the original metric to the Killing one in the transformed metric. 

Whereas a conformal transformation scales the metric by an overall factor, a disformal transformation modifies the metric only anisotropically. For the scope of this work, we consider only changes in one direction, which we take to be along the \AE{ther} field. We thus consider only disformal transformations of the form
\begin{equation}
    g_{\mu\nu} \mapsto \overline{g}_{\mu\nu} = g_{\mu\nu} + B \, \uu_\mu \, \uu_\nu \,,
\end{equation}
where $B < 1$ is called the \emph{disformal factor}. In order to maintain the normalization of the \AE{ther} 1-form, we need to transform that as well 
\begin{equation}
    \uu_\mu \mapsto \overline{\uu}_\mu = \sqrt{1 - B} \, \uu_\mu \,.
\end{equation}
It is easy to see that, if $\xi$ was a Killing vector field for $g$ and $\uu$, it is also a Killing vector field for the transformed quantity $\overline{g}$ and $\overline{\uu}$.

Among all disformal transformations, we can use the one with $B = 1 - c_s^2$ to make the velocity vector of the $s$-mode null with respect to the disformal metric $\overline{g}_{\mu\nu}$ (see Appendix~\ref{app: scelta di B} for a proof).
The causal horizon of the $s$-mode is then a null hypersurface which coincide with the standard Killing horizon of $\overline{g}_{\mu\nu}$.

For the rest of this work, we are going to consider the disformal transformation
\begin{equation}\label{eq: disformal gbar to g}
    \begin{cases}
        \overline{g}_{\mu\nu} = g_{\mu\nu} + (1 - c_s^2) \, \uu_\mu \, \uu_\nu \,,\\
        \overline{\uu}_\mu = c_s \, \uu_\mu \,,
    \end{cases}
\end{equation}
We will call the pair $(\overline{g}, \overline{\uu})$ the \textit{disformal frame}, contrasted with the original frame $(g, \uu)$. For convenience, we also present the inverse relation, which gives the original frame in terms of the disformal one
\begin{equation}\label{eq: disformal g to gbar}
    \begin{cases}
        g_{\mu\nu} = \overline{g}_{\mu\nu} + \pqty{1 - \dfrac{1}{c_s^2}} \, \overline{\uu}_\mu \, \overline{\uu}_\nu \,,\\
        \uu_\mu = \dfrac{1}{c_s} \, \overline{\uu}_\mu \,.
    \end{cases}
\end{equation}
Notice that the inverse transformation is fairly simple: the $s$-mode squared speed $c_s^2 = 1 - B$ becomes its reciprocal.

\subsection{Disformal Symmetry}

One of the most interesting properties of Einstein--\AE{ther} theory is how it changes under a disformal transformation.

A tedious but straightforward evaluation of $\S_\textnormal{\AE}[g(\overline{g}, \overline{\uu}), \uu(\overline{g}, \overline{\uu}); c]$ with Eq.~\eqref{eq: disformal g to gbar} (where we made explicit the parametric dependence on the coupling constants, including $G_\mathrm{N}$)\footnote{The symbol $c$, denoting the collection of coupling constants of the theory, should not be confused with the speed of light in vacuum, which is set to $1$ throughout this work, except in Eq.~\eqref{eq: mode speed}.} shows that the action can be written, up to a boundary term, as the same functional evaluated on $(\overline{g},\overline{\uu})$. The only difference is that the coupling constants are mapped to a linear combination thereof. Indeed we have
\begin{equation}\label{eq: action passive transformation}
    \begin{split}
        \S_\textnormal{\AE}[g, \uu; c] = & \dfrac{1}{16\pi \, \overline{G}_\mathrm{N}} \, \int_\M \overline{\boldsymbol{\epsilon}}_\M \Bigl[ \overline{R} - 2 \, \Lambda - \overline{c}_1 \, \overline{\nabla}_\mu \, \overline{\uu}_\nu \, \overline{\nabla}^\mu \, \overline{u}^\nu - \overline{c}_2 \, \overline{\nabla}_\mu \, \overline{u}^\mu \, \overline{\nabla}_\nu \, \overline{u}^\nu - \overline{c}_3 \, \overline{\nabla}_\mu \, \overline{\uu} _\nu \, \overline{\nabla}^\nu \, \overline{u}^\mu + \overline{c}_4 \, \overline{\aa}_\mu \, \overline{a}^\mu \Bigr] +\\&
        + \dfrac{1}{16\pi \, \overline{G}_\mathrm{N}} \, \int_\M \overline{\boldsymbol{\epsilon}}_\M \, \lambda \, \Bigl( \overline{u}^\mu \, \overline{\uu}_\mu + 1 \Bigr) - \dfrac{1 - c_s^{2}}{8\pi\, \overline{G}_\mathrm{N}} \, \int_\M \overline{\boldsymbol{\epsilon}}_\M \, \overline{\nabla}_\mu \, \Bigl( \overline{u}^\mu \, \overline{\nabla}_\nu \,  \overline{u}^\nu \Bigr) \,,
    \end{split}
\end{equation}
where we denoted with a bar all the quantities evaluted in the disformal frame and the transformed couplings. Their explicit expression is given by
\begin{equation}\label{eq: passive coefficients transformations}
    \overline{G}_\mathrm{N} = c_s \, G_\mathrm{N} \,, \quad \mathrm{and} \quad \begin{cases}
        \overline{c}_1 = 1 - \dfrac{1 - c_1 + c_3 + c_s^4 \, (1 - c_{13})}{2 \, c_s^2} \,;\\
        \overline{c}_2 = - 1 + c_s^2 \, (1 + c_2) \,;\\
        \overline{c}_3 = \dfrac{1 - c_1 + c_3 - c_s^4 \, (1 - c_{13})}{2 \, c_s^2} \,;\\
        \overline{c}_4 = \dfrac{1 - c_{1} + c_3 + c_s^4 \, (1 - c_{13}) - 2 \, c_s^2 \, (1 - c_{14})}{2 \, c_s^2} \,.
    \end{cases}
\end{equation}
Extending the shorthand notation Eq.~\eqref{eq: c_ij} to the barred couplings, we notice that, for all $c_s$,
\begin{equation}
    \overline c_{14} = c_{14} \quad \mathrm{and} \quad \overline c_{123} = c_{123} \, c_s^2 \,.
\end{equation}
These remarkable properties imply that the $c_{14} = 0$ and $c_{123} = 0$ sectors of the theory, in which explicit solutions have been found~\cite{Bhattacharyya:2014kta}, are preserved by the disformal transformation.

If we forget about the boundary term, we can summarize this covariance property as
\begin{equation}\label{eq: disformal symmetry of the action}
    \S_\textnormal{\AE}[\overline{g}, \overline{\uu}; \overline{c}] = \S_\textnormal{\AE}[g,\uu; c] \,.
\end{equation}
This relation might be not surprising given that the Einstein--\AE{ther} action is constructed to be the most general second-order action for the \AE{ther}~\cite{Jacobson:2000xp}, hence its form is preserved by disformal transformations.

Even if the equations of motion are preserved by the disformal transformation, and hence we can write that the transformation Eq.~\eqref{eq: disformal gbar to g} maps
\begin{equation}
    \mathrm{Sol}[c] \to \mathrm{Sol}[\overline{c}]
\end{equation}
under a disformal transformation, the additional boundary term might alter the symplectic structure of the theory. However, as long as we impose the alignment condition $g^{-1}(\nn, \uu) = 0$, we can see that the contribution of the new boundary term vanishes on $\Gamma$. This tells us that the boundary symplectic potential we obtain from the covariant phase space algorithm satisfies $\TTh[\overline{g}, \overline{\uu}; c]= \TTh[g, \uu; \overline{c}]$ and that, as long as we impose the same boundary conditions Eq.~\eqref{eq: BC} on $\overline{g}$ and $\overline{\uu}$, the corner symplectic potential changes in the same way. As a result, we can interpret the disformal transformation as a symplectomorphism between the phase spaces defined by the two sets of coupling constants
\begin{equation}\label{eq: phase space covariance}
    \P[c] \to \P[\overline{c}] \,.
\end{equation}
This fact tells us that all the computations we presented so far are equally valid in the disformal frame, as long as we replace the coupling constants with their barred version defined in Eq.~\eqref{eq: passive coefficients transformations}. For example, this means that the Noether charge density evaluated on the disformal frame $\qq_\textsc{tot}(\overline{g},\overline{\uu}; c)$ will have the same functional form as that in the original frame, but with barred constants $\qq_\textsc{tot}(g, \uu; \overline{c})$. This will be enough for what comes next. For convenience, we will denote any expression evaluated on the barred couplings with a bar on top: in our example, we will then write
\begin{equation}
    \overline{\qq}_\textsc{tot}(g, \uu) \equiv \qq_\textsc{tot}(g, \uu; \overline{c}) \,.
\end{equation}

\subsection{On the Relation Between the Two Frames}\label{subsec: 1st law}

Now that we showed the covariance property of the Einstein--\AE{ther} phase space Eq.~\eqref{eq: phase space covariance}, we are ready to formulate the first law of black hole thermodynamics in the disformal frame.

Given a physical solution $(g^\circ, \uu^\circ) \in \P[c]$ which defines the original frame, we know that the disformal frame $(\overline{g}^\circ, \overline{\uu}^\circ)$ --- despite not generally being a solution of the original theory anymore --- is going to be a solution of the transformed theory, thanks to the disformal symmetry Eq.~\eqref{eq: disformal symmetry of the action}.

Therefore, our first step is to formulate the first law in the transformed phase space $\P[\overline{c}]$. This means that, instead of computing quantities in the disformal frame, we just evaluate them in the original frame and replace each coupling constant with its barred version.
Since the first law will hold for every stationary solution in $\P[\overline{c}]$, it will hold also for $(\overline{g}^\circ, \overline{\uu}^\circ)$.

Thanks to our choice of disformal factor, the causal horizon for the $s$-mode in the disformal frame will coincide with the Killing horizon of the metric $\overline{g}$. This allows us to take $\H$ as the bifurcation surface of that horizon and straightforwardly import Wald's construction, as reviewed in Section~\ref{sec: First Law BH}. 
All we need to do then to obtain a consistent first law in the disformal frame is ``put a bar'' on every quantity of Section~\ref{sec: First Law BH}, consistently with the notation we introduced at the end of the previous subsection.

As we already discussed, the symplectic flux contribution at the bifurcation surface, which is usually discarded on account on the metric field regularity, is non-vanishing in Einstein--\AE{ther} theory and can be integrated. If we define the surface gravity~\footnote{All the definitions of surface gravity coincide on a Killing horizon.}
\begin{equation}\label{eq: barred surface gravity}
    \overline{\kappa}_\textsc{bh} = \sqrt{-\dfrac{1}{2} \, {\nabla}_\mu \, \xi_\nu \, {\nabla}^\mu \, \xi^\nu}\eval_\H \,,
\end{equation}
and consider perturbations that preserve the rescaled Killing vector field $\hat{\xi} \equiv \xi/\overline{\kappa}_\textsc{bh}$, we obtain
\begin{equation}
    \dv \overline{\hh}_\textnormal{\AE}[\vv_\xi] - \overline{\hh}_\textnormal{\AE}[\vv_{\dv\xi}] = \overline{\kappa}_\textsc{bh} \, \dv \overline{\hh}_\textnormal{\AE}[\vv_{\hat\xi}] \,,
\end{equation}
with $\hh_\textnormal{\AE}[\vv_\xi]$ defined in Eq.~\eqref{eq: new charge h}, on top of the standard metric contribution. This suggest that the presence of the \AE{ther} leads to an additional entropic contribution
\begin{equation}
    \overline{S}_{\textnormal{\AE}} := 2\pi \, \int_\H \overline{\hh}_\textnormal{\AE}[\vv_{\hat\xi}] \,.
\end{equation}

Taking this into account, we then write
\begin{equation}\label{eq: barred first law}
    \dv \overline{M} = \overline{T}_\textsc{bh} \, \dv \overline{S}_\textsc{gr} + \overline{T}_\textsc{bh} \, \dv \overline{S}_{\textnormal{\AE}} + \overline{\Omega}_H \, \dv \overline{J} \,,
\end{equation}
with a temperature obtained from the surface gravity as
\begin{equation}
    \overline{T}_\textsc{bh} = \dfrac{\overline{\kappa}_\textsc{bh}}{2\pi} \,.
\end{equation}

The relation Eq.~\eqref{eq: barred first law} we just obtained is not yet the law we are after as it does not involve physical quantities. These pertain to solutions of the original theory --- with the unbarred coupling constants. To see how the properties of solutions of $\P[\overline{c}]$ are related to those of $\P[c]$, we need to analyze more carefully the relation between the original and the disformal frame.

As already mentioned, we know that a disformal transformation maps the solution $(g^\circ, \uu^\circ) \in \P[c]$ to a solution $(\overline{g}^\circ, \overline{\uu}^\circ) \in \P[\overline{c}]$, but we are not guaranteed that the functional form of $\overline{g}^\circ(c)$ and $\overline{\uu}^\circ(c)$ in terms of the coupling constants (plus a certain set of parameters determined by the initial and boundary conditions), is going to be the same as $g^\circ(\overline{c})$ and $\uu^\circ(\overline{c})$. Indeed, we do not expect them to be simply related in the general case.

On the other hand, if we assume that the two are connected by a diffeomorphism --- as is the case in all the explicit examples we have considered --- in the sense that $(g^\circ(\overline{c}), \uu^\circ (\overline{c})) = \psi_*(\overline{g}^\circ(c), \overline{\uu}^\circ (c))$ for some $\psi: \M \to \M$, then we can use the covariance of the charges to relate barred quantities to the original ones.
In principle, the diffeomorphism $\psi$ has to be determined on a case-by-case basis, but we sketch here what happens in stationary, spherically-symmetric solutions, which is the case for all the examples in Section~\ref{sec: Examples}. 
First of all, we fix the boundary to be an hypersurface at some large radius $r = \R$ in the coordinates adapted to the isometries $(t, r, \varphi_I)$, where $\varphi_I$ ($I = 1, \dots, n - 2$) is a set of angles. We thus represent the diffeomorphism as $\psi:(t, r, \varphi_I) \to (t', r', \varphi'_I)$.
In order to preserve the boundary, we want the new radial coordinate to be the same as the old one. We also see that, with standard fall-off conditions, the norm of $\partial_t$ in the disformal frame tend to
\begin{equation}
    \overline{g}(\partial_t, \partial_t) = g(\partial_t, \partial_t) + (1 - c_s^2) \, (\iota_{\partial_t} \uu)^2 \to - c_s^2
\end{equation}
as $r \to +\infty$. This means that, asymptotically, we need to require $t' \approx c_s \, t$ to achieve unit time lapse. Similar considerations for $\xi = \partial_t + \Omega_\textsc{h} \, \partial_\varphi$ tell us that
\begin{equation}
    \psi^* \xi = \frac{1}{c_s} \, \xi \,,
\end{equation}
where the $c_s$ factor ultimately takes into account that the disformal frame does not have unit lapse at infinity.

We thus obtain the following relations
\begin{equation}\label{eq: unphysical to physical 1}
    \overline{M} = \dfrac{M}{c_s} \,, \quad \overline{T}_\textsc{bh} = \dfrac{T_\textsc{bh}}{c_s} \,, \quad \overline{\Omega}_H = \dfrac{\Omega_\textsc{h}}{c_s} \,,
\end{equation}
while
\begin{equation}\label{eq: unphysical to physical 2}
    \overline{S}_\textsc{gr} = S_\textsc{gr} \,, \quad \overline{S}_{\textnormal{\AE}} = S_{\textnormal{\AE}} \,, \quad \overline{J} = J \,.
\end{equation}
Notice that the choice of scaling $\Omega_\textsc{h}$ rather than $J$ is ultimately a matter of convention, whereas entropies do not change as they are computed with the rescaled Killing vector field $\hat\xi$, which is insensitive to $c_s$.

Finally, we can state the first law in terms of properties of the original solution
\begin{equation}\label{eq: First Law Einstein-AEther (pezzo irreversibile)}
    \boxed{\dv M = T_\textsc{bh} \, \dv {S}_\textsc{gr} + T_\textsc{bh} \, \dv S_{\textnormal{\AE}} + \Omega_\textsc{h} \, \dv J} \,.
\end{equation}
This formula is the main result of this work. Instrumental is the realization that the \AE{ther} itself contributes to the entropy balance via the heat flux that it carries inside the horizon.

\paragraph{\AE{ther} heat flux}

We expand here on the previous comment about the \AE{ther} contribution to the entropy balance. Let us consider a section of the horizon $H$ that extends from the bifurcation surface $\H$ to some cut in the distant future.
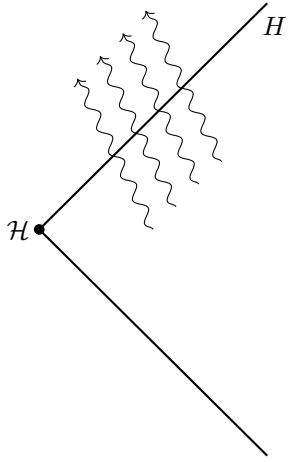
\begin{figure}[H]
    \centering
        \scalebox{1}{
        \begin{tikzpicture}
            \draw[thick] (0,0) -- (3,3);
            \draw[thick] (0,0) -- (3,-3);

            \draw[decorate, decoration={snake, amplitude=0.7mm}, ->] (1.5,0) -- (0.5,2);
            \draw[decorate, decoration={snake, amplitude=0.7mm}, ->] (1.8,0.3) -- (0.8,2.3);
            \draw[decorate, decoration={snake, amplitude=0.7mm}, ->] (2.1,0.6) -- (1.1,2.6);
            \draw[decorate, decoration={snake, amplitude=0.7mm}, ->] (2.4,0.9) -- (1.4,2.9);

            \fill (0,0) circle (2pt) node[anchor = east] {$\H$};
            \node at (3.1,2.7) {$H$};
        \end{tikzpicture}
        }
        \caption{\small \AE{ther} heat flux entering the horizon $H$ of the s-mode in the disformal frame.}
        \label{fig:AEflux}
    \end{figure}

Assuming that the contribution of the charges in the distant future can be neglected, we can straightforwardly integrate Eq.~\eqref{eq: aether current} along $H$ to get
\begin{equation}
    \int_\H \qq_\textnormal{\AE}[\vv_\xi] = \dfrac{1}{8\pi \, G_\mathrm{N}} \, \int_H \pqty{T^{\textnormal{\AE}\mu} {}_\nu -\frac{1}{2} \, \L_\uu \, \delta^\mu_\nu} \, \xi^\nu \, \boldsymbol{\epsilon}_\mu \,.
\end{equation}
Moreover, given Eqs.~\eqref{eq: L on-shell} and~\eqref{eq: definition of A}, we can similarly integrate $\iota_\xi \AA_\textnormal{\AE}$ along $H$ to get
\begin{equation}
    \int_\H \iota_\xi \AA_{\textnormal{{\AE}}} = - \dfrac{1}{16\pi \, G_\mathrm{N}} \, \int_H (\L_\uu + R) \, \xi^\mu \, \boldsymbol{\epsilon}_\mu = - \dfrac{1}{16\pi \, G_\mathrm{N}} \, \int_H \L_\uu \, \xi^\mu \, \boldsymbol{\epsilon}_\mu \,,
\end{equation}
where we discarded the Ricci scalar in view of its regularity because $\xi^\mu \, \boldsymbol{\epsilon}_\mu \sim \xi^2 = 0$. As mentioned several times, this is not true for $\L_\uu$. Considering the difference of these two equations, we finally obtain
\begin{equation}
    \int_\H \hh^\textnormal{\AE}[\vv_\xi] = \dfrac{1}{8\pi \, G_\mathrm{N}} \, \int_H T^{\textnormal{\AE}\mu} {}_\nu \, \xi^\nu \, \boldsymbol{\epsilon}_\mu \,,
\end{equation}
which means that the integral of $\hh^\textnormal{\AE}[\vv_\xi]$ over the bifurcation surface can be interpreted as the total flux of Killing energy associated with the \AE{ther} that has crossed the horizon over a long time (see Fig.~\ref{fig:AEflux}).

Inspired by Jacobson’s celebrated derivation of spacetime thermodynamics~\cite{Jacobson:1995ab}, we then associate the energy flowing outside the causal domain as a heat term in the equations of thermodynamics, namely
\begin{equation}
    \dvbar Q_\textnormal{\AE} := \dfrac{\kappa_\textsc{bh}}{8\pi \, G_\mathrm{N}} \, \dv \int_\H T^{\textnormal{\AE}\mu}{}_\nu \, \hat\xi^\nu \, \boldsymbol{\epsilon}_\mu \,,
\end{equation}
where we used $\dvbar$ to indicate that $Q_\textnormal{\AE}$ is not a phase space function. We then use Clausius' relation to show that the entropic contribution of the \AE{ther} is a genuine thermodynamical contribution associated with the production of heat at a temperature given by the $s$-mode horizon
\begin{equation}
    \dv {S}_{\textnormal{\AE}} = 2\pi \, \dv \int_\H \hh^\textnormal{\AE}[\vv_{\hat{\xi}}] = \frac{\dvbar Q_\textnormal{\AE}}{{T}_\textsc{bh}} \,.
\end{equation}
As we argue at the end of Appendix~\ref{app: ray tracing}, the disformal frame is constructed so that the propagation speed of the $s$-mode is fixed to $\overline{c}_s = 1$. It then follows that the peeling surface gravity Eq.~\eqref{eq: kappa peeeeling} reduces to the standard definition we used in Eq.~\eqref{eq: barred surface gravity}. Consequently, the black hole temperature $\overline{T}_\textsc{bh}$ is the one set by the peeling of rays close to their causal horizon. Taking into account the proper factors of $c_s$ as we did in Eq.~\eqref{eq: unphysical to physical 1}, this shows that the temperature that appears in the first law is the physical one associated with Hawking radiation~\cite{DelPorro:2023lbv}.

\paragraph{Smarr Formula for Einstein--\AE{ther}}\label{subsec: Weyl for AE}

The results of this section also allows us to formule the Smarr formula for Einstein--\AE{ther} gravity. As we reviewed in Subsection~\ref{subsec: Weyl Invariance}, all we have to do is to contract the first law Eq.~\eqref{eq: First Law Einstein-AEther (pezzo irreversibile)} along an infinitesimal Weyl transformation.
Given that the theory is generally not scale invariant, we have to first consider an extended framework, in which couplings are allowed to vary. What we notice is that, unlike the cosmological constant, the \AE{ther} couplings do not need to vary in order to achieve global scale invariance. Hence we consider the infinitesimal transformation
\begin{equation}
    W: \begin{cases}
        g_{\mu\nu} \mapsto 2 \, \varepsilon \, g_{\mu\nu} \,;\\
        \uu_\mu \mapsto \varepsilon \, \uu_\mu \,;\\
        \Lambda \mapsto - 2 \, \varepsilon \, \Lambda \,.
    \end{cases}
\end{equation}
Using Eq.~\eqref{eq: variabile coniugata al potenziale chimico}, we obtain the conjugate variable to the cosmological constant
\begin{equation}\label{eq: variabile coniugata a Lambda}
    \Psi_\Lambda = \int_\Sigma \dfrac{\partial \JJ[\vv_\xi]}{\partial \Lambda} = \int_\Sigma \dfrac{\iota_\xi \boldsymbol{\epsilon}_\M}{8\pi \, G_\mathrm{N}} \,,
\end{equation}
which is basically the 3-volume of the hypersurface $\Sigma$. Notice that, upon regularization, this potential reduces to the `flat' volume of the black hole $V_\textsc{bh}$, as explained in~\cite{Kubiznak:2016qmn}, leading to the known fact that the cosmological constant contributes a work term $V_\textsc{bh} \, \dv p_\Lambda$ to the first law, where $p_\Lambda$ is the pressure associated with the cosmological constant $p_\Lambda := - \Lambda/8\pi \, G_\mathrm{N}$.

Given the extended first law
\begin{equation}
    \dve M = T_\textsc{bh} \, \dve S_\textsc{bh} + T_\textsc{bh}\, \dve S_\textnormal{\AE} + \Omega_\textsc{h} \, \dve J + V_\textsc{bh} \, \dve p_\Lambda + \int_{\S_\R}\iota_{\xi}\BB \,.
\end{equation}
we readily obtain the Smarr formula
\begin{equation}
    \boxed{(n-3) \, M = (n-2) \, T_\textsc{bh} \, S_\textsc{bh} + (n-2) \, T_\textsc{bh} \, S_\textnormal{\AE} + (n-2) \, \Omega_\textsc{h} \, J - 2 \, V_\Sigma \, p_\Lambda} \,. 
\end{equation}

\section{Examples}\label{sec: Examples}

We are now in the condition to explicitly analyze some of the solutions known in Einstein--\AE ther theory. The ones we consider have been found in~\cite{Bhattacharyya:2014kta} and~\cite{Sotiriou:2014gna}. All the results we are showing now have been obtained using the software Wolfram Mathematica\texttrademark, specifically the xAct packages ``xTensor'' and ``xCoba''.

To specialize our discussion, we must first select an appropriate foliation of the spacetime $\M$ into hypersurfaces $\Sigma_t$. The leaves of this foliation are required to have a non-empty intersection with the bifurcation surface of the Killing horizon associated with the disformally transformed metric. We adopt a foliation defined by constant coordinate time slices, $\Sigma_t :={t = const}$. This choice is motivated by the fact that, in the disformal frame of a stationary black hole, these slices always converge at the bifurcation surface and do not penetrate into the causally disconnected region. As discussed in Subsection~\ref{subsec: finite speed modes}, these slices are well-suited to specify initial data given that the $s$-mode is defined to be the fastest propagating mode.

Notice that taking any surface that is not a constant-Khronon one, it is not generically achronal. However, for every generic choice of the parameters $c_i$, we can always find one hypersurface that is Cauchy. 

We also remark that computing the quantities at infinity involves a limiting procedure, as mentioned after Eqs.~\eqref{eq: dM} and~\eqref{eq: dJ}. This limit is typically divergent and so we have to use a regularization prescription to extract finite quantities. As customary, we use the so-called \emph{background subtraction procedure}, which we review in Appendix~\ref{app: background subtraction}. As shown there, the proper definition of the mass is
\begin{equation}
    M = \lim_{\R \to +\infty} \, \left[E_\R - \dfrac{\nu_\R}{\nu^{bg}_\R} \, E_\R^{bg}\right] \,,
\end{equation}
where $E_\R$ is the asymptotic charge associated with the Killing vector $\partial_t$ (called $M$ in Eq.~\eqref{eq: dM}) and $\nu_\R$ is a \emph{redshift factor}.

\subsection{$3+1$ Schwarzschild with $c_{123} = 0$}

The first solution we are going to analyze is an asymptotically flat solution in the sector of coefficients $c_{123} = 0$. Using the parametrization for the metric and the \AE{ther} 1-form presented in Eq.~\eqref{eq: parametrization g and u}, we have
\begin{equation}
    \begin{cases}
        & e(r) = 1 - \dfrac{2 \, r_\textsc{uh}}{r} + C \, \dfrac{r_\textsc{uh}^2}{r^2} \,;\\
        & (\iota_\xi \uu)(r) = - 1 + \dfrac{r_\textsc{uh}}{r} \,;\\
        & (\iota_\xi\ss)(r) = \dfrac{r_\textsc{uh}}{r} \, \sqrt{1 - C} \,,
    \end{cases}
\end{equation}
where $r_\textsc{uh}$ is the position of the universal horizon, $C := \dfrac{(c_{14} - 2 \, c_{13})}{2 \, (1 - c_{13})}$, and the Killing vector field is $\xi = \partial_t$.
The position of the Killing horizon of the disformal metric (the horizon of the $s$-mode) is obtained by solving the equation $\overline{e}(r) = 0$ with the barred couplings. We obtain
\begin{equation}\label{eq: rKH c123}
    \overline{r}_\textsc{h}(c_s) = r_\textsc{uh} \, \left( 1 + \sqrt{1 - \overline{C}} \right) \,,
\end{equation}
where $\overline{C}$ is obtained by $C$ replacing each coefficients with its barred version. When expressed as a function of the original couplings according to Eq.~\eqref{eq: passive coefficients transformations}, the dependence of $\overline{r}_\textsc{h}$ on $c_s$ becomes manifest.
The contribution on $\S_\R$ in Eq.~\eqref{eq: first law with omega} is
\begin{equation}
        \overline{E}_\R = \int_{\S_\R} \left(\overline{\qq}_\textsc{tot}[\vv_\xi] - \iota_{\vv_\xi} \overline{\tth}_\textsc{tot} + \iota_{\xi} \overline{\eell}_\textsc{tot}\right)\,
         \underset{{\R \to +\infty}}{\sim} - \dfrac{\R}{\overline{G}_\mathrm{N}} + \dfrac{\, \left(2 - \frac{\overline{c}_{14}}{2}\right) \, r_\textsc{uh}}{ \overline{G}_\mathrm{N}} + o\left(\dfrac{1}{\R}\right) \,.
\end{equation}
The term for the background subtraction is obtained from this in the limit $r_\textsc{uh}\to 0$
\begin{equation}
    \overline{E}^{bg}_\R \underset{{\R \to +\infty}}{\sim} - \dfrac{\R}{\overline{G}_\mathrm{N}} + o\left(\dfrac{1}{\R}\right) \,.
\end{equation}
Taking into account the redshift factor
\begin{equation}
   \dfrac{\overline{\nu}_\R}{\overline{\nu}_\R^{bg}} := \dfrac{\sqrt{\overline{e}(\R)}}{(\sqrt{\overline{e}(\R)} |_{{r_\textsc{uh} }\to 0})}\, \underset{{\R \to +\infty}}{\sim} 1 -\frac{r_\textsc{UH}}{\R} + o\left(\dfrac{1}{\R}\right) \,.
\end{equation}
Hence the ADM mass turns out to be
\begin{equation}
    \overline{M} = \lim_{\R \to +\infty} \left[ \overline{E}_\R - \dfrac{\overline{\nu}_\R}{\overline{\nu}_\R^{bg}} \, E^{bg}_\R \right] = \left( 1 - \frac{\overline{c}_{14}}{2} \right) \, \dfrac{r_\textsc{uh}}{\overline{G}_\mathrm{N}} \,.
\end{equation}

The surface gravity and the associated temperature are found to be
\begin{equation}
    \overline{\kappa}_\textsc{bh} = \dfrac{\overline{e}'(r)}{2}\eval_{r = r_\textsc{h}(c_s)} = \dfrac{\sqrt{1 - \overline{C}}}{(1 + \sqrt{1 - \overline{C}}) \, \overline{r}_\textsc{h}(c_s)} \, \quad \Rightarrow \quad \overline{T}_\textsc{bh} = \dfrac{\overline{\kappa}_\textsc{bh}}{2\pi} \,.
\end{equation}

The entropy coming from the Komar charge density is the usual Bekenstein--Hawking contribution, while the \AE{ther} one is again proportional to the area because of the high degree of symmetry of the problem:
\begin{equation}
    \overline{S}_\textsc{gr} = 2\pi \, \int_\H \overline{\qq}_K[\vv_{\hat{\xi}}] = \dfrac{\overline{A}_\textsc{h}}{4 \, \overline{G}_\mathrm{N}} \,,
 \end{equation}
and
\begin{equation}
    \overline{S}_\textnormal{\AE} = 2\pi \int_\H \left( \overline{\qq}_{\textnormal{\AE}}[\vv_{\hat{\xi}}] - \iota_{\hat{\xi}} \overline{\AA}_\textnormal{\AE} \right) = - \frac{\overline{A}_\textsc{h}}{4\, \overline{G}_\mathrm{N}}\dfrac{ \overline{c}_{14} \, \left( 1 + \sqrt{1 - \overline{C}} \right) - 2 \, \overline{C}}{2 \, \sqrt{1 - \overline{C}} \, (1 + \sqrt{1 - \overline{C}})} \,.
\end{equation}
One quickly verifies that their sum satisfies the expected Smarr formula
\begin{equation}
    2 \, \overline{T}_{\textsc{bh}}(\overline{S}_\textsc{gr}+\overline{S}_\textnormal{\AE} ) = \frac{8\pi\, \overline{r}_\textsc{h}^2(c_s)}{8\pi\, \overline{G}_\mathrm{N}\, \overline{r}_\textsc{h}(c_s)} \, \left( 1 + \dfrac{c_{14}}{2} \right) \, \left( 1 + \sqrt{1 - \overline{C}} \right) = \overline{M} \,.
\end{equation}

The first law, recast in terms of the physical quantities defined in Eqs.~\eqref{eq: unphysical to physical 1} and~\eqref{eq: unphysical to physical 2}, has thus the form
\begin{equation}
    \dv M = T_\textsc{bh} \, \dv S_\textsc{gr} + T_\textsc{bh} \, \dv S_\textnormal{\AE} \,.
\end{equation}
Notice that, given the symmetry of the solution, one could conflate the entropy contribution of the \AE ther with the Bekenstein--Hawking one, to obtain an effective area law
\begin{equation}
    S_{\textsc{tot}} = S_{\textsc{gr}} + S_{\textnormal{\AE}} = \frac{A_\textsc{h}}{4 \, G_\textnormal{\AE}} \,,
\end{equation}
where the \textit{effective Newton constant}
\begin{equation}\label{eq: Costante di Newton Fisica}
    G_\textnormal{\AE} := \dfrac{G_\mathrm{N}}{1 - \frac{c_{14}}{2}} \,
\end{equation}
is also the one that dictates the proportionality factor between the universal horizon radius and its mass
\begin{equation}
    r_\textsc{uh}=G_\textnormal{\AE{}} \, M,
\end{equation}
as shown in~\cite{Herrero-Valea:2023zex, Jacobson:2007veq}.

\subsection{$2+1$ Rotating BTZ Black Hole in $c_{14} = 0$}\label{subsec: BTZ}

The second example we analyze is a non-stationary and axisymmetric solution in the $c_{14} = 0$ sector that describes a BTZ black hole in $2 + 1$ dimensions. This solution has been found in~\cite{Sotiriou:2014gna} and can be written using the following parametrization
\begin{equation}
    \begin{alignedat}{2}
    ds^2 &= - e(r)\, dt^2 + \frac{dr^2}{e(r)} + r^2 \bigl(d\varphi + \Omega(r)\, dt\bigr)^2\,,
    &\qquad\qquad&
    \begin{aligned}
    \uu &= (\iota_\xi \uu)\, dt - \frac{(\iota_\xi \ss)}{e(r)}\, dr\,, \\
    \ss &= (\iota_\xi \ss)\, dt - \frac{(\iota_\xi \uu)}{e(r)}\, dr\,.
    \end{aligned}
    \end{alignedat}
\end{equation}
The Killing vector field of this solution is $\xi = \partial_t + \Omega_\textsc{h} \, \partial_\varphi$, where $\Omega_\textsc{h} := - \Omega(r_\textsc{h}(c_s))$.

The specific solution is given by
\begin{equation}
    \begin{cases}
        e(r) = - r_0 + \dfrac{\mathcal{J}^2}{4 \, r^2} - \Tilde{\Lambda} \, r^2\,;\\
        \Omega(r) = - \dfrac{\mathfrak{J}}{2 \, r^2}\,;\\
        (\iota_\xi \uu)(r) = - \dfrac{1}{l} \, \left(\dfrac{r^2 - r_\textsc{uh}^2}{r}\right)\,;\\
        (\iota_\xi \ss)(r) = \dfrac{r}{\lambda} + \dfrac{1}{r} \, \sqrt{\dfrac{r_\textsc{uh}^4}{l^2 \, (1 - c_{13})} - \dfrac{\mathfrak{J}^2}{4}}\,,
    \end{cases}
\end{equation}
where we introduced the following auxiliary functions
\begin{subequations}
    \begin{align}
        \Tilde{\Lambda} &= \Lambda - \dfrac{2 \, c_2 + c_{13}}{\lambda^2} =: \dfrac{1}{\lambda^2} - \dfrac{1}{l^2} \,,\\
        \mathcal{J}^2 &= \mathfrak{J}^2 - \dfrac{4 \, c_{13} \, r_\textsc{uh}^4}{l^2 \, (1 - c_{13})} \,,\\
        r_0 &= \dfrac{2 \, r_\textsc{uh}^2}{l^2} + \dfrac{2}{\lambda} \, \sqrt{\dfrac{r_\textsc{uh}^4}{l^2 \, (1 - c_{13})} - \dfrac{\mathfrak{J}^2}{4}} \,.
    \end{align}
\end{subequations}
In this solution, $\lambda$ is known as the \emph{misalignment parameter}. Its name comes from the fact that the \AE{ther vector} is in general not aligned with the vector field generating time translations at infinity, and consequently no first law of black hole thermodynamics can be formulated in this setting. We therefore restrict our attention to the aligned case, corresponding to the limit $\lambda\to\infty$.

Here the radius of the Killing horizon, in the barred world, becomes
\begin{equation}\label{eq: rKH BTZ}
    \overline{r}_\textsc{h}(c_s) = \sqrt{\dfrac{l^2 \, \overline{r}_0 + l \, \sqrt{l^2 \, \overline{r}_0^2 - \overline{\mathcal{J}}^2}}{2}} \,.
\end{equation}
In this case, the computations for the asymptotic mass leads to~\footnote{Since we are in $2+1$, the asymptotic boundary is a circle.}
\begin{equation}
     \begin{split}
        \overline{E}_\R = \int_{\S_\R} \left( \overline{\qq}_\textsc{tot}[\vv_t] - \iota_{\vv_t} \tth_\textsc{tot} + \iota_{t} \overline{\eell}_\textsc{tot} \right) \underset{{\R \maps +\infty}}{\sim} - \dfrac{ \, \R^2}{4 \, \overline{G}_\mathrm{N} \, l^2} + \dfrac{ \, \overline{r}_0}{4 \, \overline{G}_\mathrm{N}} + o \left( \dfrac{1}{\R} \right) \,.
    \end{split}
\end{equation}
Considering again the background value, obtained for $\overline{r}_0 \sim \overline{r}_\textsc{uh}^2 \to 0$, leads us to
 \begin{equation}
      \dfrac{\overline{\nu}_\R}{\overline{\nu}_\R^{bg}} \, \overline{E}^{bg}_\R \underset{{\R \maps +\infty}}{\sim} - \dfrac{\,\R^2}{4 \, \overline{G}_\mathrm{N} \, l^2} + \dfrac{\, \overline{r}_0}{8\, \overline{G}_\mathrm{N}} + o\left( \dfrac{1}{\R} \right) \,,
\end{equation}
and
\begin{equation}\label{eq: massa AdS}
        \overline{M} =\dfrac{\overline{r}_0}{8\, \overline{G}_\mathrm{N}} \,.
\end{equation}
On the other hand, the asymptotic angular momentum does not require background subtraction in this case
\begin{equation}
    \overline{J} = - \lim_{\R \to +\infty} \int_{\S_\R} \left( \overline{\qq}_\textsc{tot}[\vv_\varphi] - \iota_{\vv_\varphi} \tth_\textsc{tot} + \iota_{\varphi} \overline{\eell}_\textsc{tot} \right) = (1 - \overline{c}_{13}) \, \dfrac{\overline{\mathfrak{J}}}{8 \, \overline{G}_\mathrm{N}} \,.
\end{equation}
The surface gravity is given by
\begin{equation}\label{eq: k BTZ}
    \overline{\kappa}_\textsc{bh} = - \dfrac{\overline{\mathcal{J}}^2}{4 \, \overline{r}^3_\textsc{h}(c_s)} + \dfrac{\overline{r}_\textsc{h}(c_s)}{l^2} \, \quad \Rightarrow \quad \overline{T}_\textsc{bh} = \dfrac{\overline{\kappa}_\textsc{bh}}{2\pi} \,,
\end{equation}
while the entropy contributions are
\begin{equation}
    \overline{S}_\textsc{gr} = 2\pi \, \int_\H \overline{\qq}_K[\vv_{\hat{\xi}}] = \dfrac{\overline{P}_\textsc{h}}{4 \, \overline{G}_\mathrm{N}} \,, 
\end{equation}
and
\begin{equation}
    \overline{S}_\textnormal{\AE} = 2\pi \int_\H \left( \overline{\qq}_{\textnormal{\AE}}[\vv_{\hat{\xi}}] - \iota_{\hat{\xi}} \overline{\AA}_\textnormal{\AE} \right) = \dfrac{\overline{c}_{13}}{16\, \overline{G}_\mathrm{N}}\pqty{\dfrac{l^2\, \mathcal{J}^2}{4\,\overline{r}_\textsc{h}^3(c_s)} - \, \overline{r}_\textsc{h}(c_s)} \,,
\end{equation}
where $\overline{P}_\textsc{h}$ is the perimeter of the Killing horizon.

The first law we obtain in terms of the physical quantities is
\begin{equation}
    \dv M = T_\textsc{bh} \, \dv S_\textsc{gr} + T_\textsc{bh} \, \dv S_\textnormal{\AE} + \, \Omega_\textsc{h} \, \dv J \,,
\end{equation}
where the angular momentum term appears.

As this solution features a non-zero cosmological constant, we cannot recover the Smarr formula from the first law only, but we must take into account the conjugate variable to $\Lambda$ as well, which turns out to be the (euclidean) area of the slice $\Sigma$:
\begin{equation}
    \overline{\Psi}_\Lambda =  \int_\Sigma \dfrac{\iota_\xi\boldsymbol{\epsilon}_\M}{8\pi \, G_\mathrm{N}} = - \dfrac{1}{8\pi \, G_\mathrm{N}}\int_{\overline{r}_\textsc{h}(c_s)}^\R dr \, 2\pi \, r.
\end{equation}
Upon regularization, we obtain a chemical potential that is equal to the (euclidean) area enclosed by the $s$-mode horizon
\begin{equation}
     \overline{\Psi}_\Lambda^{(reg)} = \overline{\Psi}_\Lambda - \overline{\Psi}_\Lambda\eval_{\substack{r_0\to 0,\\\mathfrak{J} \to 0}} =  \dfrac{\overline{r}^2_\textsc{h}(c_s)}{8 \, G_\mathrm{N}} \,.
\end{equation}
Therefore, the first law of extended thermodynamics reads
\begin{equation}
     \dve M = T_\textsc{bh} \, \dve S_\textsc{gr} + T_\textsc{bh} \, \dve S_\textnormal{\AE} + \Omega_\textsc{h} \, \dve J - \Psi_\Lambda^{(reg)} \, \dve \Lambda \,,
\end{equation}
yielding a consistent Smarr formula~\footnote{Notice that $n - 3 = 0$, which is the reason why the mass does not enter the Smarr formula in $2 + 1$ dimensions. On the other hand, the coefficient in front of $\Psi_\Lambda \, \Lambda$ does not depend on spacetime dimensionality.}
\begin{equation}
    T_\textsc{bh} \, S_\textsc{gr} + T_\textsc{bh} \, S_\textnormal{\AE} +  \Omega_H \, {J} + 2 \, \Psi_\Lambda^{(reg)} \, \Lambda = 0\,.
\end{equation}

\section{Comparison with Previous Results at the Universal Horizon}\label{sec: comparison}

In this section, we present two among the attempts in the literature at deriving a first law for universal horizons. We compare them and show how our findings help clarify their relation and shortcomings.

In~\cite{Bhattacharyya:2014kta} the authors derive a first law at the universal horizon requiring that the entropy is proportional to the horizon area, as in Bekenstein's result for General Relativity. It was then found, by consistency, that the temperature was fixed by a ``surface gravity''  given by $\kappa_\textsc{h} = (\boldsymbol{\epsilon}^{\mu\nu} \, \nabla_\mu \xi_\nu)/2$, evaluated at the universal horizon. Notice that, since the universal horizon is not a null hypersurface, there is no guarantee that $\nabla_\mu \xi_\nu$ is proportional to $\boldsymbol{\epsilon}_{\mu\nu}$. In spite of its mathematical consistency, the physical interpretation of $\kappa_\textsc{h}$ is quite problematic as no field would radiate from the universal horizon with such a temperature. As the authors of~\cite{Bhattacharyya:2014kta} remark, the so-defined $\kappa_\textsc{h}$ is generally not a surface gravity, but only the redshifted acceleration of a static observer measured at infinity.

A different strategy was pursued in~\cite{Pacilio:2017swi}, where the goal was to construct a thermodynamic first law using $\kappa_\textsc{uh} = (\iota_\xi \aa)/2$, which gives the physical temperature $T_\textsc{uh} = \kappa_\textsc{uh}/\pi$ of the Hawking radiation of infinite-speed particles. Indeed, the trajectory of these particles peels away from the universal horizon with a peeling surface gravity equal to $\kappa_\textsc{uh}$ (see e.g.~\cite{DelPorro:2023lbv}). In this case however, it was found that consistency of the first law did not allow the identification of the black hole entropy with the area of the universal horizon $A_\textsc{uh}$, as evident in the cases where a non-zero cosmological constant $\Lambda$ is turned on.

Satisfactorily, we are now in the condition to clarify the relation between these alternative approaches. 

Considering the discussion in Section~\ref{sec: first law for EAE}, we have so far assumed that the $s$-mode propagates at a finite speed. This requirement ensures that the disformal transformation Eq.~\eqref{eq: disformal gbar to g} remains well-defined. In the limit $c_s \to +\infty$ however, the disformal metric $\overline{g}$ degenerates into a rank-1 tensor and thus becomes non-invertible.

A more consistent way to interpret this limit is to view it as describing test fields with progressively higher propagation speeds, but still finite. As the speed increases, these fields probe wider causal domains, eventually probing the whole space out from the universal horizon. In the extreme limit, the test field becomes sensitive to the causal structure associated with the universal horizon.

Let us pause to stress a crucial subtlety: taking $c_s \to +\infty$ for a \emph{test} field is not the same as sending to infinity the speed of the dynamical modes of the gravitational theory. The speed of the spin-0, spin-1, and spin-2 perturbations is fixed by the coupling constants of the theory (confront Eq.~\eqref{eq: mode speed}), and sending one of those speeds to infinity typically corresponds to a singular region in parameter space. Such limit can alter the structure of the constraints or reduce the number of degrees of freedom, and thus cannot be considered smooth limits. To avoid these pathological behaviors, we only consider external test fields. Said differently, we want to study the causal structure of different spacetime regions within the same theory, and this can be done only using test fields as probes.

As we said, in the limit $c_s \to +\infty$, the relevant causal boundary for the $s$-mode becomes the universal horizon. 
Hawking radiation was computed for superluminal modes with dispersion relations of the form Eq.~\eqref{eq: modified dispersion relation}, and the temperature was found to be exactly $T_\textsc{uh}$~\cite{DelPorro:2023lbv}, which can then be interpret as the physical temperature of the black hole (albeit energy dependent grey-body factors are introduced for low energy modes by the Killing horizon~\cite{DelPorro:2023lbv}). We show in Eq.~\eqref{eq: kappa UH da G} of Appendix~\ref{app: ray tracing}, that in the $c_s \to +\infty$ limit the $s$-mode temperature $T_\textsc{bh}$, as defined in Eq.~\eqref{eq: unphysical to physical 1}, does indeed approach $T_\textsc{uh}$ as expected. Hence, the relevant temperature  governing the thermodynamics of arbitrarily fast modes is the physical temperature determined by the Hawking radiation of the universal horizon.

On the other hand, since we can apply Wald's construction for any high, but finite, speed, the entropy is always given by the Noether charge at the bifurcation surface, which we showed to contain an additional \AE{ther} contribution. Because of this, the entropy cannot simply reduce to the area of the universal horizon or, more precisely, cannot do so in solutions characterized by more than a simple mass scale, like e.g.~those endowed with a cosmological constant. Indeed, in single-scale, spherically symmetric scenarios, dimensional analysis essentially fixes the structure of all quantities at the universal horizon. In these cases, the \AE{ther} flux term can be absorbed into a simple rescaling of the $S/A$ proportionality coefficient, leaving the overall thermodynamics unchanged. In more general cases this is not true and the \AE{ther} flux term has to be treated separately.

For the two examples considered in the previous section, the $c_s \to +\infty$ limit can be exhibited explicitly. In the Schwarzschild-like solution in the $c_{123} = 0$ sector we have
\begin{equation}
    \sqrt{1 - \overline{C}} = \sqrt{\dfrac{1 - C}{c_s^2}} \,, \quad \overset{\eqref{eq: rKH c123}}{\implies} \quad \lim_{c_s \to +\infty} \, \overline{r}_\textsc{h}(c_s) = r_\textsc{uh} \,.
\end{equation}
Similarly
\begin{equation}
    T^{(Schw)}_\textsc{uh} := \lim_{c_s \to +\infty} \, c_s \, \overline{T}^{(Schw)}_\textsc{bh}
    = \lim_{c_s \to +\infty} \, \dfrac{c_s \, \sqrt{1 - \overline{C}}}{(1 + \sqrt{1 - \overline{C}}) \, 2\pi \, \overline{r}_\textsc{h}(c_s)}
    = \dfrac{\sqrt{1 - C}}{2\pi \, r_\textsc{uh}} \,,
\end{equation}
which coincides with the temperature found in~\cite[Eq.~(42)]{Pacilio:2017swi}.

For the BTZ black hole in the $c_{14} = 0$ sector, we can show again that $\overline{r}_\textsc{h}(c_s)$, Eq.~\eqref{eq: rKH BTZ}, approaches $r_\textsc{uh}$ in the limit
\begin{equation}
    \lim_{c_s \to +\infty} \, \sqrt{\dfrac{2 \, r_\textsc{uh}^2 + l \, \sqrt{\frac{2 \, r_\textsc{uh}^2 - \mathcal{J}^2}{c_s^2}}}{2}} = r_\textsc{uh} \,.
\end{equation}
For the temperature given in Eq.~\eqref{eq: k BTZ}, we obtain
\begin{equation}
    T^{(\textsc{BTZ})}_\textsc{uh} := \lim_{c_s \to +\infty} \, c_s \, \overline{T}^{(\textsc{BTZ})}_\textsc{uh}
    = \lim_{c_s \to +\infty} \dfrac{c_s}{2\pi} \, \left( \dfrac{\overline{r}_\textsc{h}(c_s)}{l^2} - \dfrac{\overline{\mathcal{J}^2}}{4 \, \overline{r}^3_\textsc{h}(c_s)} \right)
    = \dfrac{\sqrt{\frac{r_\textsc{uh}^4}{(1 - c_{13}) \, l^2} - \frac{J^2}{4}}}{\pi \, l \, r_\textsc{uh}} \,,
\end{equation}
which agrees with~\cite[Eq.~(61)]{Pacilio:2017swi} in the aligned limit $\lambda \to +\infty$.

So, in summary, our conclusion is that both of the previous approaches where incomplete. Requiring to recover always an area law as in~\cite{Bhattacharyya:2014kta} fixes the temperature to an unphysical value; imposing the physical temperature as in~\cite{Pacilio:2017swi} without taking into account the \AE{ther} entropic contribution yields a gravitational entropy at the universal horizon which does not always agree with the Bekenstein--Hawking formula.

The resolution is that the temperature of the universal horizon is always the physical one $T_\textsc{uh}$, but the \AE{ther} has a non-zero contribution to the entropy balance. Indeed, it can be shown that the \AE{ther} Killing energy flux across the universal horizon is non-zero as long as the \AE{ther} flow is not geodesic, which is necessarily the case if a universal horizon exists in the first place (see the second condition in Eq.~\eqref{eq:UHcond}).

\section{Conclusions}\label{sec:conclus}

In this work we have revisited black hole thermodynamics in vacuum Einstein--\AE{ther} theory, with the intent of studying whether a first law survive --- and in which form --- once the causal structure is no longer universal. 

To address this problem, we employed the covariant phase space formalism \emph{with boundaries} to obtain a robust description of the charges of the theory. Moreover, we adopted the standard disformal strategy that has proven useful in the Einstein--\AE{ther} literature to be able to apply the standard construction à-la Wald. Namely, for a chosen mode with propagation speed $c_s$ (the $s$-mode), we introduced a disformal metric whose Killing horizon coincides with the $s$-mode causal horizon, so that the surface gravity built from the Killing generator matches the peeling surface gravity of that mode by construction~\cite{Cropp:2013zxi}. After we obtained the familiar first law of black hole thermodynamics in the disformal frame, we translated it back to the physical (unbarred) frame. The outcome is a well-defined first law, but with a crucial new ingredient absent in General Relativity: an \AE{ther} \emph{flux} contribution, reflecting the presence of an unavoidable Killing flux of the \AE{ther} across the horizon.

In the same framework, we also formulated an ``extended'' version of the first law which allows variations of the cosmological constant $\Lambda$, when present, and derived the Smarr formula for black hole solutions.

A conceptual by-product of this analysis, as pointed out in Section \ref{sec: comparison}, is a clearer understanding of the status of universal-horizon thermodynamics and of apparent discrepancies in the literature. On the one hand, approaches that enforce an area law for the universal-horizon entropy naturally infer a temperature tied to the metric Killing construction~\cite{Bhattacharyya:2014kta}; on the other hand, ray-tracing/peeling arguments for arbitrarily fast probes determine a different universal-horizon Hawking temperature~\cite{DelPorro:2023lbv}, at the price of requiring an entropy which is not always the horizon area~\cite{Pacilio:2017swi}. Considering the aforementioned $s$-mode as a test field, and taking the $c_s \to +\infty$ limit, we show how the associated surface gravity continuously approaches $\kappa_\textsc{uh}$, as determined in~\cite{Cropp:2013sea}, recovering the physical temperature $T_\textsc{uh} = \kappa_\textsc{uh}/\pi$~\cite{DelPorro:2023lbv}.
The entropy, however, does \emph{not} generically reduce to a pure area term, but splits into the standard (geometric) contribution plus an independent \AE{ther} contribution, sourced by the \AE{ther} Killing flux across the horizon. This resolves the tension between the two prescriptions: fixing the entropy to be purely proportional to area obscures this unavoidable \AE{ther} contribution, while fixing the physical temperature without allowing for this extra contribution forces an apparently ``non-area'' entropy.

Several important caveats and prospects follow. First, a thermodynamic interpretation tied to a single mode is known to be problematic once the full multi-mode content of Lorentz-violating gravity is taken seriously, as distinct sectors generically perceive distinct horizons and temperatures (see~\cite{Dubovsky:2006vk,Eling:2007qd,Jacobson:2010fat}). This points to the need for additional ultraviolet input in order to recover a fully consistent thermodynamics. Indeed, when superluminal modified dispersion relations --- of the type naturally realized in Ho\v{r}ava-like settings Eq.~\eqref{eq: modified dispersion relation} --- are included, the causal structure effectively reorganizes around a unique horizon, the universal horizon, and a unique temperature (up to energy-dependent grey-body factors), $T_\textsc{uh}$~\cite{DelPorro:2023lbv}, is recovered, thereby restoring a consistent thermodynamic picture. In this sense, universal-horizon thermodynamics cannot fully captured by the simple \emph{vacuum} relativistic-dispersion setup in Einstein--\AE{ther} gravity, as only the universal horizon is expected to control the genuinely high-energy thermodynamics.

Second, the infinite-speed limit is intrinsically subtle and not uniformly meaningful across all sectors. In particular, attempting to send the \emph{graviton} speed to infinity renders the disformal construction ill-defined, and regularity requirements typically fail for gravitational modes. In particular, the quantity $\iota_\xi \ss$ must remain regular at the horizon, and this requirement fails generically for gravitons. In the special Schwarzschild-like solution with $c_{123} = 0$, regularity can instead be achieved for the spin-1 mode infinite speed limit, yielding a thermodynamic behavior controlled by $T_\textsc{uh}$.

Finally, we stress that Ho\v{r}ava--Lifshitz gravity cannot be straightforwardly incorporated into our present analysis. Although Einstein--\AE{ther} and Ho\v{r}ava--Lifshitz gravity share the same spherically symmetric black hole solutions (with hypersurface-orthogonal \AE{ther} flow), Ho\v{r}ava--Lifshitz gravity differs in a crucial way: hypersurface-orthogonality is built in, reducing the symmetry from full diffeomorphisms to foliation-preserving diffeomorphisms. This reduction is tied to the presence of an instantaneous (infinite-speed) ``elliptic'' mode~\cite{LukeMartin:2024kof}. Understanding how this mode enters the phase space construction, and how to formulate the covariant phase space consistently in a reduced-symmetry setting, will be the subject of a forthcoming companion work.

%End Mainmatter -----------------------------------------------------

%Begin Backmatter ---------------------------------------------------

\section*{Acknowledgments}

The authors wish to thank David Mattingly, Costantino Pacilio and Francesco Del Porro for illuminating comments and suggestions on a preliminary version of this article.

\appendix

\section{Some Proofs in Covariant Phase Space}

\subsection{Independence of the Slice of Integration}\label{app: proofprop}

Here we give a proof of Proposition~\ref{prop:sliceindep}, which states that
\begin{star-prop}
    The canonical generator defined in Eq.~\eqref{eq: Hamiltonian} 
    \begin{equation*}
        \Q_\Sigma[\vv] :\doteq \int_\Sigma \JJ[\vv] - \int_{\partial\Sigma} \jj[\vv] \,,
    \end{equation*}
    does not depend on the slice of integration $\Sigma$. Moreover the vertical variation of this quantity is related to the symplectic form as in Eq.~\eqref{eq: vOmega = -dH}
    \begin{equation*}
        \mathbb{I}_\vv \OOmega_\Sigma \doteq - \dv \Q_\Sigma[\vv] \,.
    \end{equation*}
\end{star-prop}

\begin{proof}
    Let us consider two slices $\Sigma_a$ and $\Sigma_b$. If we now compute $\Delta \Q[\vv] := \Q_{\Sigma_b}[\vv] - \Q_{\Sigma_a}[\vv]$ we get
    \begin{equation}\label{eq: DeltaH}
        \begin{split}
            \Delta \Q[\vv] = \Q_{\Sigma_b}[\vv] - \Q_{\Sigma_a}[\vv] = &\int_{\Sigma_b}\JJ[\vv] - \int_{\partial\Sigma_b} \jj[\vv] - \int_{\Sigma_a}\JJ[\vv] + \int_{\partial\Sigma_a}\jj[\vv] = \\
            = & \int_\V d \JJ[\vv] - \int_{\Gamma_\V} \JJ[\vv] + \int_{\Gamma_\V} d \jj[\vv] = (\ast) \,,
        \end{split}
    \end{equation}
    where from the first to the second line we added and subtracted the integral of $\JJ[\vv]$ over $\Gamma_\V$, the portion of the boundary $\Gamma$ between $\Sigma_a$ and $\Sigma_b$, to construct (using Stokes' Theorem) a bulk integral over the volume $\V$ enclosed between $\Sigma_a$ and $\Sigma_b$. And again, (using Stoke's Theorem) we converted the integral over the boundaries of the two hypersurfaces of $\jj[\vv]$ to an integral over $\Gamma_\V$. Recall that going from $\partial\Sigma$ to $\partial\Gamma$ we get a minus sign.

    Taking the pull-back of definition Eq.~\eqref{eq: J_1}, using definition Eq.~\eqref{eq: J_2}, Eq.~\eqref{eq: dL_1}, and Eq.~\eqref{eq: sigma2}, we can now compute separately the three integrands
    \begin{gather}
        d \JJ[\vv] = - \EEL \, \mathbb{I}_\vv \dv \phi \,,\\
        i^\ast_\Gamma \JJ[\vv] = \mathbb{I}_\vv i^\ast_\Gamma \TTh - i^\ast_\Gamma \ssigma[\vv] \,, \\
        d \jj[\vv] = i^\ast_\Gamma \TTh - \mathbb{I}_\vv \BB - i^\ast_\Gamma \ssigma[\vv] \,.
    \end{gather}
    Putting everything together, we finally get:
    \begin{equation}
        \begin{split}
            (\ast) & = - \int_\V \EEL \, \mathbb{I}_\vv \dv \phi - \int_{\Gamma_\V} \Bigl( \mathbb{I}_\vv \TTh - \ssigma[\vv] \Bigr) + \int_{\Gamma_\V} \Bigl( \mathbb{I}_\vv \TTh - \mathbb{I}_\vv \BB - \ssigma[\vv] \Bigr) = \\
            & = - \int_\V \EEL \, \mathbb{I}_\vv \dv \phi - \int_{\Gamma_\V}\mathbb{I}_\vv \BB \,.
        \end{split}
    \end{equation}
    Going on-shell, the right hand side of the last equivalence vanishes and this proves that $\Delta \Q[\vv] \doteq 0$~\footnote{With a similar computation it can be shown that $\OOmega$ is also independent on the choice of hypersurface $\Sigma$ chosen in the definition Eq.~\eqref{eq: Omega}.}.

    To prove the second part of Proposition~\ref{prop:sliceindep}, we would like to show that $\dv \Q[\vv] + \mathbb{I}_\vv \OOmega \doteq 0$. To do so, we consider the portion of spacetime $\V$ enclosed between $\Sigma$ and $\Sigma_1$, the bottom lid of the bigger spacetime. As before, we call $\Gamma_\V$ the portion of boundary that is boundary also of the spacetime $\V$. This choice is made because the theory of calculus of variations impose that the variational vector must vanish on the two lids that define the initial and final state and so some quantities will vanish too.
    Going through the computations, we get
    \begin{equation}\label{eq:dvH}
        \begin{split}
            \dv \Q[\vv] + \mathbb{I}_\vv \OOmega = & \dv \left( \int_\Sigma \JJ[\vv] - \int_{\partial\Sigma} \jj[\vv] \right) + \mathbb{I}_\vv \left( \int_\Sigma \dv \TTh - \int_{\partial\Sigma} \dv \tth \right) = \\
            = & \int_\Sigma \Bigl(\mathbb{L}_\vv \TTh - \dv\ssigma[\vv] \Bigr) - \int_{\partial\Sigma} \Bigl( \mathbb{L}_\vv \tth -\dv \sssigma[\vv] \Bigr) = \\
            = &  \int_\V d \Bigl( \mathbb{L}_\vv \TTh -\dv \ssigma[\vv] \Bigr) - \int_{\Gamma_\V} \Bigl( \mathbb{L}_\vv \TTh - \dv \ssigma[\vv] \Bigr) + \int_{\Gamma_\V} d \Bigl( \mathbb{L}_\vv \tth - \dv \sssigma[\vv] \Bigr) = (\ast\ast) \,,  
        \end{split}
    \end{equation}
    where in the first step we used the definitions Eq.~\eqref{eq: Hamiltonian} and Eq.~\eqref{eq: Omega}, in the second one we used the definition of the currents Eq.~\eqref{eq: J_1} and Eq.~\eqref{eq: J_2}, we rearranged terms according to the domains of integration and we used Cartan's Magic Formula on the configuration space to make the configuration space Lie derivative appear. Moreover, in the last equivalence we added and subtracted the integral over the bottom lid (that is zero because everything is linear in $\vv$) and over the boundary $\Gamma_\V$, then we used Stokes' Theorem as we did to prove the first part of this proposition.
    Rearranging the various terms and using the definitions Eq.~\eqref{eq: sigma1} and Eq.~\eqref{eq: sigma2} of $\ssigma[\vv]$ and $\sssigma[\vv]$, it becomes
    \begin{equation}
        \begin{split}
           (\ast\ast) = & \int_\V \mathbb{L}_\vv \Bigl( d \TTh - \dv \LL \Bigr) + \int_\Gamma \mathbb{L}_\vv \Bigl( d \tth - \TTh - \dv \eell \Bigr) = \\
            = & - \int_\V \mathbb{L}_\vv \Bigl( \EEL \, \dv \phi \Bigr) - \int_{\Gamma_\V} \mathbb{L}_\vv \BB \doteq 0 \,,
        \end{split}
    \end{equation}
    which completes the proof of Proposition~\ref{prop:sliceindep}.
\end{proof}

\subsection{Variation of the Canonical Generator}

Here, instead, we compute the variation of the canonical generator, giving a proof of Eq.~\eqref{eq: first law with omega}
\begin{prop}
    The variation of the canonical generator defined in Eq.~\eqref{eq: Hamiltonian} is given by
    \begin{equation*}
        \dv \Q[\vv_\xi] \doteq \dv \int_{\S_\R} \Bigl( \qq[\vv_\xi] - \jj[\vv_\xi] \Bigr) - \dv \int_\H \qq[\vv_\xi] + \int_\H \iota_\xi \TTh \,.
    \end{equation*}
\end{prop}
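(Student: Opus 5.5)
The plan is to start from the definition in Eq.~\eqref{eq: Hamiltonian}, now with $\partial\Sigma = \S_\R \cup \H$, and to apply the vertical derivative term by term. On-shell, Eq.~\eqref{eq: Noecharge} gives $\JJ[\vv_\xi] \doteq d\qq[\vv_\xi]$. By Stokes' theorem, $\int_\Sigma \JJ[\vv_\xi]$ then becomes $\int_{\S_\R}\qq[\vv_\xi] - \int_\H \qq[\vv_\xi]$, where the minus sign reflects the orientation $\H$ inherits as an inner boundary of $\Sigma$. The boundary current $\jj[\vv_\xi]$ of Eq.~\eqref{eq: J_2} is built from the boundary Lagrangian and the corner potential on $\Gamma$, so it contributes only on $\S_\R$. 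The internal boundary $H$ carries no boundary Lagrangian and no corner potential, so we set $\eell=\tth=0$ there. This gives
\begin{equation*}
\Q[\vv_\xi] \doteq \int_{\S_\R}\Bigl(\qq[\vv_\xi]-\jj[\vv_\xi]\Bigr) - \int_\H \qq[\vv_\xi] \,.
\end{equation*}

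Taking $\dv$ of this expression is not enough by itself, because $\H$ is not a genuine boundary on which variations are constrained. The honest derivation therefore keeps the bulk integral $\int_\Sigma \JJ$ and varies it directly. Using Eq.~\eqref{eq: J_1} together with Eq.~\eqref{eq: sigmas diffeo}, we have $\JJ[\vv_\xi] = \mathbb{I}_{\vv_\xi}\TTh - \iota_\xi\LL$. For a field-independent $\xi$ and covariant fields, the variation is
\begin{equation*}
\dv\JJ[\vv_\xi] = \dv\mathbb{I}_{\vv_\xi}\TTh - \iota_\xi \dv\LL = \dv\mathbb{I}_{\vv_\xi}\TTh - \iota_\xi\EEL\,\dv\phi - \iota_\xi d\TTh \,.
\end{equation*}
Cartan's formula in both senses converts $\dv\mathbb{I}_{\vv_\xi}\TTh$ into $-\mathbb{I}_{\vv_\xi}\oomega + \pounds_\xi \TTh$. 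Applying Cartan's formula once more, $\pounds_\xi\TTh - \iota_\xi d\TTh = d(\iota_\xi\TTh)$. Going on-shell then leaves
\begin{equation*}
\dv\JJ[\vv_\xi] \doteq -\mathbb{I}_{\vv_\xi}\oomega + d(\iota_\xi\TTh) \,,
\end{equation*}
which is the standard Iyer--Wald identity. Integrating over $\Sigma$ produces the boundary terms $\int_{\S_\R}\iota_\xi\TTh - \int_\H\iota_\xi\TTh$ alongside the symplectic form term. On the outer boundary, $\iota_\xi\TTh$ combines with $\dv\jj$ through Eq.~\eqref{eq: dL_1}, using $\BB\doteq 0$ and $\iota_\xi\nn=0$ on $\Gamma$. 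There the term reassembles into $\dv\int_{\S_\R}(\qq-\jj)$, exactly as in the proof of Proposition~\ref{prop:sliceindep}.

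On the inner boundary $\H$, the piece $\int_\H \iota_\xi\TTh$ has nothing to cancel against. This is precisely the symplectic flux term in the statement. The remaining inner contribution is $-\dv\int_\H\qq$, which comes from rewriting $\int_\Sigma\JJ$ as $d\qq$ before varying. I will organise the computation so that $\dv\Q[\vv_\xi]$ is expressed as the sum of the variation of the $\qq$-terms, the outer $\jj$ term, and the leftover inner flux. The $\mathbb{I}_{\vv_\xi}\oomega$ piece is kept separate, since it is the right-hand side of Eq.~\eqref{eq: dQ = -ivomega}.

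The main obstacle is the bookkeeping of signs and orientations, for two reasons. First, $\H$ is an inner corner whose induced orientation is opposite to that of $\S_\R$. Second, the flux term appears with a $+$ sign in the statement but with a $-$ sign from Stokes' theorem on $\Sigma$. To resolve this, I will track the orientation convention of Fig.~\ref{fig: spacetime with internal boundary} carefully, so that all terms on $\H$ carry the orientation induced from $H$ as part of the lateral boundary. I also need to check that $\jj$ genuinely vanishes on $\H$; this follows from having no boundary Lagrangian or corner potential on $H$.
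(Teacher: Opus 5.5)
Your computational route is the paper's. You vary $\JJ[\vv_\xi]$ using $\dv\LL=\EEL\,\dv\phi+d\TTh$, Cartan's formula and covariance to get $\dv\JJ[\vv_\xi]\doteq-\mathbb{I}_{\vv_\xi}\dv\TTh+d(\iota_\xi\TTh)$. You vary $\jj[\vv_\xi]$ through Eq.~\eqref{eq: dL_1} and integrate over $\Sigma$ with $\partial\Sigma=\S_\R\cup\H$. You cancel the outer flux against the $\jj$ contribution (the corner piece of $\oomega$ recombines with $\mathbb{I}_{\vv_\xi}\dv\tth$ on $\S_\R$), keep the inner flux, and substitute $\JJ\doteq d\qq$.

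The gap is in the final assembly: you never settle what $\dv\Q[\vv_\xi]$ on the left-hand side means, and that is exactly your ``main obstacle''. In your first paragraph you derive $\Q[\vv_\xi]\doteq\int_{\S_\R}(\qq-\jj)-\int_\H\qq$. The vertical derivative of this expression is perfectly well defined, and it contains no flux term. Your remark that $\H$ ``is not a genuine boundary on which variations are constrained'' does not change that. If this is what $\dv\Q$ denotes, the statement would force $\int_\H\iota_\xi\TTh\doteq0$. That is false in Einstein--\AE{}ther gravity, and it is the whole point of Subsection~\ref{subsec: symp flux AE}.

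What your computation actually produces is
\begin{equation*}
\dv\int_{\S_\R}\Bigl(\qq[\vv_\xi]-\jj[\vv_\xi]\Bigr)-\dv\int_\H\qq[\vv_\xi]\doteq-\mathbb{I}_{\vv_\xi}\OOmega_\Sigma-\int_\H\iota_\xi\TTh \,.
\end{equation*}
The proposition is this identity with $\dv\Q[\vv_\xi]$ read as $-\mathbb{I}_{\vv_\xi}\OOmega_\Sigma$, as in Eq.~\eqref{eq: dQ = -ivomega}; this is how the paper's proof ends. In other words, when the inner boundary carries no boundary Lagrangian, the relation $\mathbb{I}_{\vv_\xi}\OOmega_\Sigma=-\dv\Q$ of Proposition~\ref{prop:sliceindep} fails for the naive $\Q$. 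The defect is exactly the flux through $\H$, and the $+$ sign of $\int_\H\iota_\xi\TTh$ in the statement comes from moving that term across the equality.

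Consequently your proposed cure, re-tracking orientations so that the flux acquires a $+$ sign, cannot work. Both $\int_\H\qq$ and $\int_\H\iota_\xi\TTh$ arise from Stokes' theorem on the same $\Sigma$, so any reorientation of $\H$ flips them together. The statement, however, has them with opposite signs. That relative sign is orientation-independent, and it is fixed only by the identification $\dv\Q[\vv_\xi]=-\mathbb{I}_{\vv_\xi}\OOmega_\Sigma$, which you need to make explicit.
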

\begin{proof}
    If we suppose that everything is covariant in the sense of definition Eq.~\eqref{eq: covariance}, this comes from the explicit calculation of the variation of the canonical generator. Let us start from computing the variation $\dv \JJ[\vv_\xi]$
    \begin{equation}
        \begin{split}
            \dv \JJ[\vv_\xi] & = \dv (\iota_{\vv_\xi} \TTh) - \iota_\xi \dv \LL = \\
            & = \dv (\iota_{\vv_\xi} \TTh) - \iota_\xi \EEL \, \dv \phi - \iota_\xi d \TTh = \\
            & = - \iota_{\vv_\xi} \dv \TTh + \mathbb{L}_{\vv_\xi} \TTh - \iota_\xi \EEL \, \dv \phi - \pounds_\xi \TTh + d (\iota_\xi \TTh) \,,
        \end{split}
    \end{equation}
    where in the first step we used the definition of $\JJ[\vv_\xi]$ --- Eq.~\eqref{eq: J_1} --- in the second one Eq.~\eqref{eq: variation of L_0} and in the third Cartan's Magic Formula and the hypothesis that $\TTh$ is covariant.

    For the variation of $\jj[\vv_\xi]$ we get:
    \begin{equation}
        \begin{split}
            \dv \jj[\vv_\xi] & = \dv (\iota_{\vv_\xi} \tth) - \iota_{\overline{\xi}} \dv \eell = \\
            & = \dv (\iota_{\vv_\xi} \tth) - \iota_{\overline{\xi}} \BB - \iota_{\overline{\xi}} d \tth + \iota_{\overline{\xi}} i_{\partial\M}^\ast \TTh = \\
            & = - \iota_{\vv_\xi} \dv \tth + \mathbb{L}_{\vv_\xi} \tth - \iota_{\overline{\xi}} \BB - \pounds_\xi\tth + d (\iota_{\overline{\xi}} \tth) + \iota_{\overline{\xi}} i_{\partial\M}^\ast \TTh \,,
        \end{split}
    \end{equation}
    where in the first step we used the definition of $\jj[\vv_\xi]$ --- Eq.~\eqref{eq: J_2} --- in the second one we substituted the variation Eq.~\eqref{eq: dL_1} and in the third, as before, Cartan's Magic Formula and the covariance of $\tth$.

    Integrating this on a hypersurface $\Sigma$ we get:
    \begin{equation}
        \begin{split}
            \int_\Sigma \dv \JJ[\vv_\xi] - \int_{\partial\Sigma} \dv \jj[\vv_\xi] = & - \iota_{\vv_\xi} \int_\Sigma \dv \Bigl( \TTh - d \tth \Bigr) - \int_\Sigma \iota_\xi \EEL \, \dv \phi + \\
            & + \int_{\S_\R}\iota_{\overline{\xi}} \TTh - \int_\H \iota_{\overline{\xi}} \TTh + \int_{\S_\R} \iota_{\overline{\xi}} \BB + \\
            & - \int_{\partial\Sigma} d \Bigl(  \iota_\xi \tth \Bigr) - \int_{\S_\R}\iota_{\overline{\xi}} \TTh = \\
            = & - \iota_{\vv_\xi} \int_\Sigma \oomega - \int_\H \iota_{\overline{\xi}} \TTh - \int_\Sigma \iota_\xi \EEL \, \dv \phi + \int_{\S_\R} \iota_{\overline{\xi}} \BB \,,
        \end{split}
    \end{equation}
    where in the second step we split the integral over the boundary of $\Sigma$ into an integral over $\S_\R$ and one on $\H$ (recalling that they have opposite orientations and this explains the change in sign) and used the fact that $\int_{\partial\Sigma} d (\iota_\xi \tth)$ vanishes because the boundary of a boundary is empty. 

    Substituting $\JJ[\vv_\xi] = d \qq[\vv_\xi]$ and the definition of $\jj[\vv_\xi]$ --- Eq.~\eqref{eq: J_2} --- and going on-shell, we finally get:
    \begin{equation}
        \dv \int_{\S_\R} \Bigl( \qq[\vv_\xi] - \iota_{\vv_\xi} \tth + \iota_{\overline{\xi}} \eell \Bigr) - \dv \int_\H \qq[\vv_\xi] + \int_\H \iota_{\overline{\xi}} \TTh \doteq -\iota_{\vv_\xi} \int_\Sigma \oomega \,.
    \end{equation}
    This completes the proof of Eq.~\eqref{eq: first law with omega}.
\end{proof}

\subsection{Variation of the Currents in the Extended Framework}
The extended variation of the Noether current is
\begin{equation}
    \begin{split}
        \dve \JJ[\vv_\xi] = & \dv \JJ[\vv_\xi] + \sum\limits_{i = 1}^k \dfrac{\partial \JJ[\vv_\xi]}{\partial c_i} \dve c_i = \\ 
        = & \dv (\iota_{\vv_\xi} \TTh) - \iota_\xi \dv \LL + \sum\limits_{i = 1}^k \dfrac{\partial \JJ[\vv_\xi]}{\partial c_i} \dve c_i = \\
        = & - \iota_{\vv_\xi} \dv \TTh + \pounds_\xi \TTh - \iota_\xi \EEL \, \dv \phi + d (\iota_\xi \TTh) - \pounds_\xi \TTh + \sum\limits_{i = 1}^k \dfrac{\partial \JJ[\vv_\xi]}{\partial c_i} \dve c_i \,,
    \end{split}
\end{equation}
where, in the second step, we used the definition Eq.~\eqref{eq: J_1} and, in the third, we used Eq.~\eqref{eq: variation of L_0} and Cartan's Magic Formula.

On the other hand, the variation of the boundary current $\jj[\vv_\xi]$ yields
\begin{equation}
    \begin{split}
        \dve \jj[\vv_\xi] = & \dv \jj[\vv_\xi] + \sum\limits_{i = 1}^k \dfrac{\partial \jj[\vv_\xi]}{\partial c_i} \dve c_i = \\
        = & \dv (\iota_{\vv_\xi} \tth) - \iota_{\overline{\xi}} \dv \eell + \sum\limits_{i = 1}^k \dfrac{\partial \jj[\vv_\xi]}{\partial c_i} \dve c_i = \\
        = & - \iota_{\vv_\xi} \dv \tth + \pounds_\xi \tth - \iota_{\overline{\xi}} \BB + \iota_{\overline{\xi}} i_\Gamma^\ast \TTh - \pounds_\xi \tth + d (\iota_{\overline{\xi}} \tth) + \sum\limits_{i = 1}^k \dfrac{\partial \jj[\vv_\xi]}{\partial c_i} \dve c_i \,,
    \end{split}
\end{equation}
where, in the second step, we used the definition of $\jj[\vv]$ in Eq.~\eqref{eq: J_2} and, in the third step, we used the definitions of $\eell$, $\BB$ and $\tth$ that we wrote in Eq.~\eqref{eq: dL_1}, together with Cartan's Magic Formula on the hypersurface $\Sigma$.

Finally, after integrating over a slice $\Sigma$ with boundary $\S_\R \cup \H$, we get
\begin{equation}
    \begin{aligned}
        \dv \int_{\S_\R} \Bigl( \qq[\vv_\xi] - \iota_{\vv_\xi} \tth + \iota_{\xi} \eell \Bigr) & + \Psi^i \, \dve c_i - \int_{\S_\R} \iota_{\xi} \BB + \\
        & - \dv \int_\H \qq[\vv_\xi] + \int_\H \qq[\vv_{\dv\xi}] + \int_\H \iota_\xi \TTh = - \iota_{\vv_\xi} \int_\Sigma \oomega \,.
      \end{aligned}
\end{equation}

\section{Computation of the Noether Charge}\label{app: Noecharge}
In this appendix, we are going to show how to derive the Noether charge associated with diffeomorphisms for Einstein--\AE{ther} theory, with a trick that works for every theory of gravity with a dynamical metric field. As mentioned around Eq.~\eqref{eq: Noecharge}, we would like to express the current vector $J^\mu$ as the divergence of an antisymmetric 2-tensor. To do so, we start from considering the symplectic potential
\begin{equation}
     \TTh_\textsc{tot} = - \dfrac{1}{16\pi \, G_\mathrm{N}} \, \left[ g^{\alpha\beta} \, \nabla^\mu \, \dv g_{\alpha\beta} - g^{\mu\beta} \, \nabla^\alpha \, \dv g_{\alpha\beta} + X^{\mu\alpha\beta} \, \dv g_{\alpha\beta} + 2 \, Y^\mu {}_\alpha \, \dv u^\alpha \right] \, \boldsymbol{\epsilon}_\mu \,.
\end{equation}
Contracting it with the vector $\vv_\xi$ and recalling that fields are covariant (so that $\iota_{\vv_\xi} \dv \phi = \mathbb{L}_{\vv_\xi} \phi = \pounds_\xi \phi$), we obtain
\begin{equation}
    \iota_{\vv_\xi} \TTh_\textsc{tot} = - \dfrac{1}{16\pi \, G_\mathrm{N}} \, \left[ g^{\alpha\beta} \, \nabla^\mu \pounds_\xi g_{\alpha\beta} - g^{\mu\beta} \, \nabla^\alpha \, \pounds_\xi g_{\alpha\beta} + X^{\mu\alpha\beta} \, \pounds_\xi g_{\alpha\beta} + 2 \, Y^\mu {}_\alpha \, \pounds_\xi u^\alpha \right] \, \boldsymbol{\epsilon}_\mu \,.
\end{equation}
The first two pieces are due to the Einstein--Hilbert part of the action and we will focus on them first.
\begin{equation}\label{eq: Komar charge trick}
    \begin{split}
        - \dfrac{1}{16\pi \, G_\mathrm{N}} \, \left[ g^{\alpha\beta} \, \nabla^\mu \, \pounds_\xi g_{\alpha\beta} - g^{\mu\beta} \, \nabla^\alpha \,  \pounds_\xi g_{\alpha\beta} \right] & = -\dfrac{1}{16\pi \, G_\mathrm{N}} \, \left[ 2g^{\alpha\beta} \, \nabla^\mu \, \nabla_\alpha \xi_\beta - g^{\mu\beta} \, \nabla^\alpha \, \Bigl( \nabla_\alpha \xi_\beta + \nabla_\beta \xi_\alpha \Bigr) \right] = \\
        & = - \dfrac{1}{16\pi \, G_\mathrm{N}} \, \left[ 2 \, \nabla^\mu \nabla_\alpha \xi^\alpha - \nabla_\alpha \nabla^\alpha \xi^\mu - \nabla_\alpha \nabla^\mu \xi^\alpha \right] \,,
    \end{split}
\end{equation}
where in the second step we used the symmetry of the metric to sum the first two terms and rewrote the other two changing names to dummy indices.

Now here's the trick: we have to use the commutation properties of the covariant derivative applied to a vector to substitute the first term and introduce the Ricci tensor, since next we would like to go on-shell and use Einstein Equations
\begin{equation}
    \nabla^\mu \nabla_\alpha \xi^\alpha - \nabla_\alpha \nabla^\mu \xi^\alpha = - R^\mu_\alpha \, \xi^\alpha \Rightarrow \nabla^\mu \nabla_\alpha \xi^\alpha = \nabla_\alpha \nabla^\mu \xi^\alpha - R^\mu_\alpha \, \xi^\alpha \,.
\end{equation}
This trick is useful in any gravitational theory to get the first part of the charge. However, we have to be brave and do it on the \underline{first} term in the last equality of equation Eq.~\eqref{eq: Komar charge trick}, even though it seems more natural to do it on the second one. 

Hence we get:
\begin{equation}
    \begin{split}
        - \dfrac{1}{16\pi \, G_\mathrm{N}} \, \left[ g^{\alpha\beta} \, \nabla^\mu \, \pounds_\xi g_{\alpha\beta} - g^{\mu\beta} \, \nabla^\alpha \, \pounds_\xi g_{\alpha\beta} \right] & = - \dfrac{1}{16\pi \, G_\mathrm{N}} \, \left[ 2 \, \nabla^\mu \nabla_\alpha \xi^\alpha - \nabla_\alpha \nabla^\alpha \xi^\mu - \nabla_\alpha \nabla^\mu \xi^\alpha \right] = \\
        & = - \dfrac{1}{16\pi \, G_\mathrm{N}} \, \left[ 2 \, \nabla_\alpha \nabla^\mu \xi^\alpha - 2 \, R^\mu_\alpha \, \xi^\alpha - \nabla_\alpha \nabla^\alpha \xi^\mu - \nabla_\alpha \nabla^\mu \xi^\alpha \right] = \\
        & = - \dfrac{1}{16\pi \, G_\mathrm{N}} \, \left[ \nabla_\alpha \, \Bigl( \nabla^\mu \xi^\alpha - \nabla^\alpha \xi^\mu \Bigr) - 2 \, R^\mu_\alpha \xi^\alpha \right] \,.
    \end{split}
\end{equation}
The two terms in round brackets are the \emph{Komar charge density} we also get for pure General Relativity. Indeed, going on-shell for this theory we get that the Einstein Equations are $R_{\mu\nu}(g) = 0$ and so the result is an exact term as expected. In presence of matter, like in our analysis, there is an energy momentum tensor and so we have to be more careful. The Einstein Equations in this case are
\begin{equation}
    R_{\mu\nu} - \dfrac{1}{2} \, g_{\mu\nu} \, R + g_{\mu\nu} \, \Lambda = T^\textnormal{\AE}_{\mu\nu} \Rightarrow - 2 \, R^\mu_\alpha \, \xi^\alpha = - 2 \, T^{\textnormal{\AE}\mu} {}_\alpha \, \xi^\alpha - R \, \xi^\mu + 2 \, \Lambda \, \xi^\mu \,.
\end{equation}
Recalling that
\begin{equation}
     T^{\textnormal{\AE}}_{\mu\nu} = Y_\mu {}^\rho \, \nabla_\nu \uu_\rho - Y^\rho {}_\nu \, \nabla_\rho \uu_\mu - \uu_\mu \, \nabla_\rho Y^\rho {}_\nu + \uu_\mu \, \underleftarrow{\textnormal{\AE}}_\nu + \nabla_\rho X^\rho {}_{\mu\nu} + \dfrac{1}{2} \, g_{\mu\nu} \, \L_\uu \,.
\end{equation}
Again, more than a trick this one is a suggestion to make it easier to go on-shell. It is always possible to express part of the energy-momentum tensor in terms of the equations of motion of the matter it is associated with, so it is worth spending some time to make those equations pop up.

\vspace{0.3cm}
\noindent
Using the \AE{ther} equations of motion Eq.~\eqref{eq: eom Einstein-Aether} and expanding the covariant derivative of $X$, we get that
\begin{equation}
    \begin{split}
        - 2 \, T^{\textnormal{\AE}\mu} {}_\alpha \, \xi^\alpha \doteq & - \L_\uu \, \xi^\mu -Y^{\alpha\mu} \, \xi^\beta \, \nabla_\alpha \uu_\beta + Y^{\mu\alpha} \, \xi^\beta \, \nabla_\alpha \uu_\beta + Y^{\alpha\beta} \, \xi_\beta \, \nabla_\alpha u^\mu - u^\alpha \, \xi^\beta \, \nabla_\alpha Y_\beta {}^\mu - u^\alpha \, \xi^\beta \, \nabla_\alpha Y^\mu {}_\beta + \\
        & - 2 \, Y^{\mu\alpha} \, \xi^\beta \, \nabla_\beta \uu_\alpha - Y^{\alpha\mu} \, \xi_\alpha \, \nabla_\beta u^\beta - Y^{\mu\alpha} \, \xi_\alpha \, \nabla_\beta u^\beta + Y^{\alpha\beta} \, \xi_\alpha \, \nabla_\beta u^\mu + u^\mu \, \xi^\alpha \, \nabla_\beta Y_\alpha {}^\beta + u^\mu \, \xi^\alpha \, \nabla_\beta Y^\beta {}_\alpha + \\
        & - u^\alpha \, \xi_\alpha \, \nabla_\beta Y^{\beta\mu} + u^\alpha \, \xi_\alpha \, \nabla_\beta Y^{\mu\beta} \,,
    \end{split}
\end{equation}
\begin{equation}
    \begin{split}
        \iota_{\vv_\xi} \, \TTh_\textsc{tot} - \dfrac{1}{16\pi \, G_\mathrm{N}} \, \Bigl[ R - 2 \, \Lambda \Bigr] \, \xi^\mu \, \boldsymbol{\epsilon}_\mu \doteq & - \dfrac{1}{16\pi \, G_\mathrm{N}} \, \Bigl[ 2 \, \nabla_\alpha \nabla^{[\mu} \xi^{\alpha]} - 2 \, T^{\textnormal{\AE}\mu} {}_\alpha \, \xi^\alpha + \\
        & + 2 \, X^{\mu\alpha\beta} \, \nabla_\alpha \xi_\beta + 2 \, Y^\mu {}_\alpha \Bigl( \xi^\beta \, \nabla_\beta u^\alpha -u^\beta \, \nabla_\beta \xi^\alpha \Bigr) \Bigr] \, \boldsymbol{\epsilon}_\mu \,,
    \end{split}
\end{equation}
where we also expanded the Lie derivative of the \AE{ther} vector and used the symmetry of $X$ in the last two indices.

The right hand side turns out to be
\begin{equation}
    \begin{split}
        - \dfrac{1}{16\pi \, G_\mathrm{N}} \, \left[ -\L_\uu \, \xi^\mu + 2 \, \nabla_\alpha \nabla^{[\mu} \xi^{\alpha]} \right. & - \textcolor{red}{\underline{\textcolor{black}{Y^{\alpha\mu} \, \xi^\beta \, \nabla_\alpha \uu_\beta}}}  + \textcolor{red}{\underline{\textcolor{black}{ Y^{\mu\alpha} \, \xi^\beta \, \nabla_\alpha \uu_\beta}}} + \textcolor{blue}{\underline{\textcolor{black}{Y^{\alpha\beta} \, \xi_\beta \, \nabla_\alpha u^\mu}}} - \textcolor{Green}{\underline{\textcolor{black}{u^\alpha \, \xi_\beta \, \nabla_\alpha Y^{\beta\mu}}}} + \\
        & - \textcolor{blue}{\underline{\textcolor{black}{u^\alpha \, \xi_\beta \, \nabla_\alpha Y^{\mu\beta}}}} + \textcolor{blue}{\underline{\textcolor{black}{u^\mu Y^{\alpha\beta} \, \nabla_\alpha \xi_\beta}}} - \textcolor{Green}{\underline{\textcolor{black}{u^\alpha \, Y^{\beta\mu} \, \nabla_\alpha \xi_\beta}}} - \textcolor{blue}{\underline{\textcolor{black}{u^\alpha \, Y^{\mu\beta} \, \nabla_\alpha \xi_\beta}}} +\\
        & - \textcolor{Green}{\underline{\textcolor{black}{Y^{\beta\mu} \, \xi_\beta \, \nabla_\alpha u^\alpha}}} - \textcolor{blue}{\underline{\textcolor{black}{Y^{\mu\beta} \, \xi_\beta \, \nabla_\alpha u^\alpha}}} + \textcolor{Green}{\underline{\textcolor{black}{Y^{\beta\alpha} \, \xi_\beta \, \nabla_\alpha u^\mu}}} + \textcolor{Green}{\underline{\textcolor{black}{u^\mu \, \xi_\beta \, \nabla_\alpha Y^{\beta\alpha}}}} + \\
        & + \textcolor{blue}{\underline{\textcolor{black}{u^\mu \, \xi_\beta \, \nabla_\alpha Y^{\alpha\beta}}}} - \textcolor{red}{\underline{\textcolor{black}{u^\beta \, \xi_\beta \, \nabla_\alpha Y^{\alpha\mu}}}} + \textcolor{red}{\underline{\textcolor{black}{u^\beta \, \xi_\beta \, \nabla_\alpha Y^{\mu\alpha}}}} + \textcolor{Green}{\underline{\textcolor{black}{u^\mu \, Y^{\beta\alpha} \, \nabla_\alpha \xi_\beta}}} + \\
        & - \left. \textcolor{red}{\underline{\textcolor{black}{u^\beta \, Y^{\alpha\mu} \, \nabla_\alpha \xi_\beta}}} + \textcolor{red}{\underline{\textcolor{black}{u^\beta \, Y^{\mu\alpha} \, \nabla_\alpha \xi_\beta}}} \right] \,,
    \end{split}
\end{equation}
\hspace{0.5cm} $\Rightarrow - \dfrac{1}{16\pi \, G_\mathrm{N}} \, 2 \, \nabla_\alpha \, \left[ \nabla^{[\mu} \xi^{\alpha]} + \textcolor{blue}{\underline{\textcolor{black}{u^{[\mu} \, Y^{\alpha]\beta} \, \xi_\beta}}} + \textcolor{Green}{\underline{\textcolor{black}{u^{[\mu} \, Y^{\beta|\alpha]} \, \xi_\beta}}} + \textcolor{red}{\underline{\textcolor{black}{Y^{[\mu\alpha]} \, u^\beta \, \xi_\beta}}} \right] \, \boldsymbol{\epsilon}_\mu$.

\vspace{0.2cm}
Finally we get the charge density, because
\begin{equation}
    \begin{split}
        \JJ_\textnormal{\AE}[\vv_\xi] = \iota_{\vv_\xi} \TTh_\textsc{tot} & - \dfrac{1}{16\pi \, G_\mathrm{N}} \, \left[ R - 2 \, \Lambda + \L_\uu \, \right] \, \xi^\mu \, \boldsymbol{\epsilon}_\mu \doteq\\
        & \doteq \nabla_\alpha \, \left( - \dfrac{1}{8\pi \, G_\mathrm{N}} \, \left[ \nabla^{[\mu} \xi^{\alpha]} + u^{[\mu} \, Y^{\alpha]\beta} \, \xi_\beta + u^{[\mu} \, Y^{\beta|\alpha]} \, \xi_\beta + Y^{[\mu\alpha]} \, u^\beta \, \xi_\beta \right] \, \boldsymbol{\epsilon}_\mu \right)
    \end{split}
\end{equation}
\begin{equation*}
    \implies \qq_\textsc{tot}[\vv_\xi] = - \dfrac{1}{8\pi \, G_\mathrm{N}} \, \left[ \nabla^{[\mu} \xi^{\alpha]} + u^{[\mu} \, Y^{\alpha]\beta} \, \xi_\beta + u^{[\mu} \, Y^{\beta|\alpha]} \, \xi_\beta + Y^{[\mu\alpha]} \, u^\beta \, \xi_\beta \right] \, \nn_\alpha \, \boldsymbol{\epsilon}_\mu \,.
\end{equation*}

\section{Disformal Metric Computations}

\subsection{Choice of the Disformal Factor}\label{app: scelta di B}

Given a radial geodesic trajectory for a mode propagating at speed $c_s$ (the $s$-mode, in the main text), we can decompose its momentum $\mathbf{V}$ with respect to the frame $(\uu, \ss)$ as
\begin{equation}\label{eq: V decomposition}
     \mathbf{V}_\mu = - \omega \, \uu_\mu + k \, \ss_\mu \,,
\end{equation}
where $\omega := g^{-1}(\uu, \mathbf{V})$ is the frequency of the mode in the \AE{ther} frame and $k := g^{-1}(\ss, \mathbf{V})$ is the radial wave vector.
In the \AE{ther} frame, $\omega$ and $k$ are related by the \emph{dispersion relation}~\footnote{Notice that the dispersion relation is relativistic, since the theory is second order in the derivatives.}
\begin{equation}\label{eq: dispersion relation}
    \omega^2 = c_s^2 \, k^2 \quad \implies \quad \omega = \pm \, c_s \, k \,,
\end{equation}
where the $+$ is for outgoing trajectories and the $-$ for ingoing ones.

The momentum $\mathbf{V}$ is generically not null in the original metric
\begin{equation}
    g^{-1}(\mathbf{V}, \mathbf{V}) = g^{-1}(- \omega \, \uu + k \, \ss, - \omega \, \uu + k \, \ss) = - \omega^2 + k^2 = (1-c_s^2) \, k^2 \,.
\end{equation}
With a suitable disformal transformation, we can make $\mathbf{V}$ to be null. Using that the inverse disformal metric is given by
\begin{equation}
    \overline{g}^{\mu\nu} = g^{\mu\nu} - \dfrac{B}{1 - B} \, u^\mu \, u^\nu \,,
\end{equation}
we can impose $\overline{g}^{-1}(\mathbf{V}, \mathbf{V}) = 0$ to get
\begin{equation}
    g^{\mu\nu} \, \mathbf{V}_\mu \, \mathbf{V}_\nu - \dfrac{B}{1 - B} \, (u^\mu \, \mathbf{V}_\mu)^2 = - \omega^2 + k^2 - \dfrac{B}{1 - B} \, \omega^2 = \frac{1 - c_s^2 - B}{1 - B} k^2 = 0 \,.
\end{equation}
which implies $B = 1 - c_s^2$.

\subsection{Position of the Causal Horizon for the $s$-mode}\label{app: position of the causal horizon}
\begin{star-prop}
    For solutions of Einstein--\AE{ther} theory which correspond to a stationary, spherically-symmetric spacetime, the position $r_\textsc{h}(c_s)$ of the causal horizon for a mode moving at speed $c_s$ along a radial geodesic is implicitly given by the condition
    \begin{equation}\label{eq: posizione causal horizon s-mode}
        (\iota_\xi \ss) \eval_{r_\textsc{h}(c_s)} - c_s \, (\iota_\xi \uu) \eval_{r_\textsc{h}(c_s)} = 0 \,.
    \end{equation}
\end{star-prop}
\begin{proof}
    The idea to find the causal horizon is to follow an outgoing trajectory of the $s$-mode and look for the position where the modulus of the wave vector diverges.
    
    Thanks to the Killing vector field $\xi$, we can define the \emph{Killing frequency}
    \begin{equation}
        \Omega := - \iota_\xi \mathbf{V} \,,   
    \end{equation}
    which is conserved along the curve, since it is a geodesic. Substituting the decomposition Eq.~\eqref{eq: V decomposition} and using the dispersion relation Eq.~\eqref{eq: dispersion relation} associated with outgoing trajectories, we can write
    \begin{equation}
        \Omega = c_s \, k \, (\iota_\xi \uu) - k \, (\iota_\xi \ss) \,.
    \end{equation}
    Solving for $k$ and imposing that it has a pole in $r_\textsc{h}(c_s)$, we get the thesis.
    \begin{equation}
        k = \dfrac{\Omega}{c_s \, (\iota_\xi \uu) - (\iota_\xi \ss)} \quad \implies \quad (\iota_\xi \ss) \eval_{r_\textsc{h}(c_s)} - c_s \, (\iota_\xi \uu) \eval_{r_\textsc{h}(c_s)} = 0 \,.
    \end{equation}
\end{proof}

\subsection{Ray Tracing}\label{app: ray tracing}
In this appendix we are going to fully trace rays of speed $c_s$. We will denote by $W^\mu$ the vector tangent to the incoming radial trajectories of the $s$-mode and decompose it as $W^\mu = u^\mu + c_s \, s^\mu$.
Assuming stationarity and spherical symmetry, we can write the line element and the tangent vector $W$ in ingoing Eddington--Finkelstein coordinates
\begin{equation}
    ds^2 = - e(r) \, dv^2 + 2 \, dv \, dr + r^2 \, d\Omega^2_2 \,,
\end{equation}
\begin{equation}
    W = W^v \, \partial_v + W^r \, \partial_r \,.
\end{equation}
In these coordinate, the Killing vector field is $\xi = \partial_t = \partial_v$, and a simple computation leads to
\begin{equation}\label{eq: Killing components}
        (\iota_\xi \uu) = - e(r) \, u^v + u^r \,, \quad \mathrm{and} \quad (\iota_\xi \ss) = - e(r) \, s^v + s^r \,.
\end{equation}
Our goal is to compute the peeling behavior of rays close to the horizon. To do so, we first compute $dr/dv$ and expand it around the $s$-mode horizon radius $r_\textsc{h}(c_s)$
\begin{equation}
        \dfrac{dr}{dv}(r) = \dfrac{dr}{dv} \eval_{r_\textsc{h}(c_s)} + \left( \dfrac{d}{dr} \, \dfrac{dr}{dv} \right) \eval_{r_\textsc{h}(c_s)} \, (r - r_\textsc{h}(c_s)) + o((r - r_\textsc{h}(c_s))^2 \,.
\end{equation}
By definition, the \textit{peeling surface gravity} will be the coefficient of the linear term:
\begin{equation}\label{eq: kappa peeeeling}
    \kappa_{peel} := \left( \dfrac{d}{dr} \, \dfrac{dr}{dv} \right) \eval_{r_\textsc{h}(c_s)} \,.
\end{equation}

Given the tangent vector $W^\mu$, we can write
\begin{equation}
    \dfrac{dr}{dv} = \dfrac{W^r}{W^v} = \dfrac{u^r + c_s \, s^r}{u^v + c_s \, s^v} \,.
\end{equation}
so we set out to compute the components of $u$ and $s$. Since these are normalized, we can express their radial component in terms of their (advanced) temporal one. Namely
\begin{subequations}
    \begin{alignat}{2}
        -1 &= g(u,u) = g_{vv} \, u^v \, u^v + 2 \, g_{vr} \, u^v \, u^r = - e(r) \, (u^v)^2 + 2 \, u^v \, u^r \, &&\Rightarrow \, u^r = - \dfrac{1}{2 \, u^v} + \dfrac{e(r)}{2} \, u^v \,,\\
        1 &= g(s,s) = g_{vv} \, s^v \, s^v + 2 \, g_{vr} \, s^v \, s^r = - e(r) \, (s^v)^2 + 2 \, s^v \, s^r \, &&\Rightarrow \, s^r = \dfrac{1}{2 \, s^v} + \dfrac{e(r)}{2} \, s^v \,.
    \end{alignat}
\end{subequations}
Combining these with Eq.~\eqref{eq: Killing components}, we can solve for the advanced time components
\begin{subequations}
    \begin{alignat}{2}
        &(\iota_\xi\uu) = - e(r) \, u^v + u^r = - e(r) \, u^v - \dfrac{1}{2 \, u^v} + \dfrac{e(r)}{2} \, u^v \, &&\Rightarrow \, u^v = \dfrac{- (\iota_\xi \uu) \pm (\iota_\xi\ss)}{e(r)} \,,\\
        &(\iota_\xi\ss) = - e(r) \, s^v + s^r = - e(r) \, s^v + \dfrac{1}{2 \, s^v} + \dfrac{e(r)}{2} \, s^v \, &&\Rightarrow \, s^v = \dfrac{-(\iota_\xi\ss) \pm (\iota_\xi \uu)}{e(r)} \,,
    \end{alignat}
\end{subequations}
where we also used the relation $e(r) = (\iota_\xi \uu)^2 - (\iota_\xi \ss)^2$ to simplify the expressions. To choose the sign in the expressions for $u^v$ and $s^v$, we use that $g(u,s) = 0$, which fixes them to be the same. We can then fix the overall sign by requiring that, as $c_s \to 1$, we get back the usual definition of surface gravity for the Killing horizon $e'(r)/2$. So we obtain
\begin{equation}
    \dfrac{dr}{dv} = \dfrac{W^r}{W^v} = \dfrac{\pm \left[ (\iota_\xi \ss) + c_s \, (\iota_\xi \uu) \right] \, e(r)}{(\pm 1 - c_s) \, \left[ (\iota_\xi \uu) + (\iota_\xi \ss) \right]} =: G_{s}(r) \,.
\end{equation}
As $c_s \to 1$, the expression diverges for the $+$ sign and approaches $e(r)/2$ for the $-$ sign, so we choose the latter, leading to the final form of $G_s(r)$
\begin{equation}\label{eq: G peeling}
    G_{s}(r) = \dfrac{\left[ (\iota_\xi \ss) + c_s \, (\iota_\xi \uu) \right] \, e(r)}{(1 + c_s) \, \left[ (\iota_\xi \uu) + (\iota_\xi \ss) \right]} \,.
\end{equation}
At the causal horizon $G_s(r_\textsc{h}(c_s)) = 0$, compatibly with Eq.~\eqref{eq: posizione causal horizon s-mode}, while its first derivative there will give us the peeling surface gravity Eq.~\eqref{eq: kappa peeeeling}
\begin{equation}
    \kappa_{peel} = G_s'(r_\textsc{h}(c_s)) \,. 
\end{equation}

In the limit $c_s \to +\infty$, assuming all the other quantities are well-behaved,~\footnote{This is not the case if the $s$-mode is taken to be one of the gravitational propagating modes, as we mentioned in Section~\ref{sec: comparison}.} we can write
\begin{equation}
    G_\infty(r) = \frac{(\iota_\xi \uu) \, e(r)}{(\iota_\xi \uu) +(\iota_\xi \ss)} \,.
\end{equation}
This vanishes for $(\iota_\xi \uu) = 0$, which defines the position of the universal horizon (see Eq.~\eqref{eq:UHcond}) and gives the peeling surface gravity
\begin{equation}\label{eq: kappa UH da G}
    G'_\infty(r_\textsc{h}(+\infty)) =  - (\iota_\xi \ss) \, (\iota_\xi \uu)' \,,
\end{equation}
which can be seen to coincide with the surface gravity $\kappa_\textsc{uh}$ defined in Eq.~\eqref{eq: kappa UH} and found in~\cite{DelPorro:2023lbv}.

\paragraph{In the disformal frame}
If we apply the disformal transformation Eq.~\eqref{eq: disformal gbar to g}, the tangent vector $W^\mu$ becomes null
\begin{equation}
    \overline{g}(W, W) = g(W, W) + (1 - c_s^2) (\iota_W \uu)^2 = - 1 + c_s^2 + (1 - c_s^2) = 0 \,,
\end{equation}
which is equivalent to set $\overline{c}_s = 1$ by construction. In this frame, we thus have the simple relation $G_s(r) = \overline{e}(r)$. The causal horizon $r_\textsc{h}(c_s)$ will coincide with the Killing horizon of the metric and the peeling surface gravity $\kappa_{peel}$ will be given by one of the (equivalent) definitions of Killing surface gravity.

\section{Background Subtraction Procedure}\label{app: background subtraction}
In this appendix we review the regularization procedure for the asymptotic quantities called \emph{background subtraction}. As already emphasized, this regularization is needed to work with quantities that are well defined in the limit where the boundary is pushed to infinite distance.

As customary in General Relativity, we model a family of \emph{stationary observers} with a congruence of curves that are everywhere tangent to the vector field
\begin{equation}
    U := \dfrac{1}{||\partial_t||} \, \partial_t \,,
\end{equation}
which is normalized so that $||U||^2 = -1$. In an adapted system of coordinates $(\tau, \eta^1, \eta^2, \eta^3)$, where $\tau$ is the proper time of the observer, the time component of the metric gets transformed (in natural units)
\begin{equation}\label{eq: redshift factor}
    -e(r) \, dt^2 = - d\tau^2 \quad \implies \quad e(r) = \left( \dfrac{d\tau}{dt} \right)^2 =: \nu_r^2 \quad \implies \quad \partial_t = \nu_r \, \partial_\tau \,,
\end{equation}
where $\nu_r$ is the \emph{redshift factor} at radius $r$.

The energy entering the first law is the one associated with the vector $\partial_t$
\begin{equation}
    E_\R := \int_{\S_\R} \qq_\textsc{tot}[\vv_t] - \jj[\vv_t] \,,
\end{equation}
but the one \emph{measured} by the stationary observer is the one associated with $\partial_\tau$
\begin{equation}
    \E_\R := \int_{\S_\R} \qq_\textsc{tot}[\vv_\tau] - \jj[\vv_\tau] \,.
\end{equation}
From the relation Eq.~\eqref{eq: redshift factor} and using linearity, we infer the relation
\begin{equation}\label{eq: relation}
    E_\R = \nu_\R \, \E_\R \,
\end{equation}
between these two quantities.

We are now ready to do the actual background subtraction. The divergence in asymptotic quantities usually comes from some background contribution, hence, as the name suggests, the key idea of this procedure is to subtract the energy measured by the stationary observers in a suitable background to hopefully obtain a finite contribution.
The regularized measured energy is then
\begin{equation}
    \E_\R^{(reg)} := \E_\R - \E_\R^{(bg)} \,,
\end{equation}
where $\E_\R^{(bg)}$ is the energy computed in the background solution.
Keeping in mind the previous relation~\eqref{eq: relation} between the proper energy and the caonical one, we can deduce that
\begin{equation}
    E_\R^{(reg)} = \nu_\R \, \E_\R^{(reg)} = \nu_\R \, \left( \E_\R - \E_\R^{(bg)} \right) = \nu_\R \, \left( \dfrac{1}{\nu_\R} \, E_\R - \dfrac{1}{\nu^{(bg)}_\R} \, E_\R^{(bg)} \right) = E_\R - \dfrac{\nu_\R}{\nu_\R^{(bg)}} \, E_\R^{(bg)} \,.
\end{equation}
Finally, we define the regularized mass entering the first law as
\begin{equation}
    M = \lim_{\R \to +\infty} \left( E_\R - \dfrac{\nu_\R}{\nu_\R^{(bg)}} \, E_\R^{(bg)}\right) \,.
\end{equation}

\addcontentsline{toc}{section}{Bibliography}
%========================================================
\bibliographystyle{alphaurl}
%========================================================
\bibliography{CompleteBibliography}

\nocite{*}
\end{document}